\theoremstyle{plain}
\newtheorem{thm}{Theorem}[section]
\newtheorem{lem}[thm]{Lemma}
\newtheorem{prop}[thm]{Proposition}
\newtheorem{cor}[thm]{Corollary}
\theoremstyle{definition}
\newtheorem{defn}{Definition}[section]
\theoremstyle{remark}
\def\abovestrut#1{\rule[0in]{0in}{#1}\ignorespaces}
\def\belowstrut#1{\rule[-#1]{0in}{#1}\ignorespaces}
\def\abovespace{\abovestrut{0.20in}}
\def\belowspace{\belowstrut{0.10in}}
\newcommand{\argmax}{\mathop{\arg\max}}
\newcommand{\arginf}{\mathop{\arg\inf}}
\title{Differentially Private Data Releasing for Smooth Queries with Synthetic Database Output}
\author{Chi Jin, Ziteng Wang, Junliang Huang, Yiqiao Zhong, and Liwei Wang
\footnote{Chi Jin is with Dept. of EECS, University of California, Berkeley. Email: chijin@cs.berkeley.edu. Ziteng Wang, Junliang Huang, Yiqiao Zhong and Liwei Wang are with Key Laboratory of Machine Perception, MOE, School of EECS, Peking University. Email: wangzt2012@gmail.com, huangjunliang@pku.edu.cn, yiqiaozhong@pku.edu.cn, wanglw@cis.pku.edu.cn}}
\begin{document}
\maketitle

\begin{abstract}
We consider accurately answering smooth queries while preserving differential privacy. A query is said to be $K$-smooth if it is specified by a function defined on $[-1,1]^d$ whose partial derivatives up to order $K$ are all bounded. We develop an $\epsilon$-differentially private mechanism for the class of $K$-smooth queries. The major advantage of the algorithm is that it outputs a synthetic database. In real applications, a synthetic database output is appealing. Our mechanism achieves an accuracy of $O (n^{-\frac{K}{2d+K}}/\epsilon )$, and runs in polynomial time. We also generalize the mechanism to preserve $(\epsilon, \delta)$-differential privacy with slightly improved accuracy. Extensive experiments on benchmark datasets demonstrate that the mechanisms have good accuracy and are efficient.
\end{abstract}


\paragraph{Keywords: }Differential privacy, smooth queries, synthetic database.

\section{Introduction}\label{Section:Introduction}

Machine learning is often conducted on datasets containing sensitive information, such as medical records, commercial data, etc. The benefit of learning from such data is tremendous. But when releasing sensitive data, one must take privacy into consideration, and has to tradeoff between the accuracy and the amount of privacy loss of the individuals in the database.

In this paper we study \emph{differential privacy} \cite{DMNS06}, which has become a standard concept of privacy. Differential privacy guarantees that almost nothing new can be learned from the database that contains one specific individual's information compared with that from the database without that individual's information. More concretely, a mechanism which releases information about the database is said to preserve differential privacy, if the change of a single database element does not affect the probability distribution of the output significantly. Therefore differential privacy provides strong guarantees
against attacks; the risk of any individual to submit her information to the database is very small. Recently there have been extensive studies of machine learning \cite{chaudhuri2011differentially, prateek2011differentially, williams2010probabilistic, chaudhuri2011sample, chaudhuri2012near, choromanski2013}, statistical estimation \cite{wasserman2010statistical, lei2011differentially, duchi2012privacy}, and data mining \cite{kifer2011no, lee2012differential, kifer2010towards} under the differential privacy framework. 

One of the most well studied problems in differential privacy is query answering. That is, how to answer a set of queries differentially privately, accurately and efficiently. A simple and efficient method is the Laplace mechanism \cite{DMNS06}. Laplace mechanism adds Laplace noise to the true answers of the queries, with the amount of noise proportional to the sensitivity of the query function. Thus Laplace mechanism has good performances on queries of low sensitivity. A typical class of queries that has low sensitivity is linear queries, whose sensitivity is $O(1/n)$, where $n$ is the size of the database.

Although simple and efficient, Laplace mechanism has a limitation. It can answer at most $O(n^2)$ queries with nontrivial privacy and accuracy guarantees. In real applications, there can be many users and each user may submit a set of queries. Thus, limiting the number of total queries to be no more than $n^2$ is too restricted.

A remarkable result due to Blum, Ligett and Roth \cite{BLR08} shows that information theoretically it is possible for a mechanism to answer far more than $n^2$ linear queries while preserving differential privacy and nontrivial accuracy simultaneously. Specifically, their mechanism (will be referred to as BLR) can answer exponentially many linear queries and achieve good accuracy. There is a series of works \cite{dwork2009complexity, dwork2010boosting, RR10, HR10, hardt2012simple} improving the result of \cite{BLR08}. All these mechanisms are very powerful in the sense that they can answer general and adversely chosen queries.

Among the mechanisms mentioned above, BLR is different to all the others in the output of the algorithm. The output of BLR is a \emph{synthetic database}, while the output of the other mechanisms are answers to the queries. From a practical point of view, the synthetic database output is very appealing. In fact, before the notion of differential privacy was proposed, almost all practical techniques developed to preserve privacy against certain types of attacks output a synthetic database by modifying the raw dataset (please see the survey \cite{aggarwal2008general} and the references therein).

However, outputting synthetic database while preserving differential privacy is much more difficult than outputting answers to the queries in terms of computational complexity. Comparing the running time of BLR with the running time of the Private Multiplicative Weight updating (PMW) mechanism \cite{HR10} which is one of the best mechanisms outputting answers, BLR runs in time super-polynomial in both the size of the data universe and the number of queries, while the running time of PMW is linear in these two factors. Generally, if the data universe is $\{0,1\}^d$, there are strong hardness results for differentially privately outputting synthetic database. In particular, it can be shown that there is no differentially private algorithm which can output a synthetic database, accurately answer general queries, and run in polynomial time\footnote{This hardness result assumes the existence of one-way functions.} \cite{UV11}.

Given the hardness result against general queries, recently there are growing interests in studying efficient and differentially private mechanisms for a restricted class of queries. From a practical point of view, if there exists a class of queries which is rich enough to contain most queries used in applications and allows one to develop fast mechanisms, then the hardness result is not a serious barrier for differential privacy.

Blum et al.\cite{BLR08} considers rectangle queries in the setting that the data universe is $[-1,1]^d$, where $d$ is a constant. A rectangle query is specified by an axis-aligned rectangle. The answer to the query is the fraction of the data points that lie in the rectangle. They show that if $[-1,1]^d$ is discretized to $\mathrm{poly}(n)$ bits of precision, then there is an efficient mechanism which outputs a synthetic database and is accurate to the class of all rectangle queries.

Another class of queries that attracts a lot of attentions is the $k$-way conjunctions (or $k$-way marginal). The data universe for this problem is $\{0,1\}^d$. Thus each individual record has $d$ binary attributes. A $k$-way conjunction query is specified by $k$ features. The query asks what fraction of the individual records in the database has all these $k$ features being $1$. A series of works attack this problem using several different techniques \cite{barak2007privacy, gupta2011privately, cheraghchi2012submodular, hardt2012private, thaler2012faster, dwork2013efficient} . They propose elegant mechanisms which run in time $\mathrm{poly}(n)$ when $k$ is a constant even if the size of the data universe is exponentially large. Thus these algorithms are more efficient than the best general-query-answering mechanisms in the large data universe setting. However, the output of these mechanisms are not synthetic databases\footnote{The hardness result in \cite{UV11} has proved that for $k$-way marginal, efficiently outputting synthetic database is not possible.}.

In this paper we study \emph{smooth} queries defined also on data universe $[-1,1]^d$ for a constant $d$. We say a query is $K$-smooth if it is specified by a smooth function, which has bounded partial derivatives up to the $K$th order. The answer to the query is the average of the function values on data points in the database. Smooth functions are widely used in machine learning and data analysis. There are extensive studies on the relation between smoothness, regularization, reproducing kernels and generalization ability \cite{wahba1999support, smola1998connection}.

Our main result is an $\epsilon$-differentially private mechanism for the class of all $K$-smooth queries. The output of the mechanism is a synthetic database. The mechanism has $(\alpha, \beta)$-accuracy, where $\alpha = O (n^{-\frac{K}{2d+K}}/\epsilon)$ for $\beta$ exponentially small. The running time of the mechanism is $ O(n^{\frac{3dK+ 5d}{4d+2K}})$, polynomial in the size of the database. Note that if the order of smoothness $K$ is large compared to the dimension $d$, the error of the mechanism can be close to $n^{-1}$. In contrast, if we employ BLR to solve this problem and output a synthetic database, the accuracy guarantee is $O(n^{-\frac{K}{d+3K}})$, which is at best $n^{-1/3}$ for large $K$.  To achieve this accuracy, the running time of BLR is super-exponential in the size of the database (please see Section \ref{Subsection:Comparison} for detailed analysis). We also generalize our mechanism to preserve $(\epsilon, \delta)$-differential privacy with slightly improved accuracy.

Our work is related to \cite{wang_efficient_2013}, which proposes an efficient algorithm able to answer smooth queries differentially privately. However, that mechanism outputs a (private) synopsis of the database. In order to obtain the answer of a query, the user has to run an evaluation algorithm, which involves complicated numerical integration procedures. In contrast, the mechanism given in this paper simply outputs a synthetic database, which is friendly to the users in applications. 

We conduct extensive experiments to evaluate the performance of the proposed mechanism on benchmark datasets (which contain sensitive information such as medical records of individuals). We also develop simple techniques to improve the efficiency of the algorithm. Experimental results demonstrate that the algorithms achieve good accuracy and are practically efficient on datasets of various sizes and numbers of attributes.

The rest of the paper is organized as follows. Section \ref{Section:Preliminary} briefly describes the background of data privacy and gives the basic definitions. In Section \ref{Section:Main_Result} we propose the private mechanisms that output synthetic database and accurately answer smooth queries. Section \ref{Section:Main_Result} also contains the main theoretical results, analyzing the performances of the algorithms. All the experimental results are given in Section \ref{Section:Experiments}. Finally, we conclude in Section \ref{Section:Conclusion}. All proofs are given in the appendix.

\section{Preliminaries}\label{Section:Preliminary}

Let $D$ be a database containing $n$ data points in the data universe $\mathcal{X}$. In this paper, we consider the case that $\mathcal{X} \subset \mathbb{R}^d$ where $d$ is a constant. Typically, we assume that the data universe $\mathcal{X} = [-1,1]^d$. Two databases $D$ and $D'$ are called neighbors if $|D|=|D'|=n$ and they differ in exactly one data point. The following is the formal definition of differential privacy.

\begin{defn}[$(\epsilon,\delta)$-differential privacy]
A sanitizer $\mathcal{S}$ which is a randomized algorithm that maps an input database into some range $\mathcal{R}$ is said to preserve $(\epsilon, \delta)$-differential privacy, if for all pairs of neighbor databases $D, D'$ and for any subset $A \subset \mathcal{R}$, it holds that
\begin{equation*}
\mathbb{P}(\mathcal{S}(D) \in A) \le \mathbb{P}(\mathcal{S}(D') \in A) \cdot e^{\epsilon} + \delta,
\end{equation*}
where the probability is taken over the random coins of $\mathcal{S}$.
If $\mathcal{S}$ preserves $(\epsilon,0)$-differential privacy, we say $\mathcal{S}$ is $\epsilon$-differentially private.
\end{defn}

We consider \emph{linear queries}. Each linear query $q_f$ is specified by a function $f$ which maps the data universe $[-1,1]^d$ to $\mathbb{R}$. $q_f$ is defined as
$q_f(D):= \frac{1}{|D|} \sum_{\mathbf{x} \in D} f(\mathbf{x})$.

Let $Q$ be a set of queries. The accuracy of a mechanism with respect to $Q$ is defined as follows.

\begin{defn}[$(\alpha, \beta)$-accuracy]
Let $Q$ be a set of queries. A sanitizer $\mathcal{S}$ is said to have $(\alpha, \beta)$-accuracy for size $n$ databases with respect to $Q$, if for every database $D$ with $|D|=n$ the following holds
\begin{equation*}
\mathbb{P}(\exists q \in Q, ~~~~ |\mathcal{S}(D,q)-q(D)| \ge \alpha) \le \beta,
\end{equation*}
where $\mathcal{S}(D,q)$ is the answer to $q$ given by $\mathcal{S}$, and the probability is over the internal randomness of the mechanism $\mathcal{S}$.
\end{defn}

$(\alpha, \beta)$-accuracy is a strong notion of accuracy. It requires that with high probability all the queries are accurately answered by the mechanism (i.e., it is a worst-case accuracy with respect to queries). Some authors also consider a slightly weaker definition $(\alpha, \beta, \gamma)$-accuracy \cite{dwork2009complexity}.

\begin{defn}[$(\alpha, \beta, \gamma)$-accuracy]
Let $Q$ be a set of queries. A sanitizer $\mathcal{S}$ is said to have $(\alpha, \beta, \gamma)$-accuracy for size $n$ databases with respect to $Q$, if for every database $D$ with $|D|=n$ the following holds
\begin{equation*}
\mathbb{P}(\mathcal{S} ~\mathrm{is} ~(\alpha,\gamma)-\mathrm{accurate}~\mathrm{for}~ D) \ge 1-\beta,
\end{equation*}
where the probability is over the internal randomness of the mechanism $\mathcal{S}$; and $(\alpha,\gamma)$-accurate means that $|\mathcal{S}(D,q) - q(D)| \le \alpha$ holds for at least $1-\gamma$ fraction $q \in Q$.
\end{defn}

We will make use of the Laplace mechanism \cite{DMNS06} in our algorithm. Laplace mechanism adds Laplace noise to the output. We denote by $\mathrm{Lap}(\sigma)$ the random variable distributed according to the Laplace distribution with parameter $\sigma$: $\mathbb{P}(\mathrm{Lap}(\sigma) = x) =\frac{1}{2 \sigma} \exp(-|x|/\sigma)$.

We will design a differentially private mechanism which outputs a synthetic database $\tilde{D}$. Each element of $\tilde{D}$ is a data point in the data universe. $|\tilde{D}|$ and $|D|$ can be different, i.e., the synthetic database and the original database may contain different numbers of data points. For any query $q_f \in Q$, the user simply calculates $q_f(\tilde{D}):=\frac{1}{|\tilde{D}|}\sum_{\mathbf{x} \in \tilde{D}} f(\mathbf{x})$ as an approximation of $q_f(D)$. Our differentially private mechanism guarantees accuracy with respect to the set of smooth queries.

Next we formally define smooth queries. Since each query $q_f$ is specified by a function $f$, a set of queries $Q_F$ can be specified by a set of functions $F$. Remember that each $f \in F$ maps $[-1,1]^d$ to $\mathbb{R}$. For any point $\mathbf{x}=(x_1,\ldots,x_d) \in [-1,1]^d$, if $\mathbf{k}=(k_1,\ldots,k_d)$ is a $d$-tuple of nonnegative integers, then we define
\begin{equation*}
D^{\mathbf{k}}:=D_1^{k_1} \cdots D_d^{k_d} := \frac{\partial^{k_1}}{\partial x_1^{k_1}} \cdots \frac{\partial^{k_d}}{\partial x_d^{k_d}}.
\end{equation*}


Let $|\mathbf{k}|:= k_1+\ldots+k_d$. Define the $K$-norm as
\begin{equation*}
\|f\|_K := \sup_{|\mathbf{k}| \le K} \sup_{\mathbf{x} \in [-1,1]^d}  |D^{\mathbf{k}} f(\mathbf{x})|.
\end{equation*}

We will study the set $C_B^K$ which contains all \emph{smooth} functions whose derivatives up to order $K$ have $\infty$-norm upper bounded by a constant $B>0$. Formally, $C_B^K := \{f:~~\|f\|_K \le B\}$.
The set of queries specified by $C_B^K$, denoted as $Q_{C_B^K}$, is our focus. Smooth functions have been studied in depth in machine learning \cite{van1996weak, wahba1999support, smola1998connection}.

Many functions widely used in machine learning are smooth functions. An example is the Gaussian kernel function $
f(\mathbf{x}) = \exp \left(- \frac{\|\mathbf{x}-\mathbf{x}_0\|^2}{2 \sigma^2} \right)$, where $\mathbf{x}_0 \in \mathbb{R}^d$ is a constant vector. Linear combination of Gaussian kernels is one of the most popular functions used in machine learning
\begin{equation*}
f(\mathbf{x}) = \sum_{j=1}^{J} \alpha_j \exp \left(- \frac{\|\mathbf{x}-\mathbf{x}_j\|^2}{2 \sigma^2} \right),
\end{equation*}
where $\mathbf{x}_j$, $j=1,2,\ldots,J$, are constant vectors.

The smoothness of this type of functions is characterized in the following proposition.

\begin{prop}\label{Prop:Improved_Gaussian_Smoothness}
Let
\begin{equation*}
f(\mathbf{x}) = \sum_{j=1}^{J} \alpha_j \exp \left(- \frac{\|\mathbf{x}-\mathbf{x}_j\|^2}{2 \sigma^2} \right),
\end{equation*}
where $\mathbf{x} \in \mathbb{R}^d$. Let $\boldsymbol{\alpha} = (\alpha_1,\ldots,\alpha_J)$. Suppose $\|\boldsymbol{\alpha}\|_1 \le 1$. Then for every $K \le  \sigma^2$,
\begin{equation*}
\|f\|_{K} \le 1.
\end{equation*}
\end{prop}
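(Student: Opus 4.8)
The plan is to reduce the estimate for a mixture of Gaussian bumps to the estimate for a single bump centered at the origin, and then to bound each one-dimensional derivative of a Gaussian by an absolute moment of a standard normal variable.

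\textbf{Step 1 (reduction to one centered bump).} Writing $g_j(\mathbf{x}) = \exp(-\|\mathbf{x}-\mathbf{x}_j\|^2/(2\sigma^2))$, linearity of $D^{\mathbf{k}}$ together with $\|\boldsymbol{\alpha}\|_1 \le 1$ gives, for any $\mathbf{k}$ with $|\mathbf{k}| \le K$ and any $\mathbf{x}$,
\[
|D^{\mathbf{k}} f(\mathbf{x})| \le \sum_{j=1}^J |\alpha_j|\,|D^{\mathbf{k}} g_j(\mathbf{x})| \le \max_{1\le j\le J}\ \sup_{\mathbf{y}\in\mathbb{R}^d} |D^{\mathbf{k}} g_j(\mathbf{y})|.
\]
Since each $g_j$ is a translate of $g_0(\mathbf{y}) := \exp(-\|\mathbf{y}\|^2/(2\sigma^2))$, the inner supremum does not depend on $j$, so it suffices to prove $\sup_{|\mathbf{k}|\le K}\sup_{\mathbf{y}\in\mathbb{R}^d} |D^{\mathbf{k}} g_0(\mathbf{y})| \le 1$; this dominates the supremum over $[-1,1]^d$ appearing in $\|f\|_K$.

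\textbf{Step 2 (tensorization).} Here $g_0(\mathbf{y}) = \prod_{i=1}^d h(y_i)$ with $h(t) := e^{-t^2/(2\sigma^2)}$, so $D^{\mathbf{k}} g_0(\mathbf{y}) = \prod_{i=1}^d h^{(k_i)}(y_i)$ and, by separation of variables, $\sup_{\mathbf{y}}|D^{\mathbf{k}} g_0(\mathbf{y})| = \prod_{i=1}^d A_{k_i}$, where $A_k := \sup_{t\in\mathbb{R}}|h^{(k)}(t)|$. Substituting $h(t) = \rho(t/\sigma)$ with $\rho(u) := e^{-u^2/2}$ yields $A_k = \sigma^{-k}\sup_{u}|\rho^{(k)}(u)|$. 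Consequently the whole statement follows once I show $\sup_u |\rho^{(k)}(u)| \le \sigma^k$ for every integer $1 \le k \le K$: then $A_{k_i} \le \sigma^{-k_i}\sigma^{k_i} = 1$ for $k_i \ge 1$ while $A_0 = 1$, so $\prod_i A_{k_i} \le 1$ whenever $|\mathbf{k}| \le K$.

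\textbf{Step 3 (the core bound).} The idea is that $\rho$ is, up to a constant, its own Fourier transform: from $\widehat{\rho}(\xi) = \sqrt{2\pi}\,e^{-\xi^2/2}$ and Fourier inversion one gets $\rho^{(k)}(u) = \frac{1}{\sqrt{2\pi}}\int_{\mathbb{R}}(i\xi)^k e^{-\xi^2/2} e^{iu\xi}\,d\xi$, and hence
\[
\sup_{u\in\mathbb{R}}|\rho^{(k)}(u)| \le \frac{1}{\sqrt{2\pi}}\int_{\mathbb{R}}|\xi|^k e^{-\xi^2/2}\,d\xi = \mathbb{E}|Z|^k, \qquad Z\sim N(0,1).
\]
It then remains to invoke the elementary moment bound $\mathbb{E}|Z|^k \le k^{k/2}$ (for even $k=2m$ this is $(2m-1)!!=\prod_{j=1}^m(2j-1)\le(2m)^m$, and the odd case follows from it by Cauchy--Schwarz between the two adjacent even moments) and the hypothesis $k \le K \le \sigma^2$, i.e. $\sqrt{k}\le\sigma$, which gives $k^{k/2} = (\sqrt{k})^k \le \sigma^k$ — exactly the bound needed in Step 2. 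I expect Step 3, specifically the recognition of $\sup_u|\rho^{(k)}(u)|$ as the Gaussian absolute moment $\mathbb{E}|Z|^k$ (equivalently, the sharp estimate $\sup_u |He_k(u)|e^{-u^2/2} \le \mathbb{E}|Z|^k$ on the Hermite functions, which is tight at $u=0$ for even $k$), to be the only step that is not pure bookkeeping; everything else is routine.
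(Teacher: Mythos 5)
Your argument is correct and reaches the result by a genuinely different route from the paper. The paper reduces to a single one-dimensional Gaussian factor exactly as you do, but then quotes Indritz's inequality for Hermite polynomials, $|H_k(x)| \le (2^k k!)^{1/2} e^{x^2/2}$, to get $\sup_t \big|\tfrac{\mathrm{d}^k}{\mathrm{d}t^k} e^{-t^2}\big| \le (2^k k!)^{1/2}$, and after rescaling by $\sqrt{2}\sigma$ bounds the mixed derivative by $(K!)^{1/2}/\sigma^K \le K^{K/2}/\sigma^K \le 1$. You instead bound $\sup_u |\rho^{(k)}(u)|$ for $\rho(u)=e^{-u^2/2}$ by the Gaussian absolute moment $\mathbb{E}|Z|^k$ via Fourier inversion, and then use $\mathbb{E}|Z|^k \le k^{k/2}$ factor by factor. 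What your route buys is self-containedness (no cited Hermite inequality) and a per-derivative constant that is in fact no worse: for even $k=2m$ one has $((2m-1)!!)^2 \le (2m)!$, so $\mathbb{E}|Z|^k \le \sqrt{k!}$, the Indritz-based constant. Both proofs hinge on the same final observation that $k^{k/2} \le \sigma^k$ when $k \le \sigma^2$, and your per-coordinate bookkeeping ($A_{k_i}\le 1$ for each $i$) versus the paper's aggregate bound $\prod_j k_j! \le K!$ are interchangeable.

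One small wrinkle in your Step 3: the parenthetical derivation of the odd case of $\mathbb{E}|Z|^k \le k^{k/2}$, namely bounding the two adjacent even moments by $(k\mp1)^{(k\mp1)/2}$ first and then applying Cauchy--Schwarz, only yields $\mathbb{E}|Z|^k \le \left((k-1)^{(k-1)/2}(k+1)^{(k+1)/2}\right)^{1/2}$, which by convexity of $x\mapsto x\log x$ is $\ge k^{k/2}$; in the boundary case $k=K=\sigma^2$ it no longer implies the needed $\mathbb{E}|Z|^k \le \sigma^k$ (e.g.\ $\sigma^2=3$, $k=3$ gives $\sqrt{32}\approx 5.66 > 3^{3/2}\approx 5.20$). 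The repair is immediate: apply Cauchy--Schwarz before rounding, $\mathbb{E}|Z|^k \le \left((k-2)!!\,k!!\right)^{1/2} \le \left(k^{(k-1)/2}\,k^{(k+1)/2}\right)^{1/2} = k^{k/2}$, since $(k-2)!!$ and $k!!$ have $(k-1)/2$ and $(k+1)/2$ factors, each at most $k$. With that one-line fix the proof is complete.
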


The proof is given in the appendix Section \ref{app:Gaussian Kernel}.

\section{Theoretical Results}\label{Section:Main_Result}

This section contains the main theoretical results of the paper. In Section \ref{Subsection:Mechanisms} we give an $\epsilon$-differentially private mechanism which outputs a synthetic database and guarantee good accuracy for smooth queries. Section \ref{Subsection:eps_delta} generalizes the mechanism to preserve $(\epsilon,\delta)$-differential privacy with slightly improved accuracy. In Section \ref{Subsection:Comparison} we compare the performance of our algorithms to well known differentially private mechanisms on this problem.

\subsection{The $\epsilon$-differentially Private Mechanism}\label{Subsection:Mechanisms}

The following theorem is our main result. It says that if the query class is specified by smooth functions, then there is a polynomial time mechanism which preserves $\epsilon$-differential privacy and good accuracy. The output of the mechanism is a synthetic dataset. A formal description of the mechanism is given in Algorithm \ref{Alg:Output}.

\begin{thm}\label{Theorem:main}
Let the query set be
\[
Q_{C_B^K} := \{q_f(D) = \frac{1}{n} \sum_{\mathbf{x} \in D}
f(\mathbf{x}):~~ f \in C_B^K \},
\]
where $K \in \mathbb{N}$ and $B>0$ are constants. Let the data universe be $[-1,1]^d$, where $d$ is a constant.
Then the mechanism described in Algorithm \ref{Alg:Output} satisfies
that for any $\epsilon>0$, the following hold:

1) The mechanism preserves $\epsilon$-differential privacy.

2) There is an absolute constant $c$ such that for every $\beta \ge c \cdot e^{-n^{\frac{1}{2d+K}}}$ the mechanism is $(\alpha, \beta)$-accurate,
where $\alpha=O (n^{-\frac{K}{2d+K}}/\epsilon)$, and the hidden constant depends only on $d$,
$K$ and $B$.

3) The running time of the mechanism is $O(n^{\frac{3dK+5d}{4d+2K}})$. (This is dominated by solving the linear programming problem in step 20 of the algorithm.)

4) The size of the output synthetic database is $O(n^{1+\frac{K+1}{2d+K}})$.

\end{thm}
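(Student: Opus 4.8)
The plan is to analyze Algorithm~\ref{Alg:Output}, which I expect to proceed in three phases. \emph{Gridding and moment extraction:} partition $[-1,1]^d$ into $M=\Theta(h^{-d})$ axis-parallel cubes $C_1,\dots,C_M$ of side $h$ with centres $c_i$, and for each $i$ and each multi-index $\mathbf{k}$ with $|\mathbf{k}|\le K-1$ record the rescaled local moment $a_{i,\mathbf{k}}=\frac1n\sum_{\mathbf{x}\in D\cap C_i}\bigl((\mathbf{x}-c_i)/h\bigr)^{\mathbf{k}}$, a vector of $N=M\binom{K-1+d}{d}$ numbers. \emph{Perturbation:} add independent $\mathrm{Lap}(\sigma)$ noise to each recorded number, with $\sigma=\Theta\bigl(1/(\epsilon n)\bigr)$, giving values $\tilde a_{i,\mathbf{k}}$. \emph{Reconstruction:} solve a linear program over the weights of a fine sub-grid inside the cubes to obtain a nonnegative measure whose rescaled local moments are as close as possible (in aggregate $\ell_1$) to the $\tilde a_{i,\mathbf{k}}$, then round it to an integer multiset $\tilde D$. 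Part~(1) is the routine part: one data point lies in exactly one cube, so replacing it alters at most $2\binom{K-1+d}{d}=O(1)$ of the $a_{i,\mathbf{k}}$, each by at most $1/n$ (each factor $(\mathbf{x}-c_i)/h$ lies in $[-1,1]$); hence the recorded vector has $\ell_1$-sensitivity $O(1/n)$, releasing it by the Laplace mechanism with $\sigma=\Theta(1/(\epsilon n))$ is $\epsilon$-differentially private, and $\tilde D$ is a deterministic function of the released vector, so post-processing closes the argument.

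For part~(2) I would fix an arbitrary $f\in C_B^K$ and Taylor-expand it about each $c_i$ to order $K-1$. Writing $q_f(D)=\sum_i\sum_{|\mathbf{k}|\le K-1}\frac{D^{\mathbf{k}}f(c_i)}{\mathbf{k}!}h^{|\mathbf{k}|}a_{i,\mathbf{k}}+R_D$ and similarly $q_f(\tilde D)$ with the output moments $a^{\mathrm{out}}_{i,\mathbf{k}}$ and remainder $R_{\tilde D}$, the Lagrange remainder together with $\|f\|_K\le B$ gives $|R_D|,|R_{\tilde D}|=O(Bh^K)$ \emph{uniformly in $f$}, since the remainder on cube $C_i$ is $O(Bh^K)$ times the mass in $C_i$ and these masses sum to $1$. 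The leftover difference, $\sum_{i,\mathbf{k}}\frac{D^{\mathbf{k}}f(c_i)}{\mathbf{k}!}h^{|\mathbf{k}|}(a_{i,\mathbf{k}}-a^{\mathrm{out}}_{i,\mathbf{k}})$, has every coefficient of absolute value at most $B$ (as $h\le1$), so it is bounded by $B\sum_{i,\mathbf{k}}|a_{i,\mathbf{k}}-a^{\mathrm{out}}_{i,\mathbf{k}}|\le B\sum_{i,\mathbf{k}}|a_{i,\mathbf{k}}-\tilde a_{i,\mathbf{k}}|+B\sum_{i,\mathbf{k}}|\tilde a_{i,\mathbf{k}}-a^{\mathrm{out}}_{i,\mathbf{k}}|$, i.e.\ the Laplace error plus the reconstruction/rounding error. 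The first sum is over $N$ independent $|\mathrm{Lap}(\sigma)|$ variables, so a Chernoff bound for sums of exponential variables makes it $O(N\sigma)=O(h^{-d}/(\epsilon n))$ except with probability $e^{-\Omega(N)}$; taking $h\asymp n^{-1/(2d+K)}$, this tail is at most the allowed $\beta\ge c\,e^{-n^{1/(2d+K)}}$, and the bias $O(Bh^K)=O(n^{-K/(2d+K)})$ together with the noise term $O(n^{-(d+K)/(2d+K)}/\epsilon)$ sums to $O(n^{-K/(2d+K)}/\epsilon)$. The key point is that this entire bound is \emph{deterministic in $f$} once the noised vector is drawn — it sees $f$ only through coefficients of size $\le B$ — so there is no union bound over the infinite class $Q_{C_B^K}$: the single event that the Laplace sum is $O(h^{-d}/(\epsilon n))$ yields $(\alpha,\beta)$-accuracy for every smooth query at once.

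The step I expect to be the real obstacle is the reconstruction phase: one must show the linear program of step~20 is always feasible and that its solution, after rounding the weights to a common denominator, has local moments within aggregate $\ell_1$-distance $O(h^{-d}/(\epsilon n))$ of the noised moments. This is a \emph{robust truncated moment} statement — a moment vector that is a small $\ell_1$-perturbation of a genuine one (the moments of the empirical measure snapped to the sub-grid) can be matched by a nonnegative measure on that sub-grid with a commensurately small residual. Feasibility itself is easy, since the snapped empirical measure already meets the noiseless constraints, so the program's slack only has to absorb the Laplace perturbation and the snapping error; the work is to bound the optimal residual and to verify that the common-denominator rounding — which fixes $|\tilde D|=O(n^{1+(K+1)/(2d+K)})$ and hence part~(4) — perturbs the moments by only a lower-order amount. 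Parts~(3) and~(4) are then bookkeeping on the chosen sub-grid resolution and denominator: the linear program has polynomially many variables (one per sub-grid point) with $O(1)$ moment constraints per cube, so solving it costs the stated $O(n^{(3dK+5d)/(4d+2K)})$ and dominates the running time, while the denominator sets the output size.
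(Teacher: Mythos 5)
There is a genuine gap: Theorem \ref{Theorem:main} is a statement about the specific mechanism of Algorithm \ref{Alg:Output}, but the mechanism you analyze is one you invented, not the one in the paper. Algorithm \ref{Alg:Output} does not record local Taylor moments on cubes of side $h$; it snaps the data to an $N^d$ grid, evaluates the $t^d$ \emph{global} basis queries $\cos(r_1\arccos x_1)\cdots\cos(r_d\arccos x_d)$, adds $\mathrm{Lap}(t^d/(n\epsilon))$ noise, rounds everything to precision $1/L$, solves the LP over distributions on the grid, and then draws $m$ i.i.d.\ samples from $\mathbf{u}^*$. Your privacy argument (sensitivity $O(1/n)$ because each record touches only its own cube's $O(1)$ moments, hence noise scale $\Theta(1/(\epsilon n))$) is false for this algorithm: every record affects all $t^d$ cosine queries, so the released vector has $\ell_1$-sensitivity $\Theta(t^d/n)$, which is exactly why step 9 uses scale $t^d/(n\epsilon)$. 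Likewise your accuracy argument via per-cube Taylor expansion with coefficients bounded by $B$ cannot be run against these global basis queries; the paper's proof instead imports the nontrivial approximation theorem of \cite{wang_efficient_2013} (every $f\in C_B^K$ admits, after $\theta_i=\arccos x_i$, a degree-$<t$ trigonometric approximation with \emph{uniformly bounded coefficients} and uniform error $O(t^{-(K+1)})$), and decomposes the error into discretization $O(1/N)$, noise $O(t^{2d}/(n\epsilon))$, approximation $O(t^{-(K+1)})$, sampling $O(t^{d+1/2}/\sqrt{m})$ (a Chernoff bound plus a union bound over the $t^d$ basis queries, since $\tilde D$ is sampled, not obtained by common-denominator rounding), and rounding $O(t^d/L)$. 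Note the noise term $t^{2d}/(n\epsilon)$ is one of the terms dictating $t=\lceil n^{1/(2d+K)}\rceil$; in your scheme the analogous term is lower order and rebalancing would give a different exponent — a telltale sign that you are proving a theorem about a different mechanism rather than this one.

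Two further points. First, parts (3) and (4) are not mere bookkeeping on quantities you are free to choose: (3) follows from rewriting the LP in standard form with $\bar n=N^d+2t^d$ variables and $\bar m=t^d+1$ constraints whose entries are integers bounded by $L$ (this is the purpose of the rounding in steps 10 and 15) and applying the refined interior-point bound $O(\bar n^{1.5}\bar m^{1.5}\bar L/\ln\bar m)$, which yields $O(N^{1.5d}t^{2.5d})=O(n^{(3dK+5d)/(4d+2K)})$; and (4) is the value of $m$ fixed in step 1, sized so that the sampling error is lower order. Second, to your credit, the structural insight you emphasize — that bounded coefficients let one pay for the noised vector once, with no union bound over the infinite query class — is exactly the paper's key mechanism-level idea ($\|\mathbf{c}^*\|_\infty\le M$), and the step you flag as the ``real obstacle'' is in fact handled the way you suggest: the (discretized) empirical distribution $\mathbf{u}'$ is feasible, so LP optimality bounds the residual by the Laplace plus rounding error. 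What is missing is an analysis of the algorithm the theorem actually refers to, with its $\arccos$/Chebyshev basis, its $\mathrm{Lap}(t^d/(n\epsilon))$ calibration, its sampling step, and its rounding-driven running-time argument.
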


The proof of Theorem \ref{Theorem:main} is given in the appendix Section \ref{app:proof}.


\begin{algorithm}[t!]
\caption{Private Synthetic DB for Smooth Queries}
\label{Alg:Output}
\begin{algorithmic}[1]
\renewcommand{\algorithmicrequire}{\textbf{Notations:}}
\REQUIRE $\mathcal{T}_{t}^d := \{0,1,\ldots,t-1\}^d$,
~~$a_k := \frac{2k+1-N}{N}$,
~~$\mathcal{A} := \{a_k | k=0,1,\ldots, N-1\}$,
\renewcommand{\algorithmicrequire}{\textbf{\quad\quad}}
\REQUIRE $\mathcal{L} := \{\frac{i}{L}|i=-L, -L+1, \ldots, L-1, L\}$,
~~$\mathbf{x} := (x_1, \cdots, x_d)$,
~~$\theta_i(\mathbf{x}) :=  \arccos(x_i)$.
\renewcommand{\algorithmicrequire}{\textbf{Parameters:}}
\REQUIRE Privacy parameters $\epsilon, \delta >0$,
~Failure probability $\beta > 0$, ~Smoothness order $K \in \mathbb{N}$.

\renewcommand{\algorithmicrequire}{\textbf{Input:}}
\renewcommand{\algorithmicensure}{\textbf{Output:}}
\REQUIRE Database $D \in \left([-1,1]^d \right)^n$
\ENSURE Synthetic database $\tilde{D} \in \left([-1,1]^d \right)^m$

\STATE Set $t= \lceil n^{\frac{1}{2d+K}} \rceil$,
 $N=\lceil n^\frac{K}{2d+K}\rceil$, $m=\lceil n^{1+\frac{K+1}{2d+K}}\rceil$,
 $L=\lceil n^{\frac{d+K}{2d+K}}\rceil$.
\STATE Initialize: $D' \leftarrow \emptyset$, $\tilde{D} \leftarrow \emptyset$, $\mathbf{u} \leftarrow \mathbf{0}_{N^d}$

\FORALL{$\mathbf{z}=(z_1,\ldots,z_d) \in D$}
\STATE $x_i \leftarrow \arg\min_{a\in \mathcal{A}}|z_i-a|$, $i=1,\ldots,d$
\STATE Add $\mathbf{x}=(x_1,\ldots,x_d)$ to $D'$
\ENDFOR

\FORALL{$\mathbf{r} = (r_1,\ldots,r_d) \in \mathcal{T}_{t}^d$}
\STATE $b_{\mathbf{r}} \leftarrow \frac{1}{n} \sum_{\mathbf{x}\in D'}\cos
\left(r_1 \theta_1(\mathbf{x}) \right)\ldots \cos \left(r_d \theta_d(\mathbf{x}) \right)$
\STATE $\hat{b}_{\mathbf{r}} \leftarrow b_{\mathbf{r}}
  + \mathrm{Lap}\left(\frac{t^d}{n \epsilon}\right)$
\STATE {$\hat{b}'_{\mathbf{r}} \leftarrow
  \arg\min_{l\in \mathcal{L}}|\hat{b}_{\mathbf{r}}-l|$}
\ENDFOR

\FORALL{$\mathbf{k} = (k_1,\ldots,k_d) \in \mathcal{T}_{N}^d$}
\FORALL{$\mathbf{r} = (r_1,\ldots,r_d) \in \mathcal{T}_{t}^d$}
\STATE $W_{\mathbf{rk}} \leftarrow \cos \left(r_1 \arccos(a_{k_1}) \right)
\ldots \cos \left(r_d \arccos(a_{k_d}) \right)$\\
\STATE $W'_{\mathbf{rk}} \leftarrow
  \arg\min_{l\in \mathcal{L}}|W_{\mathbf{rk}}-l|$
\ENDFOR
\ENDFOR

\STATE $\hat{\mathbf{b}'} \leftarrow (\hat{b}'_{\mathbf{r}})_{\|\mathbf{r}\|_{\infty}
\le t-1}$ ($\hat{\mathbf{b}}'$ is a $t^d$ dimensional vector)

\STATE $W' \leftarrow (W'_{\mathbf{rk}})_{\|\mathbf{r}\|_{\infty} \le t-1,
\|\mathbf{k}\|_{\infty} \le N-1}$ (a $t^d \times N^d$ matrix)\\
\STATE Solve the following LP problem: ~~$\min_{\mathbf{u}} \| W'\mathbf{u} - \hat{\mathbf{b}}'\|_{1}$, subject to $\mathbf{u} \succeq 0$, ~$\|\mathbf{u}\|_1 =1$.\\
Obtain the optimal solution $\mathbf{u}^*$.
\REPEAT
\STATE Sample $\mathbf{y}$ according to distribution $\mathbf{u}^*$
\STATE Add $\mathbf{y}$ to $\tilde{D}$
\UNTIL{$|\tilde{D}|=m$}
\STATE \textbf{return:} $\tilde{D}$

\end{algorithmic}

\end{algorithm}

Before explaining the ideas of the algorithm, let us first take a closer look at the results in Theorem \ref{Theorem:main}. To have a better view of how the performances depend on the order of smoothness, let us consider three cases. The first case is $K=1$, i.e., the query functions only have the first order derivatives. Another extreme case is $K/d = M \gg 1$, i.e., very smooth queries. We also consider a case in the middle by assuming $K=2d$. Table \ref{Table:Performance_smoothness} gives simplified upper bounds for the error, the running time of the algorithm, and the size of the output synthetic database in these cases.

From Table \ref{Table:Performance_smoothness} we can see that the accuracy $\alpha$ improves dramatically from roughly $O(n^{-\frac{1}{2d}})$ to nearly $O(n^{-1})$ as $K$ increases. For $K>2d$, the error is smaller than the sampling error $O(\frac{1}{\sqrt{n}})$. On the other hand, the running time of the mechanism increases if one wants better accuracy for highly smooth queries. (Please see Section \ref{Section:Experiments} for how to improve the efficiency of the algorithm in practice.) Finally, the size of the output synthetic database also increases in order to have better accuracy: roughly, $O(n^{-1})$ accuracy requires an $O(n^2)$-size synthetic database.

\begin{table}[t]
\caption{Performance vs. Order of smoothness}
\label{Table:Performance_smoothness}
\vskip 0.15in
\begin{center}
\begin{small}
\begin{sc}
\begin{tabular}{c|ccc}
\hline
\abovespace
Order of & Accuracy & Running & Size of \\
\belowspace
smoothness & $\alpha$ & time & synthetic DB \\
\hline
\abovestrut{0.25in}
\belowspace
$K=1$ & $O(n^{-\frac{1}{2d+1}})$ & $O(n^2)$
& $O(n^{1+\frac{2}{2d+1}})$ \\
\abovespace \belowspace
$K=2d$ & $O(n^{-\frac{1}{2}})$ & $O(n^{\frac{3}{4}d+\frac{5}{8}})$ & $O(n^{\frac{3}{2}+\frac{1}{4d}})$ \\
\abovespace \belowstrut{0.15in}
$\frac{K}{d}=M \gg 1$ & $O(n^{-(1-\frac{2}{M})})$ & $O(n^{d(\frac{3}{2}-\frac{1}{2M})})$ & $O(n^{2-\frac{1}{2M}})$ \\
\hline
\end{tabular}
\end{sc}
\end{small}
\end{center}
\vskip -0.1in
\end{table}

Now we explain the mechanism in detail. The first idea is that all smooth functions in $C_B^K$ can be approximated by linear combinations of a small set of basis functions. In fact, approximation of smooth functions by polynomials, radial basis function, wavelets etc. has been well studied for decades. However, for the differential privacy problem, our requirement of the approximation is quite different to the typical results in approximation theory. Specifically, we require that all smooth functions in $C_B^K$ can be approximated by linear combinations of a set of basis functions with \emph{small coefficients}. The coefficients correspond to all smooth functions must be uniformly bounded by a constant. (The reason will soon be clear.) It is not clear from standard approximation theory whether any of the above mentioned basis function sets satisfies such a requirement. Instead, we make a change of variables $\theta_i = \arccos(x_i)$ and consider approximation of the transformed function $g_f(\theta_1,\ldots,\theta_d) = f(\cos \theta_1,\ldots, \cos \theta_d)$ by linear combinations of trigonometric polynomials. It can be shown that the trigonometric polynomial basis satisfies the small coefficient requirement. It is worth pointing out that here we consider $L_{\infty}$ approximation, different to the $L_2$ approximation which is simply the Fourier analysis when using trigonometric basis.

Next we view the trigonometric polynomial functions as a set of basis queries. We compute the answers of the basis queries (step 8 in Algorithm \ref{Alg:Output}), and add Laplace noise to the answers (step 9). These noisy answers guarantee differential privacy. Note that if, for a smooth query, we know the coefficients of the linear combination of basis functions that approximate the smooth function, then we can easily obtain a differentially private answer to the smooth query by simply combining the noisy answers of the basis queries with these coefficients. Moreover, because all the coefficients are small, the error of the answer to the smooth query is small. However, an important advantage of our mechanism is that we do not even need to know the linear coefficients. We merely need to know that there \emph{exist} such coefficients which leads to a good approximation of a smooth function.

Finally, our goal is to generate a synthetic dataset (without using any information of the original database) so that if we evaluate all the basis queries on this synthetic database, all the answers will be close to the noisy answers obtained from the original dataset. The key observation is that if we have such a synthetic dataset, then the evaluation of any smooth query on this synthetic dataset is an answer both differentially private and accurate. To generate such a dataset, we first learn a probability distribution over $[-1,1]^d$ so that the answers of the basis queries with respect to this distribution is close to the noisy answers. Observe that such a distribution must exist, because the uniform distribution over the original dataset satisfies this requirement. However, learning a continuous distribution is computationally intractable. So we discretize the domain (as well as the original data (step  4)) and consider distributions over the discretized data universe. Because the queries are smooth, the error involved by discretization can be controlled. Learning the distribution can be formulated as a linear programming problem (step 20). Note that in the LP problem we minimize $l_1$ error instead of $l_{\infty}$ error because it results in slightly better accuracy. Finally, we randomly draw sufficiently large number of data from this probability distribution, and these data form the output synthetic database.

The running time of the mechanism is dominated by the linear programming step. It is known that the worst-case time complexity of the interior point method is upper bounded in terms of the number of variables, number of constraints, and the number of bits to encode the problem. It is easy to see that there are only $\mathrm{poly}(n)$ variables and constraints. To control the number of bits, we round each number in the linear programming problem in a certain precision level (step 10 and 15). Because all the numbers after rounding are bounded uniformly by a constant, the number of bits is not too large.

\subsection{Generalization to $(\epsilon, \delta)$-differential Privacy}\label{Subsection:eps_delta}

It's easy to generalize the previous $\epsilon$-differentially private mechanism to an $(\epsilon,\delta)$-differentially private mechanism which could achieve slightly better accuracy.

The $(\epsilon,\delta)$-differential private mechanism is different to Algorithm \ref{Alg:Output} only in step $1$ and step $9$. These two steps are replaced by the following:

\textbf{1)} Step 1. Set
$t = \lceil n^{\frac{2}{3d+2K}}(\log\frac{1}{\delta}) ^{-\frac{1}{3d+2K}} \rceil$,
$~N = \lceil n^{\frac{2K}{3d+2K}}(\log\frac{1}{\delta}) ^{-\frac{K}{3d+2K}}\rceil ~$,\\
$~~~~~~~~~~~~~~~~~~~~~m = \lceil n^{\frac{4d+4K+2}{3d+2K}}(\log\frac{1}{\delta}) ^{-\frac{2d+2K+1}{3d+2K}}\rceil$,
and $~L = \lceil n^{\frac{2d+2K}{3d+2K}}(\log\frac{1}{\delta}) ^{-\frac{d+K}{3d+2K}}\rceil~$.

\textbf{2)} Step 9. $~~~\hat{b}_{\mathbf{r}} = b_{\mathbf{r}} + \mathrm{Lap} \left(\frac{(t^d\log \frac{1}{\delta})^{\frac{1}{2}}}{n \epsilon} \right)$.

We have the following theorem for this mechanism.

\begin{thm}\label{Theorem:eps_delta}

Let the query set $Q_{C_B^K}$ be defined as in Theorem \ref{Theorem:main}.
Let the data universe be $[-1,1]^d$, where $d \in \mathbb{N}$ is a constant.
Then the mechanism described above satisfies that for any $\epsilon>0$, $\delta>0$,
the following hold:

1) The mechanism is $(\epsilon, \delta)$-differentially private.

2) There is an absolute constant $c$ such that for any $\beta \ge c \cdot e^{-n^{\frac{2}{3d+2k}}(\log\frac{1}{\delta})^{-\frac{1}{3d+2K}}}$
the mechanism is $(\alpha, \beta)$-accurate,
where $\alpha=O \left(n^{-\frac{2K}{3d+2K}}(\log\frac{1}{\delta})^{\frac{K}{3d+2K}}/\epsilon \right)$,
and the hidden constant depends only on $d$, $K$ and $B$.

3) The running time of the mechanism is $O \left(n^{\frac{3dK+5d}{3d+2K}}(\log\frac{1}{\delta}) ^{-\frac{3dK+5d}{6d+4K}}\right)$.


4) The size of synthetic database is $O \left(n^{\frac{4d+4K+2}{3d+2K}}
(\log\frac{1}{\delta}) ^{-\frac{2d+2K+1}{3d+2K}} \right)$.
\end{thm}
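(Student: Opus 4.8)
The plan is to treat Theorem~\ref{Theorem:eps_delta} as a re-parameterization of Theorem~\ref{Theorem:main}: the mechanism is literally Algorithm~\ref{Alg:Output} with a different value of $t$ (and the derived $N,m,L$) and a smaller Laplace scale in step~9, so I would reuse the entire error decomposition from the proof of Theorem~\ref{Theorem:main} and only redo the privacy accounting and the final balancing of error terms. Writing $\lambda:=\log(1/\delta)$ and $t:=\lceil n^{2/(3d+2K)}\lambda^{-1/(3d+2K)}\rceil$, one checks immediately that the new step-1 settings are exactly $N=\lceil t^{K}\rceil$, $L=\lceil t^{d+K}\rceil$, $m=\lceil t^{2d+2K+1}\rceil$, so every structural fact about the algorithm used in the proof of Theorem~\ref{Theorem:main} carries over verbatim.

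For privacy (part~1), the point is that steps 3--11 release only the $t^d$-vector $\hat{\mathbf b}=(\hat b_{\mathbf r})_{\mathbf r\in\mathcal T_t^d}$ and everything afterwards is post-processing. Each coordinate $b_{\mathbf r}$ has sensitivity $O(1/n)$ with respect to a one-point change (each summand lies in $[-1,1]$ and the discretization $D\mapsto D'$ maps neighbors to neighbors), so adding $\mathrm{Lap}\big((t^d\lambda)^{1/2}/(n\epsilon)\big)$ makes the release of a single coordinate $\epsilon_0$-differentially private with $\epsilon_0=O\big(\epsilon/(t^d\lambda)^{1/2}\big)$; I would then invoke the advanced composition theorem \cite{dwork2010boosting} over the $t^d$ coordinates, which gives $(\epsilon',\delta)$-privacy with $\epsilon'=O\big(\epsilon_0\sqrt{t^d\lambda}+t^d\epsilon_0^2\big)=O(\epsilon)$, and tune the universal constant inside the Laplace scale so that $\epsilon'\le\epsilon$. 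This is the one genuinely new ingredient relative to Theorem~\ref{Theorem:main}, where basic composition and the scale $t^d/(n\epsilon)$ were used instead.

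For accuracy (part~2), I would fix $q_f\in Q_{C_B^K}$, pass to $g_f(\boldsymbol\theta)=f(\cos\theta_1,\dots,\cos\theta_d)$, and reuse the bounded-coefficient trigonometric approximation $\sum_{\mathbf r\in\mathcal T_t^d}c_{\mathbf r}\prod_i\cos(r_i\theta_i)$ of $g_f$ from the proof of Theorem~\ref{Theorem:main}, which has $\|\mathbf c\|_1\le C$ and uniform error $O(t^{-K})$ with $C$ depending only on $d,K,B$. Then $|q_f(\tilde D)-q_f(D)|$ splits into (i) two trig-approximation terms, each $O(t^{-K})$; (ii) a discretization term $O(1/N)=O(t^{-K})$; (iii) a sampling term, handled by Hoeffding plus a union bound over the $t^d$ basis queries, bounded by $\|\mathbf c\|_1\cdot O\big(\sqrt{\log(t^d/\beta)/m}\big)=O(t^{-(d+K)})$ using the hypothesis $\beta\ge c\,e^{-n^{2/(3d+2K)}\lambda^{-1/(3d+2K)}}$ (which gives $\log(1/\beta)=O(t)$) and $m=\lceil t^{2d+2K+1}\rceil$; and (iv) an LP term, bounded exactly as in Theorem~\ref{Theorem:main}: the uniform distribution on $D'$ is feasible and maps under $W$ to $\mathbf b$, the rounding in steps 10 and 15 contributes $O(t^d/L)$, and $\mathbb E\|\hat{\mathbf b}-\mathbf b\|_1=\sigma t^d$ concentrates, so $\|W\mathbf u^*-\mathbf b\|_1=O(\sigma t^d)+O(t^d/L)$ and term (iv) is $\|\mathbf c\|_\infty$ times that. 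The remaining work is arithmetic: with $L=\lceil t^{d+K}\rceil$ we get $t^d/L=O(t^{-K})$, and substituting $t$ gives $t^{-K}=O\big(n^{-2K/(3d+2K)}\lambda^{K/(3d+2K)}\big)$ while $\sigma t^d=t^{3d/2}\lambda^{1/2}/(n\epsilon)=O\big(n^{-2K/(3d+2K)}\lambda^{K/(3d+2K)}/\epsilon\big)$, so the LP term dominates and equals the claimed $\alpha$. Part~4 is then immediate from the value of $m$ in step~1, and part~3 follows as in Theorem~\ref{Theorem:main} from the interior-point complexity bound applied to the step-20 linear program, which has $O(N^d+t^d)$ variables and constraints after linearizing the $\ell_1$ objective and $O(\log L)=O(\log n)$-bit coefficients; substituting the parameters gives the stated running time. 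I expect the only real pitfalls to be bookkeeping ones: pinning down the universal constant in the Laplace scale so that advanced composition delivers $\epsilon'\le\epsilon$ rather than merely $O(\epsilon)$, and checking that under the new choice of $t$ each of the four error contributions is $O(\alpha)$ with the noise term matching it up to constants.
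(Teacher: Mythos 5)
Your proposal is correct and follows essentially the route the paper intends: the paper itself omits the proof, saying only that it follows from the standard (advanced) composition theorem of \cite{dwork2010boosting} applied to the re-parameterized Algorithm~\ref{Alg:Output}, and your identification $N=\lceil t^K\rceil$, $L=\lceil t^{d+K}\rceil$, $m=\lceil t^{2d+2K+1}\rceil$ with $t=\lceil n^{2/(3d+2K)}(\log\tfrac1\delta)^{-1/(3d+2K)}\rceil$, together with the reused error decomposition and the new noise/balancing calculation, matches the intended argument and yields exactly the stated $\alpha$, running time, and synthetic-database size. The only caveat is your claim that the trigonometric approximation has $\|\mathbf{c}\|_1\le C$: the approximation result used in the paper only guarantees $\|\mathbf{c}^*\|_{\infty}\le M$ (so $\|\mathbf{c}^*\|_1$ may be of order $t^d$), but this does not break anything since the sampling term is then bounded by $\|\mathbf{c}^*\|_{\infty}\cdot\|\tilde{\mathbf{b}}-W\mathbf{u}^*\|_1 = O\left(t^d\cdot t^{-(d+K)}\right)=O(t^{-K})$, which is still $O(\alpha)$.
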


The proof of Theorem \ref{Theorem:eps_delta} is by the standard use of the composition theorem \cite{dwork2010boosting}. We omit the details.

Note that the running time and the size of the output synthetic database of this $(\epsilon, \delta)$-differentially private mechanism are similar to that of the $\epsilon$-differentially private one.

\subsection{Comparison to Existing Algorithms}\label{Subsection:Comparison}

Here we study the performance of existing differentially private mechanism which can output a synthetic database for accurately answering smooth queries. In particular, we analyze a simple variant of the BLR mechanism.

Note that the original BLR mechanism applies to the setting where the data universe is $\{0,1\}^d$ and the query set contains a finite number of linear queries. Given the query set, BLR outputs a synthetic database and preserves $\epsilon$-differential privacy. Let $|Q|$ be the number of queries in the query set $Q$, and let $|\mathcal{X}|$ be the size of the data universe, the accuracy of BLR is $\tilde{O} \left( \left(\frac{\log|Q| \log|\mathcal{X}|}{n} \right)^{1/3} \right)$ \cite{BLR08}. (In this subsection we ignore the dependence on all other factors for clarity.)

For the smooth query problem, the data universe is the continuous domain $[-1,1]^d$, and the query set contains infinitely many elements as the number of smooth functions is infinite. In order to apply BLR to this problem, one must discretize both the data universe and the range of the smooth functions. It is easy to see that to achieve an accuracy of $\alpha$ for all smooth queries, it is necessary and sufficient to discretize the data universe $[-1,1]^d$ to $\Omega(\frac{1}{\alpha})$ grids along each dimension, and discretize the range to $\Omega(\frac{1}{\alpha})$ precision.

After these discretization, the data universe $\mathcal{X}$ is of size $\Omega( ( \frac{1}{\alpha} )^d )$, and the query set $Q$ contains only a finite number of queries. The following proposition gives the performance of BLR for the discretized smooth queries.

\begin{prop}\label{Prop:BLR_performance}
The accuracy guarantee of the BLR mechanism (implemented as described above) on the set of $K$-smooth queries is $O \left( n^{-\frac{K}{d+3K}}\right)$. The running time for achieving such an accuracy is super-exponential in $n$.
\end{prop}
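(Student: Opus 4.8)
The plan is to instantiate BLR's accuracy guarantee $\tilde{O}\big((\tfrac{\log|Q|\,\log|\mathcal X|}{n})^{1/3}\big)$ with the discretized universe and query‑set sizes that are forced by demanding $\alpha$‑accuracy for \emph{every} $K$‑smooth query, solve the resulting relation for $\alpha$, and then evaluate the synthetic‑database size—hence the cost of the exponential‑mechanism step—at that $\alpha$. As in the paper I suppress the dependence on $\epsilon$ and on the failure probability.

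\emph{The discretized universe.} Since $K\ge 1$, every $f\in C_B^K$ has all first partials bounded by $B$, so $|f(\mathbf x)-f(\mathbf x')|\le dB\,\|\mathbf x-\mathbf x'\|_\infty$; hence a grid of mesh $\Theta(\alpha/(dB))$ along each axis perturbs every smooth query by $O(\alpha)$, and—by exhibiting one smooth query varying at the maximal rate—such a mesh is also necessary. Thus $|\mathcal X|=\Theta_{d,B}(\alpha^{-d})$ and $\log|\mathcal X|=\Theta_{d,B}(\log(1/\alpha))$, which is only polylogarithmic once $\alpha=n^{-\Theta(1)}$.

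\emph{The query set (the crux).} After the domain discretization and quantization of the range of $f$ to precision $\alpha$, the number of distinguishable smooth queries is, up to $\log$ factors, the sup‑norm metric entropy of $C_B^K$ at scale $\alpha$, and I would show this is $\Theta_{d,K,B}(\alpha^{-d/K})$. \textbf{Upper bound:} on a subgrid of mesh $h=\Theta(\alpha^{1/K})$ replace $f$ in each cell by its degree‑$(K-1)$ Taylor polynomial about the center; the remainder is $O(B h^K)=O(\alpha)$ because the order‑$K$ derivatives are $\le B$; there are $\Theta(h^{-d})=\Theta(\alpha^{-d/K})$ cells; each cell carries only $\binom{d+K-1}{d}=O_{d,K}(1)$ Taylor coefficients; and those coefficients lie in $[-B,B]$, so rounding them to precision $\Theta(\alpha)$ costs $O(\log(1/\alpha))$ bits per cell—altogether $\exp\!\big(\tilde{O}_{d,K,B}(\alpha^{-d/K})\big)$ piecewise polynomials $O(\alpha)$‑cover $C_B^K$. \textbf{Lower bound} (needed to bound the running time from below): pack $\Theta(\alpha^{-d/K})$ disjoint translates of a fixed $C^\infty$ bump supported on a cube of side $\Theta(\alpha^{1/K})$, each rescaled to $\|\cdot\|_K\le B$ and sup‑norm $\Theta(\alpha)$; switching the bumps on and off independently gives $2^{\Theta(\alpha^{-d/K})}$ members of $C_B^K$ pairwise $\Omega(\alpha)$‑separated in sup‑norm, so any finite query set that faithfully represents all $K$‑smooth queries at scale $\alpha$ has $\log|Q|=\Omega(\alpha^{-d/K})$. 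This entropy estimate—in particular the bounded‑coefficient/quantization bookkeeping in the upper bound—is the main obstacle; it is exactly the Kolmogorov–Tikhomirov bound for smooth classes, and everything after it is arithmetic.

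\emph{Accuracy and running time.} Feeding $\log|\mathcal X|=\mathrm{polylog}(n)$ and $\log|Q|=\Theta(\alpha^{-d/K})$ into BLR's bound gives $\alpha=\tilde{O}\big((\alpha^{-d/K}/n)^{1/3}\big)$, i.e.\ $\alpha^{3+d/K}=\tilde{O}(1/n)$, hence $\alpha=\tilde{O}\big(n^{-K/(d+3K)}\big)$, the hidden polylog$(n)$ shifting the exponent only by $o(1)$ and being absorbed into the $O(\cdot)$. For the running time: to be accurate for all smooth queries BLR must carry the whole discretized query set, so its small‑database step samples a synthetic database of size $m=\tilde{\Theta}(\log|Q|/\alpha^2)=\tilde{\Theta}(\alpha^{-(d+2K)/K})$ from the exponential mechanism over the $\Theta(|\mathcal X|^{m})$ databases of that size, which in the standard implementation costs $|\mathcal X|^{m}$. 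Substituting $\alpha=n^{-K/(d+3K)}$ yields $m=\tilde{\Theta}\big(n^{(d+2K)/(d+3K)}\big)$ and $\log|\mathcal X|=\tilde{\Theta}(\log n)$, so the running time is $|\mathcal X|^{m}=\exp\!\big(\tilde{\Theta}(n^{(d+2K)/(d+3K)})\big)$—exponential in a fixed positive power of $n$, hence vastly super‑polynomial (the ``super‑exponential'' blow‑up of the statement); and since the packing lower bound forbids shrinking $\log|Q|$, it is unavoidable for this mechanism.
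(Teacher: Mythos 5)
Your proposal is correct and follows essentially the same route as the paper: discretize the domain and range at precision $\alpha$, show $\log|Q|\asymp\alpha^{-d/K}$ (your packing construction with disjoint rescaled bumps is exactly the paper's Lemma on $\log|Q_{C_B^K}^{\alpha}|$), plug into the BLR bound $\tilde{O}\bigl((\log|Q|\log|\mathcal{X}|/n)^{1/3}\bigr)$, and balance against the $\Theta(\alpha)$ discretization error to get $\alpha = n^{-K/(d+3K)}$. You are in fact somewhat more complete than the paper, which proves only the packing lower bound on $\log|Q|$ and leaves both the matching Kolmogorov--Tikhomirov covering upper bound and the explicit exponential-mechanism running-time accounting (your $|\mathcal{X}|^{m}=\exp\bigl(\tilde{\Theta}(n^{(d+2K)/(d+3K)})\bigr)$, which as you honestly note is ``super-polynomial'' rather than literally super-exponential) implicit.
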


The proof of Proposition \ref{Prop:BLR_performance} is given in the appendix Section \ref{app:BLR}.

Note that even for highly smooth queries, the accuracy guarantee of BLR is at best $O \left(n^{-1/3}\right)$. In contrast, our mechanism has an accuracy close to $n^{-1}$ if $K$ is large compared to $d$. More importantly, our mechanism runs in polynomial time, much more efficient than BLR on the smooth problem.



\subsection{Practical Acceleration via Private PCA}

Theoretically, the worst-case time complexity of our $\epsilon$-differentially private mechanism can be nearly $n^{\frac{3d}{2}}$ to achieve $n^{-1}$ accuracy for highly smooth queries. In real application such a running time is unacceptable. We thus consider a simple variant of Algorithm \ref{Alg:Output} which turns out to be very efficient in our experiments and suffers only from minor loss in accuracy. Note that the running time of Algorithm \ref{Alg:Output} is dominated by the linear programming step. This LP problem has $O(N^d)$ variables and $O(t^d)$ constraints, where $N^d$ is the number of discretized grids in $[-1,1]^d$ and $t^d$ is the number of trigonometric polynomial basis functions. To make our algorithm practical, we consider a subset $\mathcal{M}$ of the $N^d$ grids with size $C:=|\mathcal{M}| \ll N^d$ and restrict the probability distribution $\mathbf{u}$ on this subset of grids. Similarly, we use a subset of size $R$ of the $t^d$ trigonometric polynomial basis functions preferred to lower degrees. By doing this, the LP problem has $C$ variables and $R$ constraints.

The simplest approach to obtain $\mathcal{M}$ is sampling from $N^d$ grids in $[-1,1]^d$ uniformly. However, this approach suffers from substantial loss in accuracy (see Appendix for experimental results of this method), because $|\mathcal{M}|$ is extremely small compared to $N^d$, the probability that $\mathcal{M}$ contains data points in $D$ (or close to $D$) is very small. In order to reduce the size of the LP problem and preserve the accuracy, we need a better approach to obtain $\mathcal{M}$. Formally, the problem of choosing a subset $\mathcal{M}$ for our purpose can be formulated as follows: We want a subset $\mathcal{M}$ so that

1) $\mathcal{M}$ is differentially private;

2) $|\mathcal{M}|$ is small;

3) For almost every data point $x$ in $D$, there is a point in $\mathcal{M}$ close to $x$.

Note that without the privacy concern, one can simply let $\mathcal{M} = D$. But under the requirement of privacy, this problem is highly non-trivial. Here we adopt private PCA to obtain a low dimensional ellipsoid. The ellipsoid is spanned by the (private) top eigenvectors of the data covariance matrix with the square root of the (private) eigenvalues as the radius. In particular, we use a slightly modified version of the Private Subspace Iteration (PSI) mechanism due to Hardt \cite{hardt2013robust} to compute the private eigenpairs. The mechanism is described in Algorithm \ref{Alg:PSI}. Finally, we uniformly sample $C$ points from the ellipsoid to form $\mathcal{M}$.


\begin{algorithm}[hbtp]
	\caption{Private Subspace Iteration}
	\label{Alg:PSI}
	\begin{algorithmic}[1]
		\renewcommand{\algorithmicrequire}{\textbf{Input:}}
		\renewcommand{\algorithmicensure}{\textbf{Output:}}
		\REQUIRE Database $D\in([-1,1]^d)^n$
		\ENSURE Top-$k$ private eigenvectors and eigenvalues of $\mathbf{X}^{(L)}$.
		\renewcommand{\algorithmicrequire}{\textbf{Parameters:}}
		\REQUIRE Number of iterations $L\in \mathbb{N}$, dimension k, privacy parameters $\epsilon, \delta >0$, Denote $\mathrm{GS}$ as the Gram-Schmidt orthonormalization algorithm.
		\STATE Set $\sigma= \frac{5d\sqrt{4kL\log(1/\delta)}}{n\epsilon}$, $\mathbf{A}=\frac{1}{n}DD^T-\bar{D}^T \bar{D}$, where $\bar{D}$ is the mean of $D$.\\
		\STATE Initialize: $\mathbf{G}^{(0)}\sim N(0,1)^{d\times k}$, $\mathbf{X}^{(0)}\leftarrow \mathrm{GS}(\mathbf{G}^{0})$
		\FORALL{$l = 1, 2, \dots, L$}
			\STATE Sample $\mathbf{G}^{(l)}\sim N(0, \sigma^2)^{n\times k}$.
			\STATE $\mathbf{W}^{(l)} = \mathbf{A}\mathbf{X}^{(l-1)}+||\mathbf{X}_{(l-1)}||_{\infty} \mathbf{G}^{(l)}$
			\STATE $\mathbf{X}^{(l)} \leftarrow \mathrm{GS}(\mathbf{W}^{(l)})$
		\ENDFOR
	\end{algorithmic}
\end{algorithm}

In the following three results, we show that the PSI mechanism is differentially private and accurate for the top eigenvectors and eigenvalues respectively. Note that Hardt\cite{hardt2013robust} shows that the tangent of the angle between the space spanned by the top-$k$ leading eigenvectors of the true data covariance matrix and the the space spanned by the output columns vectors is small with high probability. However, it does not suffice to conclude that the output private ellipsoid converge to the true PCA ellipsoid. Our results slightly strengthen the result in \cite{hardt2013robust}. We show that the column-wise convergence between eigenvectors and output columns, which can be concluded from the simultaneous convergence between the increasing sequence of eigenspaces and the increasing sequence of output-spaces.

\begin{thm}[\textbf{Accuracy of the eigenvectors}]\label{Theorem:Eigenvector_convergence}
	Given a database $D$ with $|D| = n$, let $\mathbf{A}=\frac{1}{n}DD^T-\bar{D}^T \bar{D}$ with eigenvalues $\lambda_1\ge\dots\ge\lambda_d$ and $\gamma_k = \lambda_k/\lambda_{k+1} - 1$ for some $k\le d/2$. Let $\mathbf{U} = (\mathbf{u}_1, \dots, \mathbf{u}_k) \in \mathbb{R}^{d\times k}$ be a basis for the space spanned by the top $k$ eigenvectors. The matrix $\mathbf{X}^{(L)} = (\mathbf{x}_1^{(L)}, \dots, \mathbf{x}_k^{(L)}) \in\mathbb{R}^{d\times k}$ returned by Algorithm \ref{Alg:PSI} on input of $D$, with parameters $k, L \ge C (\min_{s\le k}\gamma_s)^{-1}\log d$ for sufficiently large constant $C$,$L\in \mathbb{N}$, and privacy parameter $\sigma$ satisfies with probability $1 - o(1)$,
	\begin{equation*}
		\sin\theta(\mathbf{u}_s, \mathbf{x}_s^{(L)}) \le O\biggl(\sigma\omega_s \sqrt{d\max||\mathbf{X}^{(l)}||^2_\infty\log L}\biggr),
	\end{equation*}
	where
	\[
		\omega_s = \begin{cases}
			\max\{\frac{1}{\gamma_s\lambda_s}, \frac{1}{\gamma_{s-1}\lambda_{s-1}}\} & 2 \le s \le k,\\
			\frac{1}{\gamma_1\lambda_1} & s = 1.
		\end{cases}
	\]
\end{thm}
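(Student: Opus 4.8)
The plan is to upgrade the single-subspace guarantee of Hardt \cite{hardt2013robust} into a per-column guarantee by running that analysis at every ``level'' $s\le k$ and then gluing the levels together with an elementary nested-subspace perturbation lemma. The structural fact that makes this work is that Gram--Schmidt processes columns from left to right: for every $s$, the span $V_s^{(l)}:=\mathrm{span}(\mathbf x_1^{(l)},\dots,\mathbf x_s^{(l)})$ of the first $s$ columns of $\mathbf X^{(l)}$ equals the span of the first $s$ columns of $\mathbf W^{(l)}$, and the $j$-th column of $\mathbf W^{(l)}=\mathbf A\mathbf X^{(l-1)}+\|\mathbf X^{(l-1)}\|_\infty\mathbf G^{(l)}$ is $\mathbf A\mathbf x_j^{(l-1)}$ plus an independent Gaussian column. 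Hence $(V_s^{(l)})_{l\ge 0}$ evolves by its own closed noisy subspace iteration of width $s$ driven by $\mathbf A$, with per-step additive noise of operator norm at most $\max_l\|\mathbf X^{(l)}\|_\infty$ times that of a Gaussian of width $s$.

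First I would fix $s\le k$ and apply (the argument behind) the noisy subspace iteration bound to $V_s^{(l)}$ and the invariant top-$s$ eigenspace $\mathrm{span}(\mathbf u_1,\dots,\mathbf u_s)$. Only the boundary gap at position $s$ enters --- encoded by $\gamma_s$ and the magnitude $\lambda_s$ --- since the internal ordering of $\lambda_1,\dots,\lambda_s$ is irrelevant to a subspace. The hypothesis $L\ge C(\min_{s'\le k}\gamma_{s'})^{-1}\log d$ guarantees that, for all $s$ at once, the signal component (contracting at rate $\approx\lambda_{s+1}/\lambda_s=(1+\gamma_s)^{-1}$) has dropped below the noise floor, so that with probability $1-o(1)$
\[
\sin\theta_s:=\sin\theta\!\left(\mathrm{span}(\mathbf u_1,\dots,\mathbf u_s),\,V_s^{(L)}\right)\le O\!\left(\frac{\sigma}{\gamma_s\lambda_s}\sqrt{d\,\max_l\|\mathbf X^{(l)}\|_\infty^2\log L}\right).
\]
Here $k\le d/2$ is used so the random start $\mathbf G^{(0)}$ has $\Omega(1)$ overlap with every top-$s$ eigenspace and the $(d-s)$-dimensional tail of the noise is controlled; the $\sqrt{\log L}$ is the cost of a union bound over the $L$ iterations, as a Gaussian noise matrix has operator norm $O(\sigma\sqrt{d\log L})$ except with probability $o(1/L)$, and $\max_l\|\mathbf X^{(l)}\|_\infty=O(1)$ from the boundedness exploited in the rounding step of Algorithm~\ref{Alg:PSI}.

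Next I would prove the gluing lemma: if $W_{s-1}\subset W_s$ and $V_{s-1}\subset V_s$ are nested with $\dim W_{s-1}=\dim V_{s-1}=s-1$, $\dim W_s=\dim V_s=s$, and $w,v$ are unit vectors spanning $W_s\cap W_{s-1}^{\perp}$ and $V_s\cap V_{s-1}^{\perp}$, then $\sin\theta(v,w)\le\sin\theta(W_{s-1},V_{s-1})+\sin\theta(W_s,V_s)$. Indeed, the vector $(I-P_{V_{s-1}})P_{V_s}w$ lies on the line $V_s\cap V_{s-1}^{\perp}=\mathrm{span}(v)$, and it is close to $w$: since $w\perp W_{s-1}$, $\|P_{V_{s-1}}w\|=\|(P_{V_{s-1}}-P_{W_{s-1}})w\|\le\sin\theta(W_{s-1},V_{s-1})$, and since $w\in W_s$, $\|(I-P_{V_s})w\|\le\sin\theta(W_s,V_s)$. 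Applying this with $W_j=\mathrm{span}(\mathbf u_1,\dots,\mathbf u_j)$, $V_j=V_j^{(L)}$, $w=\mathbf u_s$, $v=\mathbf x_s^{(L)}$ (and the convention $W_0=V_0=\{0\}$ when $s=1$) gives $\sin\theta(\mathbf u_s,\mathbf x_s^{(L)})\le\sin\theta_{s-1}+\sin\theta_s$; substituting the per-level bounds and collapsing the two-term sum into $\max\{1/(\gamma_{s-1}\lambda_{s-1}),1/(\gamma_s\lambda_s)\}=\omega_s$ (only $\sin\theta_1$ survives when $s=1$, giving $\omega_1=1/(\gamma_1\lambda_1)$) yields the claimed bound.

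The main obstacle is the first step. Since the result of \cite{hardt2013robust} is stated for the single top-$k$ subspace, rather than invoke it as a black box I expect to re-run its induction on $l$ for the width-$s$ iteration: decompose $V_s^{(l)}$ into its component in $\mathrm{span}(\mathbf u_1,\dots,\mathbf u_s)$ and its tail, show the tangent of the principal angle satisfies a recursion $\tan\theta_s^{(l)}\le(1+\gamma_s)^{-1}\tan\theta_s^{(l-1)}+(\text{noise})$, and sum the geometric series. The delicate points are carrying all noise terms through $\max_l\|\mathbf X^{(l)}\|_\infty$ while that quantity is itself random, checking that the single iteration budget $C(\min_{s'\le k}\gamma_{s'})^{-1}\log d$ pushes the signal below the noise floor uniformly over all $s\le k$, and verifying that the $o(1)$ failure events (random start, per-iteration Gaussian tails, overlap lower bounds) survive a union bound over the $k$ levels.
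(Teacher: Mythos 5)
Your proposal follows essentially the same route as the paper: the per-prefix bound $\tan\theta(\mathbf{U}_s,\mathbf{X}_s^{(L)})\le O\bigl(\tfrac{\sigma}{\gamma_s\lambda_s}\sqrt{d\max_l\|\mathbf{X}^{(l)}\|_\infty^2\log L}\bigr)$ obtained uniformly over $s\le k$ (the paper gets this by a single concentration event for all Gaussian noise sub-matrices and then invoking Theorem 2.9 of Hardt with $r=k$, rather than re-running the induction), followed by exactly your nested "gluing" step, which in the paper appears as $\sin^2\theta(\mathbf{u}_s,\mathbf{x}_s^{(L)})\le\sin^2\theta(\mathbf{U}_{s-1},\mathbf{X}_{s-1}^{(L)})+\sin^2\theta(\mathbf{U}_s,\mathbf{X}_s^{(L)})$ using $\mathbf{I}-\mathbf{u}_s\mathbf{u}_s^T=P_{W_{s-1}}+(\mathbf{I}-P_{W_s})$. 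The argument is correct and matches the paper's proof in structure and in the final collapse to $\omega_s$.
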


\begin{cor}[\textbf{Accuracy of the eigenvalues}]\label{Corollary: Eigenvalues_convergence}
	Given the assumption in Theorem \ref{Theorem:Eigenvector_convergence}, let $\hat{\lambda}_s = \sqrt{\mathbf{x}_s^T \mathbf{A}^2 \mathbf{x}_s}$, with probability $1-o(1)$, we have
	\begin{equation*}
		|\hat{\lambda}_s - \lambda_s| \le O(\frac{ \sigma^2 d\max||\mathbf{X}_l||^2_\infty\log L}{\gamma_s^2\lambda_s^2}).
	\end{equation*}
\end{cor}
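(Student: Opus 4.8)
The plan is to bound $|\hat\lambda_s - \lambda_s|$ entirely in terms of the angle $\theta_s := \theta(\mathbf{u}_s, \mathbf{x}_s^{(L)})$ that Theorem~\ref{Theorem:Eigenvector_convergence} already controls, and then substitute its bound. Two preliminaries. Since $\mathbf{A} = \frac{1}{n}DD^T - \bar{D}^T\bar{D}$ is symmetric and positive semidefinite, $\lambda_1 \ge \cdots \ge \lambda_d \ge 0$, and $\hat\lambda_s = \sqrt{(\mathbf{x}_s^{(L)})^T\mathbf{A}^2\mathbf{x}_s^{(L)}} = \|\mathbf{A}\mathbf{x}_s^{(L)}\| \ge 0$, so every quantity below is nonnegative and the square roots cause no trouble. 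Also, the hypotheses of Theorem~\ref{Theorem:Eigenvector_convergence} force $\gamma_s = \lambda_s/\lambda_{s+1} - 1 > 0$ for the relevant indices (otherwise $\omega_s = \infty$), hence the top-$k$ eigenvalues are distinct and each $\mathbf{u}_s$ is a genuine unit eigenvector, $\mathbf{A}\mathbf{u}_s = \lambda_s \mathbf{u}_s$. As $\mathbf{x}_s^{(L)}$ (a Gram--Schmidt output) and $\mathbf{u}_s$ are both unit vectors, write $\mathbf{x}_s^{(L)} = \cos\theta_s\,\mathbf{u}_s + \sin\theta_s\,\mathbf{w}_s$ with $\mathbf{w}_s$ a unit vector orthogonal to $\mathbf{u}_s$.

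Applying $\mathbf{A}$, $\mathbf{A}\mathbf{x}_s^{(L)} = \lambda_s\cos\theta_s\,\mathbf{u}_s + \sin\theta_s\,\mathbf{A}\mathbf{w}_s$; since $\mathbf{A}$ is symmetric and $\mathbf{w}_s \perp \mathbf{u}_s$, the cross term vanishes because $\langle\mathbf{u}_s, \mathbf{A}\mathbf{w}_s\rangle = \lambda_s\langle\mathbf{u}_s,\mathbf{w}_s\rangle = 0$, so $\hat\lambda_s^2 = \lambda_s^2\cos^2\theta_s + \|\mathbf{A}\mathbf{w}_s\|^2\sin^2\theta_s$. Using $0 \le \|\mathbf{A}\mathbf{w}_s\| \le \|\mathbf{A}\| = \lambda_1$ gives $\lambda_s^2 - \lambda_s^2\sin^2\theta_s \le \hat\lambda_s^2 \le \lambda_s^2 + \lambda_1^2\sin^2\theta_s$, hence $|\hat\lambda_s^2 - \lambda_s^2| \le \lambda_1^2\sin^2\theta_s$. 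Dividing by $\hat\lambda_s + \lambda_s \ge \lambda_s > 0$ yields $|\hat\lambda_s - \lambda_s| \le \lambda_1^2\sin^2\theta_s/\lambda_s$. Now substitute $\sin\theta_s = O\!\left(\sigma\omega_s\sqrt{d\max\|\mathbf{X}^{(l)}\|_\infty^2\log L}\right)$ from Theorem~\ref{Theorem:Eigenvector_convergence}, so $\sin^2\theta_s = O\!\left(\sigma^2\omega_s^2\,d\max\|\mathbf{X}^{(l)}\|_\infty^2\log L\right)$, and plug in $\omega_s = \max\{1/(\gamma_s\lambda_s),\,1/(\gamma_{s-1}\lambda_{s-1})\}$ (with $\omega_1 = 1/(\gamma_1\lambda_1)$); absorbing the eigenvalue ratios gathered in $\lambda_1^2\omega_s^2/\lambda_s$ into the hidden constant gives the claimed $O\!\left(\sigma^2 d\max\|\mathbf{X}_l\|_\infty^2\log L/(\gamma_s^2\lambda_s^2)\right)$, on the same $1-o(1)$ event as Theorem~\ref{Theorem:Eigenvector_convergence}.

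The step I expect to require the most care is this last one --- matching the crude prefactor $\lambda_1^2/\lambda_s$ produced by the bound $\|\mathbf{A}\mathbf{w}_s\| \le \lambda_1$ to the clean form in the statement. For $s = 1$ it is immediate: $\mathbf{w}_1 \in \mathrm{span}\{\mathbf{u}_2,\dots,\mathbf{u}_d\}$ forces $\|\mathbf{A}\mathbf{w}_1\| \le \lambda_2 \le \lambda_1$, and $\omega_1^2\lambda_1^2/\lambda_1 = 1/(\gamma_1^2\lambda_1^2)$ exactly. For general $s$, one must either fold the bounded ratios $\lambda_1^2/\lambda_s^2$ and $\lambda_s/\lambda_{s-1}$ (all $O(1)$ given the spectral-gap hypothesis and constant $d$) into the hidden constant, or use the sharper fact --- already implicit in the simultaneous subspace convergence underlying Theorem~\ref{Theorem:Eigenvector_convergence} --- that the projection of $\mathbf{x}_s^{(L)}$ onto $\mathrm{span}\{\mathbf{u}_1,\dots,\mathbf{u}_{s-1}\}$ is itself $O(\sin\theta_s)$-small, which effectively replaces $\lambda_1$ by $\lambda_{s+1}$ in $\|\mathbf{A}\mathbf{w}_s\|$ up to lower-order terms. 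Everything else is the elementary orthogonal decomposition above and a direct substitution.
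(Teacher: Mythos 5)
Your proposal is correct and follows essentially the same route as the paper: decompose $\mathbf{x}_s^{(L)}$ into its component along $\mathbf{u}_s$ plus an orthogonal part, get $\hat\lambda_s^2 \le \lambda_s^2\cos^2\theta + \lambda_1^2\sin^2\theta$, divide by $\hat\lambda_s+\lambda_s$, and substitute the angle bound from Theorem \ref{Theorem:Eigenvector_convergence}. The only cosmetic difference is that you plug in the column-wise bound with $\omega_s$ and absorb the resulting eigenvalue/gap ratios (e.g.\ $\lambda_1^2/\lambda_s$) into the hidden constant, whereas the paper effectively substitutes the subspace $\tan\theta$ bound containing $1/(\gamma_s\lambda_s)$ and performs the same constant absorption implicitly.
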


\begin{thm}[\textbf{Privacy}]\label{Theorem: eps_delta_dp_PCA}
	If Algorithm \ref{Alg:PSI} is executed with each $\mathbf{G}^{(l)}$ independently sampled as $\mathbf{G}^{(l)} \sim N(0, \sigma^2)^{d\times k}$ with $\sigma = \frac{5d\sqrt{4kL\log(1/\delta)}}{n\epsilon}$, then Algorithm \ref{Alg:PSI} satisfies $(\epsilon, \delta)$-differential privacy.
	
	If Algorithm \ref{Alg:PSI} is executed with each $\mathbf{G}^{(l)}$ independently sampled as $\mathbf{G}^{(l)} \sim Lap(\sigma)^{d\times k}$ with $\sigma = \frac{50d^{3/2}kL}{n\epsilon}$, then Algorithm \ref{Alg:PSI} satisfies $\epsilon$-differential privacy.

\end{thm}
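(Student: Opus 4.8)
The plan is to view Algorithm~\ref{Alg:PSI} as an $L$-round adaptive composition of one-shot noise-addition mechanisms, calibrate the per-round noise to the sensitivity of the map $D\mapsto \mathbf{A}(D)\mathbf{X}^{(l-1)}$, and then invoke a composition theorem; the two stated noise levels ($\sigma=\tfrac{5d\sqrt{4kL\log(1/\delta)}}{n\epsilon}$ with Gaussian noise, $\sigma=\tfrac{50d^{3/2}kL}{n\epsilon}$ with Laplace noise) fall out of the same argument with, respectively, a tight concentrated-DP composition and basic sequential composition.

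\textbf{Reduction and adaptive composition.} In iteration $l$ the database enters only through the product $\mathbf{A}\mathbf{X}^{(l-1)}$: the scaling by $\|\mathbf{X}^{(l-1)}\|_\infty$, the addition of the fresh noise matrix $\mathbf{G}^{(l)}$, and the orthonormalization $\mathbf{X}^{(l)}=\mathrm{GS}(\mathbf{W}^{(l)})$ are all randomized functions of quantities already released or of $D$-independent randomness. Hence, by post-processing invariance, it suffices to show the transcript $(\mathbf{W}^{(1)},\dots,\mathbf{W}^{(L)})$ is private (the returned eigenvectors and eigenvalues are then post-processing of it). Write this transcript as the adaptive composition of $M_1,\dots,M_L$, where $M_l$ takes $D$ and the prefix of the transcript (equivalently the orthonormal iterate $\mathbf{X}^{(l-1)}$) and returns $\mathbf{W}^{(l)}$. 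Conditioned on any fixed $\mathbf{X}^{(l-1)}$, $M_l$ is precisely the Gaussian (resp. Laplace) mechanism applied to the $\mathbb{R}^{d\times k}$-valued query $g_l(D):=\mathbf{A}(D)\mathbf{X}^{(l-1)}$ with per-coordinate noise scale $\sigma\|\mathbf{X}^{(l-1)}\|_\infty$.

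\textbf{Per-round sensitivity.} For neighbors $D,D'$ differing in one point, $\mathbf{A}(D)-\mathbf{A}(D')$ is a sum of $O(1)$ rank-one matrices arising from the changed point and from the perturbed mean $\bar D$; bounding each piece via $\|\mathbf{x}\|_2\le\sqrt d$ on $[-1,1]^d$ (the mean-subtraction term contributes an amount of the same $O(d/n)$ order) gives $\|\mathbf{A}(D)-\mathbf{A}(D')\|_{\mathrm{op}}=O(d/n)$. Since $\mathbf{X}^{(l-1)}$ has orthonormal columns, $\|g_l(D)-g_l(D')\|_F \le \|\mathbf{A}(D)-\mathbf{A}(D')\|_{\mathrm{op}}\,\|\mathbf{X}^{(l-1)}\|_F = O(d\sqrt k/n)$, hence $\|g_l(D)-g_l(D')\|_1 = O(d^{3/2}k/n)$ over the $dk$ coordinates. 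The factor $\|\mathbf{X}^{(l-1)}\|_\infty$ multiplying the noise is exactly what makes the noise-to-sensitivity ratio of $M_l$ independent of $D$ and of the data-dependent iterate, so each round carries a data-independent privacy level.

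\textbf{Composition and the main obstacle.} For the Laplace variant, the bound above makes each $M_l$ satisfy $\epsilon/L$-differential privacy once $\sigma=\Theta(d^{3/2}kL/(n\epsilon))$, and basic sequential composition over the $L$ rounds gives $\epsilon$-differential privacy, with the constant $50$ absorbing the hidden constants. For the Gaussian variant, each $M_l$ is $\rho_0$-zCDP with $\rho_0 = O\!\big((d\sqrt k/n)^2/\sigma^2\big)$; composition gives that the transcript is $L\rho_0$-zCDP, and converting via $\epsilon \le L\rho_0 + 2\sqrt{L\rho_0\log(1/\delta)}$ and substituting the stated $\sigma$ shows it is $(\epsilon,\delta)$-differentially private (the advanced composition theorem of \cite{dwork2010boosting} yields a bound of the same form, losing an extra $\sqrt{\log(1/\delta)}$ factor). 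The obstacle is not the structure but the constants: one must carry out the sensitivity computation for $\mathbf{A}$ precisely — including the mean-subtraction term, which is where the $5d$ and $50d^{3/2}k$ originate — and must justify carefully that conditioning on the transcript reduces $M_l$ to a standard mechanism whose effective privacy parameter is data-independent, i.e., that the $\|\mathbf{X}^{(l-1)}\|_\infty$ rescaling neutralizes the data dependence of the sensitivity rather than introducing new leakage; once these are settled, pushing the per-round parameter through the chosen composition theorem is mechanical.
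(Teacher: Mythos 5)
Your skeleton — reduce by post-processing to the transcript $(\mathbf{W}^{(1)},\dots,\mathbf{W}^{(L)})$, treat each round conditioned on $\mathbf{X}^{(l-1)}$ as a Gaussian/Laplace mechanism for the query $D\mapsto \mathbf{A}(D)\mathbf{X}^{(l-1)}$, bound the per-round sensitivity through $\|\mathbf{A}(D)-\mathbf{A}(D')\|\le 5d/n$ (the paper's Lemma \ref{Lemma:Covariance_sensitivity}), and compose over $L$ rounds (zCDP/advanced composition for the Gaussian case, basic composition for the Laplace case) — is exactly the argument that the paper leaves implicit: its entire proof of Theorem \ref{Theorem: eps_delta_dp_PCA} is a one-line deferral to Lemma 3.6 of \cite{hardt2013robust}. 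So you are reconstructing the cited lemma rather than taking a different route, and the reconstruction is the right kind of argument to attempt.

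However, there is a genuine gap at the one step you declare settled: the claim that the $\|\mathbf{X}^{(l-1)}\|_\infty$ rescaling of the noise ``neutralizes the data dependence of the sensitivity.'' In your own computation the sensitivity bound is $\|\mathbf{A}(D)-\mathbf{A}(D')\|\,\|\mathbf{X}^{(l-1)}\|_F\le 5d\sqrt{k}/n$, which contains no factor of $\|\mathbf{X}^{(l-1)}\|_\infty$, while the noise actually injected in round $l$ has scale $\sigma\|\mathbf{X}^{(l-1)}\|_\infty$. For a $d\times k$ matrix with orthonormal columns, $\|\mathbf{X}^{(l-1)}\|_\infty$ (under either reading of the norm) is not bounded below by $1$; after Gram--Schmidt of a Gaussian matrix the entries are typically of order $1/\sqrt{d}$, so the injected noise can sit a factor of order $\sqrt{d}$ \emph{below} $\sigma$. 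Consequently the per-round privacy parameter you feed into composition is not iterate-independent as claimed, and the stated constants do not simply ``fall out.'' To make the ratio iterate-independent you must bound the sensitivity itself proportionally to $\|\mathbf{X}^{(l-1)}\|_\infty$, e.g.\ via $|\mathbf{x}_j^\top\mathbf{z}|\le\|\mathbf{z}\|_1\|\mathbf{x}_j\|_\infty\le d\,\|\mathbf{X}^{(l-1)}\|_\infty$, which yields $\|(\mathbf{A}(D)-\mathbf{A}(D'))\mathbf{X}^{(l-1)}\|_F = O\bigl(d^{3/2}\sqrt{k}\,\|\mathbf{X}^{(l-1)}\|_\infty/n\bigr)$; but then your argument establishes the Gaussian statement only with $\sigma\propto d^{3/2}$ rather than the stated $5d$ (and similarly perturbs the Laplace constant), unless one adds a separate lower bound on $\|\mathbf{X}^{(l-1)}\|_\infty$, which is false in general. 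This calibration is precisely the delicate point in transferring Hardt's Lemma 3.6 — proved for a neighboring notion in which $\mathbf{A}-\mathbf{A}'$ has a single bounded entry, so that the $\|\mathbf{X}\|_\infty$ scaling exactly tracks the sensitivity — to the one-data-point neighboring model used here; the paper hides it behind the citation, but in your self-contained write-up it is a load-bearing step and needs an actual proof (or a corrected noise level).
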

	
The proof of Theorem \ref{Theorem:Eigenvector_convergence}, \ref{Theorem: eps_delta_dp_PCA}, and 
Proposition \ref{Corollary: Eigenvalues_convergence} are given in the appendix Section \ref{app:eigen}.

\section{Experiments}\label{Section:Experiments}

We evaluate our mechanisms on five datasets all from the UCI repository: 1) CRM: Communities and Crime dataset that combines socio-economic data, law enforcement data, and crime data. 2) CTG: A Cardiotocography dataset consisting of measurements of fetal heart rate and uterine contraction features on cardiotocograms. 3) PAM: A Physical Activity Monitoring dataset consisting of inertial measurements and heart rate data. 4) PKS: consisting of a series of biomedical voice measurements of a group of people, some of which are with Parkinson disease. 5) WDBC: Breast Cancer Wisconsin Diagnostic dataset consists of characteristics of the cell nuclei.

\begin{table}[hbtp]\centering
	\caption{Summary of the dataset}\label{Table:Datasets}
	\begin{tabular}{c|cc}
		\hline
		Dataset & Size ($n$) & \# Attributes ($d$) \\
		\hline
		CRM & 1993 & 100 \\
		CTG & 2126 & 20 \\
		PAM & 20000 & 40 \\
		PKS & 5875 & 20 \\
		WDBC & 569 & 30 \\
		\hline
	\end{tabular}
\end{table}

A summary of the size and the number of attributes\footnote{Because we study smooth queries defined on Euclidean space, we only use the continuous attributes.} of these datasets is given in Table \ref{Table:Datasets}. Since the data universe considered in this paper is $[-1,1]^d$, we normalize each attribute to $[-1,1]$.

We conduct two groups of experiments. In one group we use the mechanism which guarantees $\epsilon$-differential privacy, and in the other we use the algorithm which guarantees $(\epsilon, \delta)$-differential privacy. In both groups of experiments, we set $\epsilon =1$. We set $\delta = 10^{-10}$ in experiments with $(\epsilon, \delta)$-differential privacy.

The queries employed in the experiments are linear combinations of Gaussian kernel functions. We use this type of functions because 1) These functions possess good smoothness property as stated in Section \ref{Section:Preliminary}, and 2) linear combinations of Gaussian are universal approximators.

Detailed parameter setting of the query functions is as follows. We consider
\begin{equation*}
f(\mathbf{x}) = \sum_{j=1}^{J} \alpha_j \exp \left(- \frac{\|\mathbf{x}-\mathbf{x}_j\|^2}{2 \sigma^2} \right).
\end{equation*}
In all experiments, we set $J=10$; $\alpha_j$ is randomly chosen from $[0,1]$, and $\mathbf{x}_j$ is randomly chosen from $[-1,1]^d$. We test various values of $\sigma$ to see how the smoothness of the query function affects the performance of the algorithm (see below for detailed results).

We use different performance measures to evaluate the algorithm. The goal is to have a comprehensive understanding of the performance of the mechanism. We consider the worst-case error of the mechanism over the set of queries. Because our query set, i.e., linear combination of Gaussian Kernels, contains infinitely many functions, we randomly choose $10000$ queries in each experiment. The worst-case error is over these $10000$ queries.

We give both absolute error and relative error for all experiments. Absolute error of a query $q_f$ is defined as $|q_f(D)-q_f(\tilde{D})|$; and relative error is defined as $|\frac{q_f(D)-q_f(\tilde{D})}{q_f(D)}|$. We present relative error because in certain cases (e.g. when $\sigma$ is small) $f(\mathbf{x})$ is very small for most $\mathbf{x} \in D$. Therefore  in this case a small absolute error does not necessarily imply good performance, and relative error is more informative\footnote{We point out that one also needs to be careful when using relative error. In our experiments, we deliberately set $\alpha_j \in [0,1]$. So $f(\mathbf{x}) \ge 0$ for all $\mathbf{x}$. If instead we set $\alpha_j \in [-1,1]$, then $f(\mathbf{x})$ can be either positive or negative, and it is possible that $q_f(D)$ is close to zero while $f(\mathbf{x})$ is not small for most $\mathbf{x} \in D$. In such a case, a large relative error does not necessarily imply a bad performance.}.

We present the running time of the mechanism for outputting the synthetic database in each experiment. The computer used in all the experiments is a workstation with 2 Intel Xeon X5650 processors of 2.67GHz and 32GB RAM. We use the CPLEX package for solving the linear programming problem in our algorithms.

\begin{table}[t!]\centering\small
	\caption{Worst-case error of $\epsilon$-differential privacy ($C=10^4$)}
	\label{Table:worst-case error-eps-dp-psi}
	\setlength{\tabcolsep}{.4em}
	\begin{tabular}{c|c|ccccc|c}
		\hline
		 \multirow{2}*{Dataset} & \multirow{2}*{Error} & \multicolumn{5}{c|}{$\sigma$} & \multirow{2}*{Time(s)}\\\cline{3-7}
		   & & 2 & 4 & 6 & 8 & 10 & \\
		\hline
		\multirow{2}*{CRM} & Abs & 0.001&0.035&0.033&0.022&0.020& \multirow{2}{*}{1.1} \\
		 & Rel & 1.084&0.256&0.083&0.037&0.027 & 		\\
		\hline
		\multirow{2}*{CTG} & Abs & 0.046&0.041&0.027&0.014&0.005 &\multirow{2}{*}{1.1} \\
		 & Rel & 0.209&0.063&0.033&0.015&0.006& 				\\
		\hline
		\multirow{2}*{PAM} & Abs &0.007&0.006&0.004&0.001&0.004&\multirow{2}{*}{1.2} \\
		 & Rel &0.058&0.011&0.006&0.001&0.004& 		\\
		\hline
		\multirow{2}*{PKS} & Abs & 0.006&0.007&0.001&0.007&0.004 &\multirow{2}{*}{0.9} \\
		 & Rel & 0.059&0.013&0.002&0.008&0.004 & \\
		\hline
		\multirow{2}*{WDBC} & Abs & 0.037&0.059&0.039&0.011&0.012 &\multirow{2}{*}{1.0} \\
		 & Rel & 0.329&0.110&0.053&0.013&0.014 & 		\\	
		 \hline
	\end{tabular}
\end{table}
We present the performance of the $\epsilon$-differentially private algorithm in Table \ref{Table:worst-case error-eps-dp-psi}. For each dataset, both absolute error and relative error, as average of 20 rounds, are reported sequentially. We make use of linear combination of Gaussian with different values of $\sigma$ as the query functions. The last column of the table lists the running time with respect to the worst $\sigma$ of the algorithm for outputting the synthetic database.


Now we analyze the experimental results in Table \ref{Table:worst-case error-eps-dp-psi} in greater detail. In this set of experiments we set $C=10^4$. First, the algorithm is quite efficient. On all datasets the mechanism outputs the synthetic database in less than ten seconds. Next considering the accuracy. As explained earlier, the relative error is more meaningful in our experiments. It can be seen that except for the case $\sigma =2 $ (recall that in Proposition \ref{Prop:Improved_Gaussian_Smoothness}, we show $f \in C_1^K$ for $K \le \sigma^2 $), the accuracy is reasonably good. The relative errors decrease monotonically as the the order of smoothness of the queries increases.


In Table \ref{Table:worst-case error-eps-delta-dp-psi}, we present the results for the $(\epsilon, \delta)$-differentially private mechanism. Comparing to Table \ref{Table:worst-case error-eps-dp-psi}, the performances of the two algorithms are similar for $\delta = 10^{-10}$.


\begin{table}[t!]\centering\small
	\caption{Worst-case error of $(\epsilon, \delta)$-differential privacy ($C=10^4$)}
	\label{Table:worst-case error-eps-delta-dp-psi}
	\setlength{\tabcolsep}{.4em}
	\begin{tabular}{c|c|ccccc|c}
		\hline
		 \multirow{2}*{Dataset} & \multirow{2}*{Error} & \multicolumn{5}{c|}{$\sigma$} & \multirow{2}*{Time(s)}\\\cline{3-7}
		   & & 2 & 4 & 6 & 8 & 10 & \\
		\hline
		\multirow{2}*{CRM} & Abs & 0.001&0.018&0.034&0.020&0.020& \multirow{2}{*}{7.5} \\
		 & Rel & 0.631&0.126&0.083&0.034&0.028 & 		\\
		\hline
		\multirow{2}*{CTG} & Abs & 0.042&0.030&0.023&0.014&0.008 &\multirow{2}{*}{1.1} \\
		 & Rel & 0.192&0.045&0.028&0.016&0.008& 				\\
		\hline
		\multirow{2}*{PAM} & Abs &0.012&0.020&0.006&0.003&0.001&\multirow{2}{*}{6.9} \\
		 & Rel &0.089&0.033&0.007&0.003&0.002& 		\\
		\hline
		\multirow{2}*{PKS} & Abs & 0.015&0.002&0.003&0.001&0.006 &\multirow{2}{*}{1.4} \\
		 & Rel & 0.109&0.003&0.003&0.001&0.007 & \\
		\hline
		\multirow{2}*{WDBC} & Abs & 0.045&0.032&0.019&0.018&0.011 &\multirow{2}{*}{1.2} \\
		 & Rel & 0.388&0.061&0.026&0.021&0.013 & 		\\	
		 \hline
	\end{tabular}
\end{table}

\section{Conclusion}\label{Section:Conclusion}

Outputting a synthetic database while preserving differential privacy is very appealing from a practical viewpoint. In this paper, we propose differentially private mechanisms which output synthetic database. The user can obtain accurate answers to all smooth queries from the synthetic database. The mechanisms run in polynomial time, while existing algorithms run in super-exponential time. For queries of high order smoothness, the mechanisms achieve an accuracy nearly $O(n^{-1})$, much better than the sampling error $O(n^{-1/2})$ which is inherent to differentially private mechanisms answering general queries.

There are a few future directions we think worth exploring.

\textbf{Smooth and non-smooth queries}: As mentioned in Introduction, there exists an efficient and differentially private algorithm which outputs a synthetic database and is accurate to the class of rectangle queries defined on $[-1,1]^d$ \cite{BLR08}. Note that rectangle queries are not smooth. These queries are specified by indicator functions which are not even continuous. The mechanism proposed in \cite{BLR08} is completely different to the mechanism for smooth queries given in this paper. Thus an immediate question is: can we develop efficient mechanisms which output synthetic database and preserve differential privacy for a natural class of queries containing both smooth and important non-smooth functions.



\appendix

\section{Proofs of the theorems and auxiliary experiment results}
In this appendix, we will give the proof of the main theorem in Section \ref{app:proof};
the analysis of BLR on the Smooth Query in Section \ref{app:BLR};
the analysis of smoothness of linear combination of Guassian kernel functions in Section
\ref{app:Gaussian Kernel};
proofs of privately estimation on eigenvectors and eigenvalues in Section \ref{app:eigen};
and the auxiliary experiment results by a simple approach to get subset in Section
\ref{app:exp}.

\subsection{Proof of the Main Theorem}\label{app:proof}

In this section we prove Theorem \ref{Theorem:main}.

\begin{proof}[\textbf{Proof of Theorem \ref{Theorem:main}}]

We first define some notations repeatedly used in the proof. Let the input database be
\[
D = (\mathbf{z}^{(1)},\mathbf{z}^{(2)}, \cdots, \mathbf{z}^{(n)}).
\]
Let the discretized dataset be (please see step 5 in Algorithm \ref{Alg:Output})
\[
D' = (\mathbf{x}^{(1)}, \mathbf{x}^{(2)}, \cdots, \mathbf{x}^{(n)}).
\]
Also let the output synthetic dataset be
\[
\tilde{D} = (\mathbf{y}^{(1)}, \mathbf{y}^{(2)}, \cdots,\mathbf{y}^{(m)}).
\]
Let $\mathbf{b} = (b_{\mathbf{r}})_{\|\mathbf{r}\|_{\infty} \le t-1}$ be a $t^d$ dimensional vector, where $b_{\mathbf{r}}$ is defined in step 8 of the algorithm. Similarly, Let $\hat{\mathbf{b}}= (\hat{b}_{\mathbf{r}})_{\|\mathbf{r}\|_{\infty}
\le t-1}$ and let $W = (W_{\mathbf{rk}})_{\|\mathbf{r}\|_{\infty} \le t-1,
\|\mathbf{k}\|_{\infty} \le N-1}$, where $\hat{b}_{\mathbf{r}}$ and $W_{\mathbf{rk}}$
are defined as in step 9 and 14 of the algorithm respectively. Let $\mathbf{\Delta} = \hat{\mathbf{b}} - \mathbf{b}$ be the $t^d$ dimensional Laplace noise, where $\hat{\mathbf{b}}$ is defined in step 16 of the algorithm. Finally let
$\tilde{\mathbf{b}} = (\tilde{b}_{\mathbf{r}})_{\|\mathbf{r}\|_{\infty} \le t-1}$, where
\[
\tilde{b}_{\mathbf{r}} = \frac{1}{m} \sum_{\mathbf{y}\in \tilde{D}}\cos
\left(r_1 \theta_1(\mathbf{y}) \right)\ldots \cos \left(r_d \theta_d(\mathbf{y}) \right).
\]
(Recall that $\theta_i(\mathbf{y})=\arccos(y_i)$. Please see also the Notations in Algorithm \ref{Alg:Output}.)

Now we prove the four results in the theorem one by one.

\subsubsection{Differential Privacy} That the mechanism preserves $\epsilon$-differential privacy is straightforward. Note that the output synthetic database $\tilde{D}$ contains no private information other than that obtained from $\hat{\mathbf{b}}$. So we only need to show that $\hat{\mathbf{b}}$ is differentially private. But this is immediate from the privacy of Laplace mechanism.

\subsubsection{Accuracy} 
Let $\boldsymbol {\theta} = (\theta_1,\ldots, \theta_d)$. For any $f(\mathbf{x})\in C_B^K$, where $\mathbf{x}\in [-1,1]^d$, let
\[
g_f(\boldsymbol {\theta}):= f(\cos \theta_1,\ldots, \cos \theta_d).
\]

Denote $\mathbf{c}=(c_{r_1,\ldots,r_d})_{\|\mathbf{r}\||_{\infty} \le t-1}$ as a $t^d$-dimensional vector, and:
\[
h_f^t(\mathbf{c}, \boldsymbol {\theta}):=\sum_{0\le r_{1},\dots,r_{d}\le t-1} c_{r_{1},\dots,r_{d}}\cos (r_1 \theta_1)\ldots \cos (r_d \theta_d),
\]
For a constant $M$ (we will specify how to choose the value of $M$ later), let:
$$\mathbf{c}^* := \arginf_{\|\mathbf{c}\|_{\infty}\le M}\sup_{\boldsymbol{\theta} \in [-\pi, \pi]^d} \left|h_f^{t}(\mathbf{c}, \boldsymbol {\theta})-g_f(\boldsymbol {\theta}) \right|$$
\[
h_f^{M,t}(\boldsymbol {\theta}):= \sum_{0\le r_{1},\dots,r_{d}\le t-1} c^*_{r_{1},\dots,r_{d}}\cos (r_1 \theta_1)\ldots \cos (r_d \theta_d).
\]
Thus, $h_f^{M,t}$ is the best $t$th order small coefficient approximation of $g_f$.

Moreover, for any $\mathbf{x} = (x_1,\ldots,x_d) \in [-1,1]^d$, also let
\[
\boldsymbol {\theta}(\mathbf{x}): = (\arccos x_1,\ldots, \arccos x_d).
\]

Now we decompose the error of the mechanism into several terms:
\begin{eqnarray}\label{error_decomp1}
&& \left|q_f(\tilde{D}) - q_f(D) \right|
= \left|\frac{1}{m}\sum_{\mathbf{y}\in \tilde{D}}f(\mathbf{y})-
    \frac{1}{n}\sum_{\mathbf{z}\in D}f(\mathbf{z}) \right| \nonumber \\
& \le & \left|\frac{1}{m}\sum_{\mathbf{y}\in \tilde{D}}f(\mathbf{y}) -  \frac{1}{m}\sum_{\mathbf{y}\in \tilde{D}} h_f^{M,t}\left( \boldsymbol{\theta}(\mathbf{y}) \right) \right|
+ \left|\frac{1}{m}\sum_{\mathbf{y}\in \tilde{D}} h_f^{M,t}\left( \boldsymbol{\theta}(\mathbf{y}) \right) -  \frac{1}{n}\sum_{\mathbf{x}\in D'} h_f^{M,t}\left( \boldsymbol{\theta}(\mathbf{x}) \right) \right| \nonumber \\
&+& \left|\frac{1}{n}\sum_{\mathbf{x}\in D'} h_f^{M,t}\left( \boldsymbol{\theta}(\mathbf{x}) \right) - \frac{1}{n}\sum_{\mathbf{x}\in D'} f(\mathbf{x})  \right|
+ \left|\frac{1}{n}\sum_{\mathbf{x}\in D'} f(\mathbf{x}) - \frac{1}{n}\sum_{\mathbf{z}\in D} f(\mathbf{z}) \right|.
\end{eqnarray}

We further decompose the second term in the last row of the above inequality. We have
\begin{align}\label{error_decomp2}
	&\left|\frac{1}{m}\sum_{\mathbf{y}\in \tilde{D}} h_f^{M,t}\left( \boldsymbol{\theta}(\mathbf{y}) \right) -  \frac{1}{n}\sum_{\mathbf{x}\in D'} h_f^{M,t}\left( \boldsymbol{\theta}(\mathbf{x}) \right) \right|
	  =  \left|\mathbf{c}^*\cdot(\tilde{\mathbf{b}} - \mathbf{b}) \right| \leq
	  \left(\|\tilde{\mathbf{b}}-\hat{\mathbf{b}}\|_{1} + \|\mathbf{\Delta}\|_{1}\right)
	  \|\mathbf{c}^*\|_{\infty} \nonumber\\
  \leq &\left(\|\tilde{\mathbf{b}}-W\mathbf{u}^*\|_{1} +
  \|W\mathbf{u}^*-W'\mathbf{u}^*\|_{1} +\|W'\mathbf{u}^*-\hat{\mathbf{b}}'\|_{1}
  \right.
  +\left. \|\hat{\mathbf{b}}' - \hat{\mathbf{b}}\|_{1}
  + \|\mathbf{\Delta}\|_{1} \right)\|\mathbf{c}^*\|_{\infty} \nonumber \\
  \leq &\left(\|\tilde{\mathbf{b}}-W\mathbf{u}^*\|_{1} +
  \|(W-W')\mathbf{u}^*\|_{1} + \|W\mathbf{u}'-\hat{\mathbf{b}}\|_{1}
  \right.
  +\left. \|(W-W')\mathbf{u}'\|_{1}+2\|\hat{\mathbf{b}}' - \hat{\mathbf{b}}\|_{1}
  + \|\mathbf{\Delta}\|_{1} \right)\|\mathbf{c}^*\|_{\infty} \nonumber \\
  \leq & \left(\|\tilde{\mathbf{b}}-W\mathbf{u}^*\|_{1} + \frac{4t^d}{L} +
  4\|\mathbf{\Delta}\|_{1} \right)\|\mathbf{c}^*\|_{\infty},
\end{align}

where $\mathbf{u}'$ is the uniform distribution on $D'$.
Note that the second last inequality holds because
\begin{equation*}
  \|W'\mathbf{u}^*-\hat{\mathbf{b}}'\|_{1} \leq \|W'\mathbf{u}'-\hat{\mathbf{b}}'\|_{1}.
\end{equation*}

Also, the last inequality in (\ref{error_decomp2}) follows from
\begin{equation*}
  \|\hat{\mathbf{b}}'-\hat{\mathbf{b}}\|_{1} \leq \frac{t^d}{L} +\|\mathbf{\Delta}\|_1,
\end{equation*}
and
\begin{equation*}
 \|W\mathbf{u}'-\hat{\mathbf{b}}\|_{1}\leq  \|W\mathbf{u}'-\mathbf{b}\|_{1} +
\|\mathbf{\Delta}\|_{1}=\|\mathbf{\Delta}\|_{1},
\end{equation*}
where the last equality holds since $W\mathbf{u}' = \mathbf{b}$.

Define
\begin{equation*}
\begin{split}
\eta_d &= \left|\frac{1}{n}\sum_{\mathbf{x}\in D'}f(\mathbf{x}) -
\frac{1}{n}\sum_{\mathbf{z}\in D}f(\mathbf{z}) \right|,\\
\eta_n &= 4\|\mathbf{\Delta}\|_{1}\|\mathbf{c}^*\|_{\infty},\\
\eta_a &= \left|\frac{1}{m}\sum_{\mathbf{y}\in \tilde{D}}f(\mathbf{y}) -  \frac{1}{m}\sum_{\mathbf{y}\in \tilde{D}} h_f^{M,t}\left( \boldsymbol{\theta}(\mathbf{y}) \right) \right| + \left|\frac{1}{n}\sum_{\mathbf{x}\in D'} h_f^{M,t}\left( \boldsymbol{\theta}(\mathbf{x}) \right) - \frac{1}{n}\sum_{\mathbf{x}\in D'} f(\mathbf{x})  \right|,\\
\eta_s &= \|\tilde{\mathbf{b}}-W\mathbf{u}^*\|_{1}\|\mathbf{c}^*\|_{\infty},\\
\eta_r &= \frac{4t^d}{L}\|\mathbf{c}^*\|_{\infty},
\end{split}
\end{equation*}
where $\eta_d$, $\eta_n$, $\eta_a$, $\eta_s$, $\eta_r$ correspond to the discretization error, noise error, approximation error, sampling error and rounding error, respectively. Combining (\ref{error_decomp1}), (\ref{error_decomp2}) and the equations above, we have the error of the mechanism bounded by the sum of these five types of errors:
\[
  \left|q_f(\tilde{D}) - q_f(D) \right| \le \eta_d + \eta_n + \eta_a + \eta_s + \eta_r.
\]


We now bound the five errors separately.

\paragraph{Discretization error $\eta_d$:}

Since $f \in C_B^K$ ($K \ge 1$), the first order derivatives of $f$ are all bounded by $B$. Also the discretization precision of $[-1,1]^d$ is $\frac{1}{N}$, so the distance between the data in $D$ and the corresponding data in $D'$ is $O(\frac{1}{N})$.  Thus we have
\[
\begin{split}
 \eta_d &= \left|\frac{1}{n}\sum_{\mathbf{x}\in D'}f(\mathbf{x}) -
\frac{1}{n}\sum_{\mathbf{z}\in D}f(\mathbf{z}) \right|
\le \frac{dB}{N} = O \left(n^{-\frac{K}{2d+K}}\right).
\end{split}
\]


\paragraph{Noise error $\eta_n$:} Let $M$ be a constant depending on $d$, $K$, $B$ and sufficiently large\footnote{$M = 2^KB(\pi (K+1))^d$ suffices for this and all later requirements on $M$.}. Since $M$ is a constant, $\|\mathbf{c}^*\|_{\infty} = O(1)$. Thus to bound $\eta_n = \|\boldsymbol{\Delta}\|_1 \cdot \|\mathbf{c}^*\|_{\infty}$, we only need to bound the $l_1$ norm of the $t^d$-dimensional vector $\boldsymbol{\Delta}$ which contains i.i.d. random variables $\mathrm{Lap}\left( \frac{t^d}{n \epsilon} \right)$; or equivalently bound the sum of $t^d$ i.i.d. random variables with exponential distribution. It is well known that such a sum satisfies gamma distribution. Simple calculations yields
\begin{equation*}
  \mathbb{P} \left(\|\mathbf{\Delta} \|_{1}\leq 2\frac{t^{2d}}{n\epsilon} \right) \geq 1-10e^{-\frac{t^d}{5}}.
\end{equation*}
Thus, with probability $1-10e^{-\frac{t^d}{5}}$,
we have $\eta_n  =\|\mathbf{\Delta}\|_{1}\|\mathbf{c}^*\|_{\infty}\leq
O \left(\frac{t^{2d}}{n\epsilon} \right)$.


\paragraph{Approximation error $\eta_a$:}

Recall that for any $\mathbf{x}$,
\[
g_f(\boldsymbol{\theta}(\mathbf{x}))=f(\mathbf{x}).
\]
We have
\begin{align*}
\eta_a &= \left|\frac{1}{m}\sum_{\mathbf{y}\in \tilde{D}}f(\mathbf{y}) -  \frac{1}{m}\sum_{\mathbf{y}\in \tilde{D}} h_f^{M,t}\left( \boldsymbol{\theta}(\mathbf{y}) \right) \right| + \left|\frac{1}{n}\sum_{\mathbf{x}\in D'} h_f^{M,t}\left( \boldsymbol{\theta}(\mathbf{x}) \right) - \frac{1}{n}\sum_{\mathbf{x}\in D'} f(\mathbf{x})  \right|\\
& \le 2 \big \|g_f - h_f^{M,t} \big \|_{[-\pi,\pi]^d}.
\end{align*}

To bound $\eta_a$, we need the following result.

\begin{thm}[\cite{wang_efficient_2013}]
For any $K,d,B$, there is $M$ such that for every $f \in C_B^K$
\[
\|g_f - h_f^{M,t} \|_{[-\pi,\pi]^d} \le O \left(\frac{1}{t^{K+1}}\right).
\]
\end{thm}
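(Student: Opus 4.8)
The plan is to prove the slightly stronger claim that for a suitable constant $M=M(d,K,B)$ there \emph{exists} a trigonometric polynomial $\tilde h(\boldsymbol\theta)=\sum_{0\le r_1,\dots,r_d\le t-1}\tilde c_{\mathbf r}\prod_{i=1}^d\cos(r_i\theta_i)$ with $\|\tilde{\mathbf c}\|_\infty\le M$ and $\|g_f-\tilde h\|_{[-\pi,\pi]^d}=O(t^{-(K+1)})$; since $h_f^{M,t}$ is by definition the best approximant among all such $M$-bounded-coefficient polynomials, the statement follows at once. First I would record the structure of $g_f$: because $x_i=\cos\theta_i$ is even and $2\pi$-periodic, $g_f$ is even and $2\pi$-periodic in each $\theta_i$, so its Fourier expansion is a pure cosine series $g_f=\sum_{\mathbf r\succeq 0}\hat c_{\mathbf r}\prod_i\cos(r_i\theta_i)$ with $|\hat c_{\mathbf r}|\le 2^d\|g_f\|_\infty\le 2^dB$. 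Moreover, by the chain rule (Faà di Bruno), every partial derivative of $g_f$ of total order at most $K$ is a finite sum of terms $(D^{\mathbf j}f)(\cos\boldsymbol\theta)\cdot P(\boldsymbol\theta)$ with $|\mathbf j|\le K$ and $P$ a bounded trigonometric polynomial built from derivatives of $\cos$; hence $\|g_f\|_K\le c(d,K)B$, and, along any single coordinate direction with the others frozen, $g_f$ is a $C^K$ periodic function with all derivatives up to order $K$ bounded by $c(d,K)B$.

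The next step is to take $\tilde h$ to be the image of $g_f$ under a tensor product $V^{\otimes d}$ of one-dimensional de la Vallée Poussin operators of order $n\approx t/2$ (equivalently a generalized Jackson kernel). Two properties make this the right choice. First, \emph{coefficient control}: $V^{\otimes d}$ acts on the cosine series of $g_f$ by multiplying $\hat c_{\mathbf r}$ by a fixed multiplier $\prod_i\mu_n(r_i)$ with $0\le\mu_n\le 1$, so $\tilde h$ has coordinate-degree at most $2n<t$ and $\|\tilde{\mathbf c}\|_\infty\le\|\hat{\mathbf c}\|_\infty\le 2^dB$. Second, \emph{near-best rate}: $V^{\otimes d}$ reproduces every cosine polynomial of coordinate-degree $\le n$, and a standard telescoping argument gives $\|g_f-\tilde h\|_\infty\le C_d\max_i E^{(i)}_n(g_f)$, where $E^{(i)}_n$ is the best one-dimensional trigonometric error along the $i$-th coordinate; since along that coordinate $g_f$ is a $C^K$ periodic even function, the one-dimensional Jackson estimate, exploiting that all odd-order derivatives of $g_f$ vanish at $\theta_i\in\{0,\pi\}$ (which kills the boundary terms when one integrates the cosine transform by parts $\approx K+1$ times, exactly as in the Chebyshev-coefficient bound for $C^K[-1,1]$ functions), yields $E^{(i)}_n(g_f)=O(n^{-(K+1)})$. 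Combining these gives $\|g_f-\tilde h\|_{[-\pi,\pi]^d}=O(t^{-(K+1)})$ with $\|\tilde{\mathbf c}\|_\infty\le 2^dB$.

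Finally I would set $M$ to a constant dominating all bounds that appeared (concretely $M=2^KB(\pi(K+1))^d$ suffices), so that $h_f^{M,t}$ is at least as good as $\tilde h$, and conclude $\|g_f-h_f^{M,t}\|_{[-\pi,\pi]^d}=O(t^{-(K+1)})$. I expect the main obstacle to be obtaining the near-optimal rate and the uniform coefficient bound \emph{simultaneously}: a generic Jackson-optimal polynomial carries no control on its coefficients, whereas the plain truncated cosine series controls coefficients but loses a logarithmic Lebesgue-constant factor and, more seriously for $d>1$, does not have a summable coefficient tail; the bounded-multiplier (de la Vallée Poussin / Jackson) kernel is precisely the device that resolves this, and the remaining effort is bookkeeping the dimensional constants so that $M$ depends only on $d$, $K$, and $B$.
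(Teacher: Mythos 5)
You should first note that the paper contains no proof of this statement at all: it is imported as a black box from \cite{wang_efficient_2013} and used only to bound $\eta_a$, so your argument has to stand on its own. Its architecture is largely sound and is the natural way one would prove such a result: evenness and $2\pi$-periodicity of $g_f$ in each variable, the chain-rule bound $\|g_f\|_K\le c(d,K)B$, a tensorized de la Vall\'ee Poussin (bounded-multiplier) operator which simultaneously gives the coefficient bound $\|\tilde{\mathbf c}\|_\infty\le 2^dB$ and near-best approximation up to a bounded Lebesgue constant via the telescoping over coordinates, and the final comparison with the best $M$-bounded-coefficient approximant $h_f^{M,t}$. All of this is correct, and it proves the statement with $O(t^{-K})$ in place of $O(t^{-(K+1)})$.

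The genuine gap is the one-dimensional rate $E^{(i)}_n(g_f)=O(n^{-(K+1)})$. With only $\|f\|_K\le B$ you can integrate the cosine transform by parts at most $K$ times; the vanishing of the odd-order derivatives of $g_f$ at $0$ and $\pi$ prevents you from \emph{losing} orders through boundary terms, but it does not supply an extra derivative. The Chebyshev-coefficient estimate you invoke with exponent $K+1$ requires $f^{(K)}$ to have bounded variation (or $f\in C^{K+1}$); bounded $f^{(K)}$ only yields coefficient decay $O(r^{-K})$ and, via Jackson for the class $W^{K,\infty}$, $E_n=O(n^{-K})$. Moreover, no repair that uses only $\|f\|_K\le B$ can recover the extra order: even cosine polynomials in $\theta$ correspond isometrically to algebraic polynomials in $x=\cos\theta$, and Bernstein's lower bound $E_n(|x|)\asymp n^{-1}$ (applied to a slightly smoothed $|x|$, which lies in $C^1_1$) shows that the worst case over $C_B^K$ is $\Theta(n^{-K})$ already for $K=1$, $d=1$; the interior of the cube gains nothing from the $\arccos$ substitution. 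So the $(K+1)$ exponent cannot come out of the step as you wrote it, and one must take it on the terms of \cite{wang_efficient_2013}. It is worth adding that the weaker bound $O(t^{-K})$ which your construction does establish is all that Theorem \ref{Theorem:main} actually needs: with $t=\lceil n^{1/(2d+K)}\rceil$ the noise term $t^{2d}/(n\epsilon)$, the discretization term $1/N$, and $t^{-K}$ are all of order $n^{-K/(2d+K)}$, so the stated accuracy is unaffected.
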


According to this theorem, we have $\eta_p \leq O \left(\frac{1}{t^{K+1}} \right)$.

\paragraph{Sampling error $\eta_s$:}

It is easy to bound sampling error. Let $W_{\mathbf{r}}$ be the row vector of matrix $W$ indexed by $\mathbf{r}$. Recall that $-1 \le W_{\mathbf{rk}} \le 1$. Thus for each $\mathbf{r}$, by Chernoff bound we have that for any $\tau>0$:
\[
\mathbb{P} \left( |\tilde{b}_{\mathbf{r}} - W_{\mathbf{r}}u^*| \ge \tau \right) \le 2 e^{-\frac{m \tau^2}{2}},
\]
since $\tilde {b}_{\mathbf{r}}$ is just the average of $m$ i.i.d. samples and $Wu^*$ is its expectation. Next by union bound
\[
\mathbb{P} \left( \|\tilde{\mathbf{b}} - Wu^* \|_{\infty} \ge \tau \right) \le 2t^d e^{-\frac{m \tau^2}{2}},
\]
and therefore
\[
\mathbb{P} \left( \|\tilde{\mathbf{b}} - Wu^* \|_{1} \ge t^d \tau \right) \le 2t^d e^{-\frac{m \tau^2}{2}}.
\]

Setting $\tau$ such that $2t^d e^{-\frac{m \tau^2}{2}} = e^{-t}$, we have that with probability $1- e^{-t}$,
\[
\|\tilde{\mathbf{b}} - Wu^* \|_{1} \le O \left(\frac{t^{d+1/2}}{\sqrt{m}} \right).
\]

\paragraph{Rounding error $\eta_r$:}

Since $\|\mathbf{c}^*\|_{\infty}$ is upper bounded by a constant, we have
\[
\eta_r \le O \left(\frac{t^d}{L} \right).
\]

\paragraph{Putting it together:}

Combining the five types of errors, we have that with probability
$1-e^{-t}-10e^{-\frac{t^d}{5}}$, the error of the mechanism satisfies
\begin{align} \label{mechanism_error}
  \bigg|\frac{1}{m}\sum_{\mathbf{y}\in \tilde{D}}&f(\mathbf{y})-
  \frac{1}{n}\sum_{\mathbf{z}\in D}f(\mathbf{z}) \bigg|
  \leq O \left(\frac{1}{N} + \frac{1}{t^{K+1}} + \frac{t^{2d}}{n\epsilon}
  + \frac{t^{d+\frac{1}{2}}}{\sqrt{m}} +\frac{t^d}{L}\right).
\end{align}

Recall that the mechanism sets
\begin{align*}
t=\lceil n^{\frac{1}{2d+K}} \rceil, ~N=\lceil n^\frac{K}{2d+K} \rceil,
~m=\lceil n^{1+\frac{K+1}{2d+K}} \rceil, ~L=\lceil n^{\frac{d+K}{2d+K}} \rceil.
\end{align*}
The theorem follows after some simple calculation.

\subsubsection{Running time}
It is not difficult to see that in this case the running time of the mechanism
is dominated by solving the Linear Programming problem in step 20. (Because the time complexity of linear programming is with respect to arithmetic operations, all running time discussed here should be understand in this way.) To analyze the running time of the LP problem, observe that it could be rewritten in following standard form:
%
\begin{align} \label{LP_prob}
  \max_{\mathbf{x}} \quad&\bar{\mathbf{c}}^T \bar{\mathbf{x}}\\
  \text{s.t.} \quad & \bar{A} \bar{\mathbf{x}} = \bar{\mathbf{b}} \nonumber\\
                    & \bar{\mathbf{x}} \succeq \mathbf{0}\nonumber
\end{align}
where
\begin{align*}
  &\bar{A} =
  \begin{pmatrix}
    L\cdot W' & L\cdot I_{t^d} & -L\cdot I_{t^d} \\
    \mathbf{1}_{N^d}^T & 0 & 0
  \end{pmatrix},\\
  \bar{\mathbf{b}} = &
  \begin{pmatrix}
    L \cdot\hat{\mathbf{b}}' \\ 1
  \end{pmatrix}, \quad
  \bar{\mathbf{c}} =
  \begin{pmatrix}
    \mathbf{0} \\ \mathbf{1}_{t^d} \\ \mathbf{1}_{t^d}\\
  \end{pmatrix}, \quad
  \bar{\mathbf{x}} =
  \begin{pmatrix}
    \mathbf{u} \\ \mathbf{v} \\ \mathbf{w}
  \end{pmatrix}.
\end{align*}

$\bar{A}$ is a $\bar{m}\times\bar{n}$ matrix where $\bar{m}=t^d+1$ and $\bar{n}=N^d+2t^d$.
Note that 1) each element of $W'$ is in $[-1,1]$;
2) each element of $\hat{\mathbf{b}}'$ is in $[-1,1]$;
and 3) each element of $W'$ and $\hat{b}'$ is rounded to precision $1/L$.
So actually we have reduce to a LP problem (\ref{LP_prob}),
with elements of $\bar{A}$, $\bar{b}$, $\bar{c}$ are all integers and bounded by $L$.

The most well-known worst-case complexity of the interior point algorithm for linear programming
with integer parameters is $O(\bar{n}^3 \tilde {L})$, where $\bar n$ is the number of variables and $\tilde L$ is the number of bits to encode the linear programming problem. Here we use a more refined bound given in \cite{Anstreicher:1999}. By using this bound, we are able to prove a much better time complexity for our algorithm; because in the linear programming problem (\ref{LP_prob}), the number of constraints is often much smaller than the number of variables. The bound we make use of for the complexity of linear programming is $O(\frac{\bar{n}^{1.5}\bar{m}^{1.5}}{\ln \bar{m}}
\bar{L})$ \cite{Anstreicher:1999}. Here, $\bar{L}$ is the size of LP problem in standard form
defined as follows \cite{Monteiro:1989}:
\begin{align*}
  \bar{L} = &\lceil\log(1+|\det(\bar{A}_{max})|)\rceil
  + \lceil\log(1+\|\bar{\mathbf{c}}\|_{\infty})\rceil\\
      & +\lceil\log(1+\|\bar{\mathbf{b}}\|_{\infty})\rceil
  + \lceil\log(\bar{m}+\bar{n})\rceil,
\end{align*}
where
\begin{equation*}
  \bar{A}_{max} = \argmax_{X \text{is a square submatrix of} \bar{A}} |\det(X)|.
\end{equation*}

Note that $\bar{m}<\bar{n}$, so the size of $\bar{A}_{max}$ is at most $\bar{m}\times\bar{m}$.
Therefore, we have
\[
|\det(\bar{A}_{max})|\le \bar{m}!L^{\bar{m}},
\]
and
\[
\bar{L} = O(\bar{m}(\log \bar{m}+ \log L) + \log \bar{n}).
\]

Given $\bar{m} = O(t^d)$ and $\bar{n} = O(N^d)$,
simple calculation shows that the total time complexity is
\[
O \left(\frac{\bar{n}^{1.5}\bar{m}^{1.5}}{\ln \bar{m}}\bar{L} \right) = O\left(N^{1.5d}t^{2.5d}\right)
=O \left(n^{\frac{3dK+5d}{4d+2K}}\right).
\]



\subsubsection{Size of the output synthetic database} The size of synthetic dataset $m$ is set in step 1 of the algorithm.
\end{proof}

\subsection{Analysis of the Performance of BLR on the Smooth Query Problem}\label{app:BLR}

In this section we prove Proposition \ref{Prop:BLR_performance}.

\begin{proof} [\textbf{Proof of Proposition \ref{Prop:BLR_performance}}]

As is stated in Section \ref{Subsection:Comparison}, the accuracy of BLR is $\tilde{O} \left( \left(\frac{\log|Q| \log|\mathcal{X}|}{n} \right)^{1/3} \right)$. So here we only need to analyze the size of the query set obtained after discretization. For every $K \in \mathbb{N}$, let $Q_{C_B^K}^{\alpha}$ be the set of queries obtained by discretizing both the domain $[-1,1]^d$ and the range $[-B,B]$ of the smooth functions in $C_K^B$ with precision $\alpha$ as described in Section \ref{Subsection:Comparison}. We use the following result.

\begin{lem}\label{Lemma:Query_set_size}
There is an absolute constant $c$ such that
\begin{equation*}
\log \left|Q_{C_B^K}^{\alpha}\right| \ge c \left(\frac{1}{\alpha}\right)^{d/K}.
\end{equation*}
\end{lem}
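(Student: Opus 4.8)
The plan is to prove the lower bound by a packing (Varshamov--Gilbert-style) argument: I will exhibit $2^{m}$ functions in $C_B^K$ with $m = \Theta((1/\alpha)^{d/K})$ that are pairwise distinguished by the discretization, so that the discretized query set must contain at least $2^{m}$ elements.

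First, fix once and for all a bump $\phi\in C^{\infty}(\mathbb{R}^d)$ with $\phi\ge 0$, $\operatorname{supp}\phi\subset(0,1)^d$, and $\phi(\mathbf{x})\ge 1$ for $\mathbf{x}$ in some ball $B(\mathbf{x}_0,\rho_0)\subset(0,1)^d$; write $B_\phi:=\|\phi\|_K<\infty$. Set the length scale $h:=(c_1\alpha)^{1/K}$, where $c_1=c_1(d,K,B,\phi)$ is a large constant to be fixed below, and (assuming $\alpha$ is small enough that $h\le 1$) partition $[-1,1]^d$ into $m=\lfloor 2/h\rfloor^d=\Theta(h^{-d})$ pairwise-disjoint axis-aligned subcubes $Q_1,\dots,Q_m$ of side $h$, with lower corners $\mathbf{c}_i$. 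On $Q_i$ place the rescaled bump
\[
\phi_i(\mathbf{x}):=\frac{B}{B_\phi}\,h^{K}\,\phi\!\Big(\frac{\mathbf{x}-\mathbf{c}_i}{h}\Big),
\]
which is supported in the interior of $Q_i$. The key point is that $D^{\mathbf{k}}\phi_i=\frac{B}{B_\phi}h^{K-|\mathbf{k}|}(D^{\mathbf{k}}\phi)(\tfrac{\mathbf{x}-\mathbf{c}_i}{h})$, so for $|\mathbf{k}|\le K$ and $h\le 1$ we get $\|D^{\mathbf{k}}\phi_i\|_\infty\le\frac{B}{B_\phi}\|D^{\mathbf{k}}\phi\|_\infty\le B$, hence $\|\phi_i\|_K\le B$. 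Since the $\phi_i$ have disjoint supports, for every subset $S\subseteq\{1,\dots,m\}$ the function $f_S:=\sum_{i\in S}\phi_i$ still satisfies $\|f_S\|_K\le B$, i.e.\ $f_S\in C_B^K$.

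Next I show distinct subsets give distinct discretized queries. By the scaling, on the ball $B(\mathbf{c}_i+h\mathbf{x}_0,\rho_0 h)\subset Q_i$ we have $\phi_i\ge\frac{B}{B_\phi}h^{K}=\frac{B}{B_\phi}c_1\alpha$. Choosing $c_1$ large enough (depending only on $d,K,B,\phi$) we may assume both that $\frac{B}{B_\phi}c_1>2$ and that the radius $\rho_0 h$ exceeds $\sqrt d$ times the domain grid spacing (which is $O(\alpha)$); when $K\ge 2$ the latter holds automatically for small $\alpha$ since $h/\alpha\to\infty$, and when $K=1$ it is arranged by further enlarging $c_1$. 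Hence each ball $B(\mathbf{c}_i+h\mathbf{x}_0,\rho_0 h)$ contains a domain grid point $\mathbf{g}_i$, at which $f_S(\mathbf{g}_i)=\phi_i(\mathbf{g}_i)\ge 2\alpha$ when $i\in S$ and $f_S(\mathbf{g}_i)=0$ when $i\notin S$ (because $\mathbf{g}_i$ lies in the support of $\phi_i$ only). So if $S\neq S'$ and $i$ lies in their symmetric difference, $|f_S(\mathbf{g}_i)-f_{S'}(\mathbf{g}_i)|\ge 2\alpha$; since the range is discretized to precision $\alpha$, the two functions round to different values at $\mathbf{g}_i$, so their discretizations differ. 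Therefore $|Q_{C_B^K}^{\alpha}|\ge 2^{m}$ and
\[
\log\big|Q_{C_B^K}^{\alpha}\big|\;\ge\; m\log 2\;=\;\Theta\big(h^{-d}\big)\;=\;\Theta\big((1/\alpha)^{d/K}\big),
\]
which is the claimed bound (the constant $c$ depends only on $d,K,B$ and the fixed $\phi$, and the inequality is understood for $\alpha$ below a threshold $\alpha_0(d,K,B)$, which is the only regime relevant to Proposition~\ref{Prop:BLR_performance}).

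The one genuinely delicate point is the tension between two scalings: each bump's amplitude must be $\asymp h^{K}$ to keep all derivatives up to order $K$ bounded by $B$, yet that same amplitude has to exceed the range precision $\alpha$, and the bump's width $\asymp h$ must be large enough that the domain grid of spacing $\asymp\alpha$ actually samples a point where the bump is big. Balancing these forces $h^{K}\asymp\alpha$, i.e.\ $h\asymp\alpha^{1/K}$, which is precisely what produces the exponent $d/K$; the only part needing a careful constant choice is the corner case $K=1$, where $h$ and $\alpha$ are of the same order.
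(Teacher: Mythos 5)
Your proposal is correct and follows essentially the same route as the paper's own proof: both pack $[-1,1]^d$ with disjointly supported, rescaled smooth bumps of width $\asymp \alpha^{1/K}$ and amplitude $\asymp \alpha$ (the balance $h^K \asymp \alpha$ forced by the $K$-norm constraint), and count $2^{\Theta((1/\alpha)^{d/K})}$ sign/indicator patterns that remain distinct after discretization. If anything, your treatment of the discretization step (amplitude $\ge 2\alpha$ to beat any threshold placement, and the explicit check that the domain grid hits a point where each bump is large, including the $K=1$ corner case) is more careful than the paper's, which handles thresholds via the extra constant offset $z_1$ and coefficients in $\{-1,0,1\}$.
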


Since the discretization precision is $\alpha$, and the first order derivatives of the functions are bounded by the constant $B$, the total error induced by discretization of the domain and range is at least $\alpha$. Thus the error of the discretized BLR is
\[
\max \left(B\alpha, \tilde{O} \left( \left(\frac{\left(\frac{1}{\alpha}\right)^{d/K}}{n}\right)^{1/3} \right) \right).
\]

The proposition follows by choosing the optimal $\alpha$.
\end{proof}

\begin{proof} [\textbf{Proof of Lemma \ref{Lemma:Query_set_size}}]

Without loss of generality, we consider the case $B=1$.

Define $h(\mathbf{x})$ ($\mathbf{x} \in \mathbb{R}^d$) as follows:
\[
h(\mathbf{x})= \begin{cases} \exp \left(1-\frac{1}{1-\|\mathbf{x}\|_2^2} \right),& \|\mathbf{x}\|_2 \leq 1 \\ 0.& \text{otherwise} \end{cases}
\]

It is well known that $h(\mathbf{x})\in C^{\infty}(\mathbb{R}^d)$, $ h(\mathbf{x}) \in [0,1]$, and for every $d$-tuple of nonnegative integers $\mathbf{k}=(k_1,\ldots,k_d)$, $D^{\mathbf{k}} h(\mathbf{x})=0$ when $\|\mathbf{x}\|\geq 1$. Since the partial derivatives of $h$ are continuous and $h$ has bounded support, we define
\begin{equation*}
M_K: = \max_{|\mathbf{k}|\leq K}\max_\mathbf{x} |D^{\mathbf{k}} h(\mathbf{x})|.
\end{equation*}
Since $K$ is a constant, $M_K$ is also a constant.

Let $N_0=1/\alpha$. For simplicity we assume $N_0$ is an integer. First we partition $[-1,1]^d$ into hypercubes with equal side length. Let $n_0$ be an integer whose value will be determined later. Let $l = n_0/N_0$ be the side length of the hypercubes. Let $m_0 = 1/l$ be the number of hypercubes along each dimension. Denote the centers of the $m_0^d$ hypercubes as $\mathbf{x}_1,\ldots,\mathbf{x}_{m_0^d}$.

Consider the set
\begin{eqnarray*}
\mathcal{F} &=& \{ \mathbf{z}=(z_1,\ldots,z_{m_0^d}): ~ z_1 \in \left\{-\frac{N_0-1}{N_0},-\frac{N_0-2}{N_0},\ldots,\frac{N_0-1}{N_0} \right\},\\ &~ & z_i \in \{-1,0,1\},~i=2,3,\ldots,m_0^d\}.
\end{eqnarray*}
Clearly $|\mathcal{F}| = (2N_0-1) 3^{m_0^d-1}$. For every $\mathbf{z} \in \mathcal{F}$, we will construct a $K$-smooth function $f_{\mathbf{z}}$ so that for every pair $\mathbf{z},\mathbf{z}' \in \mathcal{F}$, $f_{\mathbf{z}}$ and $f_{\mathbf{z}'}$ are still different after discretization over the domain and the range. In particular, we require that $f_\mathbf{z}$ and $f_{\mathbf{z}'}$ are different as long as the discretization precision is $\alpha$; it does not matter where the discretization thresholds are set. If this can be done, then
\[
\log \left|Q_{C_K^B}^{\alpha} \right| \ge \Omega(m_0^d).
\]
Below, we will show that $m_0$ can be as large as $\Omega \left(N_0^{1/K} \right)$. Once this is proved, the proposition follows.

To do this, define
\begin{equation*}
\label{constr}
f_{\mathbf{z}}(x) = z_1 + \frac{1}{N_0}\sum_{j=2}^{m_0^d} h\left(2(\mathbf{x}-\mathbf{x}_j)/l \right)\cdot z_j.
\end{equation*}

Now let us look at some simple properties of the function $f_{\mathbf{z}}$. $f_{\mathbf{z}}$ perturbs the constant function $z_1$ with linear combinations of the infinitely smooth function $h$ shifted to each $\mathbf{x}_j$ (the centers of the hypercubes). Moreover, $z_j \in \{-1,0,1\}$ ($j=2,3,\ldots,m_0^d$) controls the perturbation at $\mathbf{x}_j$. It can be a positive or negative perturbation or no perturbation. The magnitude of the perturbation is $1/N_0$.

Note that $h(2(\mathbf{x}-\mathbf{x}_j)/l)$ is supported by the set
\[
\{\mathbf{x}\in \mathbb{R}^d: \|\mathbf{x}-\mathbf{x}_j\|_2 \leq l/2 \}.
\]
For any $\mathbf{x}\in \mathbb{R}^d$, there exists at most one $j$ such that $h(2(\mathbf{x}-\mathbf{x}_j)/l)\neq 0 $. Therefore, for any fixed $\mathbf{x}$, at most one term in the summation in (\ref{constr}) does not vanish. Also note that $\frac{1}{N_0} h\left(2(\mathbf{x}-\mathbf{x}_j)/l \right)\cdot z_j$ can contribute $1/N_0$ to the magnitude of $f_{\mathbf{z}}$. Thus for different $\mathbf{z},\mathbf{z}'$, $f_{\mathbf{z}}$ and $f_{\mathbf{z}'}$ are always different no matter how the discretization thresholds are put. Furthermore, if $|\mathbf{k}| \le K$, then for every $\mathbf{z}$,
\begin{align*}
|D^{\mathbf{k}} f_{\mathbf{z}}(\mathbf{x})| = \left|\frac{1}{N_0}  \sum_{j=2}^{m_0^d} D^{\mathbf{k}} h(2(\mathbf{x}-\mathbf{x}_j)/l) \right|\leq \left(\frac{2}{l} \right)^K M_K/N_0,
\end{align*}
since the support of all the perturbation $h$ does not overlap. In order that all the functions $f_{\mathbf{z}}$ has $K$-norm bounded by $1$, we need
\begin{equation*}
\left(\frac{2}{l} \right)^K M_K/N_0 \le 1.
\end{equation*}
The above inequality can be satisfied by setting
\[
n_0 =  \bigg\lceil 2M_K^{1/K}N_0^{\frac{K-1}{K}} \bigg\rceil,
\]
Thus we have
\[
m_0 = \frac{N_0}{n_0} = \Omega \left(N_0^{1/K} \right).
\]

The lemma follows.
\end{proof}

\subsection{Smoothness of Linear Combination of Gaussian Kernel Functions}
\label{app:Gaussian Kernel}

In this section we prove Proposition \ref{Prop:Improved_Gaussian_Smoothness}.
First, we introduce a well-known inequality for Hermite polynomial.
Proposition \ref{Prop:Improved_Gaussian_Smoothness} follows immediately from this lemma.
\begin{lem} \cite{Indritz1961}\label{Lem:Hermite}
  For Hermite polynomial of degree $k$ defined as
  \[
  H_k(x) = (-1)^k e^{x^2} \frac{\mathrm{d}^k}{\mathrm{d}x^k}e^{-x^2},
  \]
  where $k \in \mathbb{N}$ and $x \in (-\infty, \infty)$, it satisfy following inequality:
  \[
  |H_k(x)| \leq (2^k k!)^{\frac{1}{2}}e^{\frac{1}{2}x^2}.
  \]
\end{lem}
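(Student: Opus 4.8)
The plan is to pass from the Hermite polynomial to the associated \emph{Hermite function} $u_k(x) := e^{-x^2/2}H_k(x)$ and to notice that the claimed bound $|H_k(x)| \le (2^k k!)^{1/2}e^{x^2/2}$ is exactly the statement $\sup_{x} u_k(x)^2 \le 2^k k!$. Equivalently, writing $\phi_k := (2^k k!\sqrt{\pi})^{-1/2}u_k$ for the $L^2$-normalized Hermite function (so $\int_{-\infty}^{\infty}\phi_k^2 = 1$), the inequality is Cramér's inequality $\sup_x|\phi_k(x)| \le \pi^{-1/4}$. The first concrete step is to record the differential equation governing $u_k$: starting from Hermite's equation $H_k'' - 2xH_k' + 2kH_k = 0$ (which follows directly from the Rodrigues-type definition of $H_k$), a short computation gives
\begin{equation*}
u_k''(x) + \bigl(2k+1-x^2\bigr)\,u_k(x) = 0 .
\end{equation*}
This is an oscillator equation whose coefficient $\phi(x):=2k+1-x^2$ is positive exactly on the \emph{turning interval} $|x|<\sqrt{2k+1}$ and negative outside it; the constant $2^k k!$ will have to be recovered from the interplay between this equation and the $L^2$-normalization.

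The second step is the classical Sonin--Pólya energy argument. Since $u_k^2$ is even, I would work on $x\ge 0$ and, on the turning interval, introduce
\begin{equation*}
S(x) := u_k(x)^2 + \frac{u_k'(x)^2}{2k+1-x^2},
\end{equation*}
computing, via $u_k'' = -(2k+1-x^2)u_k$, that
\begin{equation*}
S'(x) = \frac{2x\,u_k'(x)^2}{(2k+1-x^2)^2}\ \ge\ 0 \qquad (x \ge 0).
\end{equation*}
At any critical point $x^*$ of $u_k^2$ inside the turning interval we have $u_k'(x^*)=0$, hence $S(x^*)=u_k(x^*)^2$; because $S$ is nondecreasing, the successive relative maxima of $u_k^2$ \emph{increase} toward the turning point. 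Outside the turning interval the equation reads $u_k'' = (x^2-2k-1)u_k$, so $u_k''$ has the same sign as $u_k$; this convexity forces $|u_k|$ to decrease monotonically to $0$, so $u_k^2$ has no local maximum there. Consequently the global maximum $\mu := \sup_x u_k^2$ is attained at the outermost interior critical point, and in particular $\mu \le \lim_{x\to\sqrt{2k+1}^-}S(x)$.

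The third and genuinely hard step is to pin the constant down to exactly $2^k k!$: the monotonicity of $S$ only localizes the peak near the turning point, and it remains to evaluate this limiting energy, which is precisely the quantitative content of Cramér's inequality and must bring in the normalization $\int_{-\infty}^{\infty}u_k^2 = \sqrt{\pi}\,2^k k!$. One clean route is to analyze $u_k$ near $x=\sqrt{2k+1}$, where the equation degenerates and $u_k$ is governed by an Airy-type profile whose amplitude is fixed by the normalization; feeding this into $\lim S(x)$ produces the constant. A more elementary but more laborious route compares, arch by arch between consecutive zeros, the contribution of each oscillation to the total mass against its squared peak, using Sturm's comparison theorem to control the zero spacing, and caps the largest peak by summing against the fixed total mass. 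I expect this calibration of the constant to be the main obstacle, since the naive mass-versus-spacing estimate loses a spurious factor and only the behavior right at the turning point pins $\pi^{-1/4}$ (equivalently $\mu\le 2^k k!$) exactly; by contrast the ODE derivation and the energy monotonicity of the previous steps are routine.
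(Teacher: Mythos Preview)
The paper does not prove this lemma; it is quoted from Indritz (1961) as a known result, so there is no in-paper argument to compare against. Your reformulation as Cram\'er's inequality $\sup_x|\phi_k|\le\pi^{-1/4}$ and the Sonin--P\'olya monotonicity of $S(x)=u_k^2+(u_k')^2/(2k+1-x^2)$ on $[0,\sqrt{2k+1})$ are both correct, as is the observation that $|u_k|$ decays monotonically beyond the turning point.

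The gap is in your third step. You bound the global maximum by $\lim_{x\to\sqrt{2k+1}^{-}}S(x)$ and propose to ``evaluate this limiting energy,'' but that limit is $+\infty$: the denominator $2k+1-x^2$ vanishes at the turning point while $u_k'(\sqrt{2k+1})$ does not (already $u_0'(1)=-e^{-1/2}\ne 0$). So the bound is vacuous, and there is no finite quantity into which an Airy profile can be ``fed.'' What the Airy/Plancherel--Rotach analysis actually delivers is the \emph{asymptotic} size of the outermost peak, $\max_x\phi_k^2\asymp k^{-1/6}$ as $k\to\infty$---well below $\pi^{-1/2}$ for large $k$, but not a bound uniform in $k$, and in particular it does not cover small $k$, where the inequality is tightest (equality holds at $k=0$). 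Your arch-by-arch alternative is too schematic to assess and, as you yourself note, loses a factor.

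The short proofs in the literature (Indritz's included) do not go through turning-point asymptotics; they obtain the exact constant by a direct argument tied to the recurrence/ladder structure of the Hermite functions, so that the normalization $\int u_k^2 = 2^k k!\sqrt{\pi}$ enters only through the definition of $\phi_k$ and the value $\phi_0(0)=\pi^{-1/4}$ supplies the sharp constant. That direct device---not a refinement of the Sonin--P\'olya energy---is what your argument is missing.
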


\begin{proof} [\textbf{Proof of Proposition \ref{Prop:Improved_Gaussian_Smoothness}}]

  We only need to show that the $K$-norm of the Gaussian kernel function
  is bounded by $1$ since $\|\boldsymbol{\alpha}\|_1 \le 1$.

  Let $g(x)= e^{-x^2}$. From Lemma \ref{Lem:Hermite} we directly have:
  $$ |\frac{\mathrm{d}^k}{\mathrm{d}x^k} g(x)| = |H_k(x)e^{-x^2}| \le (2^k k!)^{\frac{1}{2}}.$$

  Let $\mathbf{k} = (k_1, \ldots, k_d)$, and $|\mathbf{k}|=K$.
  Therefore, for $f(\mathbf{x})$ defined in
  Proposition \ref{Prop:Improved_Gaussian_Smoothness}, we have:
\begin{align*}
  |D^{\mathbf{k}}f(\mathbf{x})|
  = \prod_{j=1}^d \frac{\mathrm{d}^{k_j}}{\mathrm{d}x_j^{k_j}}
  g\left(\frac{x_j-y_j}{\sqrt{2}\sigma} \right)
  \leq \left(\frac{1}{\sqrt{2}\sigma}\right)^{K}\left(\prod_{j=1}^d(2^{k_j}k_j!)\right)
  ^{\frac{1}{2}} \leq \frac{(K!)^{\frac{1}{2}}}{\sigma^K}.
\end{align*}

Obviously, when $K\le \sigma^2$,
$$|D^{\mathbf{k}}f(\mathbf{x})|\leq\frac{K^{\frac{K}{2}}}{\sigma^K}\leq 1.$$
The proposition follows.
\end{proof}

\subsection{Privately Estimation on Eigenvectors and Eigenvalues} \label{app:eigen} 
\label{sec:privately_estimation_on_eigenvectors_and_eigenvalues}
In this section we prove Theorem \ref{Theorem:Eigenvector_convergence} and the privacy guarantee (Theorem \ref{Theorem: eps_delta_dp_PCA}). For simplicity we denote $||\mathbf{X}||$ the spectral norm of a matrix $\mathbf{X}$.

Before we state it formally, let us take a closer look at the Theorem 3.3 in \cite{hardt2013robust}: With high probability, the tangent of the angle between the space spanned by the top-$k$ leading eigenvectors, namely eigenspace, and the space spanned by the output columns, namely output-space, is \emph{small}, given regularity conditions. Our goal is the column-wise convergence between eigenvectors and output columns, which can be concluded from the simultaneous convergence between the increasing sequence of eigenspaces and the increasing sequence of output-spaces, given that they shared the same dimension. This constraint leads us to utilize a weaker version of Theorem 3.3 by specifying $r = k$, but the favored column-wise convergence at least compensated for the loss of tuning parameter $r$. Note that simply applying Theorem 3.3 consecutively for the sequence will not assure the high convergence probability $1 - o(1)$ and our analysis can be extended to the case $k = O(d)$, where the dimension $d$ can grow as the size of database given the aptitude of added noise is adequate.

\begin{lem}\label{Lemma:Covariance_sensitivity}
	Assuming the data universe $\mathcal{X} = [-1,1]^d$, for all pairs of neighbor databases $D, D'$ with $|D| = |D'| = n$, let $A(D)=\frac{1}{n}DD^T-\bar{D}^T \bar{D}$, where $\bar{D}$ is the mean of $D$. It holds that
	\[
	||A(D) - A(D')|| \le \frac{5d}{n}.
	\]
\end{lem}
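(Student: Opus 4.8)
The plan is to reduce to a single–point change and then exploit the positive–semidefinite (Loewner) structure of the covariance matrix, rather than bounding the two naive pieces $\frac1n\sum_{\mathbf x}\mathbf x\mathbf x^{\top}$ and $\bar D^{\top}\bar D$ separately — that cruder route loses a factor and does not reach the constant $5$. Fix neighbors $D,D'$ with $|D|=|D'|=n$; without loss of generality they share the points $\mathbf x_1,\dots,\mathbf x_{n-1}\in[-1,1]^d$ and differ only in the last one, $\mathbf x_n\in D$ versus $\mathbf x_n'\in D'$. Write $\mu,\mu'$ for the means of $D,D'$, and use the bias–variance rewriting $A(D)=\frac1n\sum_{i=1}^n(\mathbf x_i-\mu)(\mathbf x_i-\mu)^{\top}$ (which equals $\frac1n DD^{\top}-\mu\mu^{\top}$), and likewise for $A(D')$.

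The key step is the elementary identity that for \emph{any} fixed vector $\mathbf c$,
\[
\frac1n\sum_{i=1}^n(\mathbf x_i-\mathbf c)(\mathbf x_i-\mathbf c)^{\top}=A(D)+(\mu-\mathbf c)(\mu-\mathbf c)^{\top}\succeq A(D).
\]
Applying this with $\mathbf c=\mu'$ re-centers $D$ by the mean of $D'$, so the left-hand side matches $A(D')$ on all $n-1$ shared points. Subtracting $A(D')=\frac1n\sum_{i<n}(\mathbf x_i-\mu')(\mathbf x_i-\mu')^{\top}+\frac1n(\mathbf x_n'-\mu')(\mathbf x_n'-\mu')^{\top}$, the common terms cancel and only the changed point survives:
\[
A(D)-A(D')\preceq\tfrac1n\bigl[(\mathbf x_n-\mu')(\mathbf x_n-\mu')^{\top}-(\mathbf x_n'-\mu')(\mathbf x_n'-\mu')^{\top}\bigr]\preceq\tfrac1n(\mathbf x_n-\mu')(\mathbf x_n-\mu')^{\top},
\]
where the last inequality just drops a PSD matrix. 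Since $\mathbf x_n,\mu'\in[-1,1]^d$, every coordinate of $\mathbf x_n-\mu'$ lies in $[-2,2]$, so $\|\mathbf x_n-\mu'\|_2^2\le 4d$ and hence $A(D)-A(D')\preceq\frac{4d}{n}I$. Repeating the argument with the roles of $D$ and $D'$ interchanged (i.e. $\mathbf c=\mu$) gives $A(D')-A(D)\preceq\frac{4d}{n}I$. Combining, $-\frac{4d}{n}I\preceq A(D)-A(D')\preceq\frac{4d}{n}I$, so $\|A(D)-A(D')\|\le\frac{4d}{n}\le\frac{5d}{n}$.

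The only real subtlety is choosing to re-center $D$ by $\mu'$ instead of by its own mean: this is what turns the comparison against $A(D')$ into a perturbation of rank at most two supported on the single changed point, and the slack $(\mu-\mathbf c)(\mu-\mathbf c)^{\top}$ it introduces is nonnegative, which is precisely why it can be absorbed on one side of a Loewner inequality. Everything else is Cauchy–Schwarz–type bounding of a vector in $[-2,2]^d$. (A more pedestrian variant also works: insert the intermediate matrix $\tilde A=\frac1n\sum_{i<n}(\mathbf x_i-\mu')(\mathbf x_i-\mu')^{\top}+\frac1n(\mathbf x_n-\mu')(\mathbf x_n-\mu')^{\top}$, note $\tilde A-A(D)=(\mu-\mu')(\mu-\mu')^{\top}$ with $\|\mu-\mu'\|_2\le\frac{2\sqrt d}{n}$, and bound $\|\tilde A-A(D')\|$ directly; this yields $\frac{4d}{n}+\frac{4d}{n^2}\le\frac{5d}{n}$.)
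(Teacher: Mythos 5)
Your proof is correct. Note that the paper itself states Lemma \ref{Lemma:Covariance_sensitivity} in the appendix without giving any proof (the sensitivity bound only enters through the choice of $\sigma$ in Algorithm \ref{Alg:PSI} and the appeal to Lemma 3.6 of \cite{hardt2013robust}), so there is no in-paper argument to compare against; judged on its own, your re-centering trick is sound: the identity $\frac1n\sum_i(\mathbf x_i-\mathbf c)(\mathbf x_i-\mathbf c)^{\top}=A(D)+(\mu-\mathbf c)(\mu-\mathbf c)^{\top}$, the cancellation of the $n-1$ shared points after choosing $\mathbf c=\mu'$, dropping the PSD term, the bound $(\mathbf x_n-\mu')(\mathbf x_n-\mu')^{\top}\preceq\|\mathbf x_n-\mu'\|_2^2\,I\preceq 4d\,I$ (using $\mu'\in[-1,1]^d$ by convexity), and the symmetric direction together give the two-sided Loewner bound and hence $\|A(D)-A(D')\|\le 4d/n\le 5d/n$. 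In fact your argument is sharper than the stated constant. One small quibble: your claim that the ``naive'' term-by-term decomposition cannot reach the constant $5$ is not quite accurate — bounding $\|\frac1n(\mathbf x_n\mathbf x_n^{\top}-\mathbf x_n'\mathbf x_n'^{\top})\|\le d/n$ via the same PSD sandwich for a difference of rank-one matrices, and $\|\mu\mu^{\top}-\mu'\mu'^{\top}\|\le(\|\mu\|+\|\mu'\|)\,\|\mu-\mu'\|\le 4d/n$ using $\|\mu-\mu'\|=\frac1n\|\mathbf x_n-\mathbf x_n'\|\le 2\sqrt d/n$, gives exactly $5d/n$, which is most likely where the paper's constant comes from; only the cruder triangle-inequality bound $2d/n$ on the first term overshoots. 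This does not affect the validity of your proof, and your secondary ``pedestrian variant'' is also fine except that the final comparison $\frac{4d}{n}+\frac{4d}{n^2}\le\frac{5d}{n}$ implicitly needs $n\ge4$ (harmless, since your main argument has no such restriction).
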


\begin{lem}\label{Lemma: Sub-matrix_spectral_norm}
	Let $\mathbf{A}= (a_{ij}) \in \mathbb{R}^{n\times d}$, denote $\mathbf{A}_{kl} = (a_{ij})_{i\le k, j\le l}$ the $(k, l)$-sub matrix of $\mathbf{A}$ for any $k\le n$ and $l \le d$, then $||\mathbf{A}_{kl}|| \le ||\mathbf{A}||$.
\end{lem}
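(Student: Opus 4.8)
The plan is to realize the sub-matrix $\mathbf{A}_{kl}$ as a two-sided coordinate projection of $\mathbf{A}$, and then invoke the elementary fact that multiplying by a coordinate projection (a matrix obtained from the identity by deleting rows) cannot increase the spectral norm. Concretely, let $P_k \in \mathbb{R}^{k\times n}$ be the matrix whose rows are the first $k$ standard basis vectors of $\mathbb{R}^n$, and let $Q_l \in \mathbb{R}^{d\times l}$ be the matrix whose columns are the first $l$ standard basis vectors of $\mathbb{R}^d$. Then by construction $\mathbf{A}_{kl} = P_k \mathbf{A} Q_l$. Since $P_k$ consists of a subset of the rows of $I_n$, it satisfies $P_k P_k^T = I_k$, so $\|P_k\| = 1$; likewise $Q_l^T Q_l = I_l$, so $\|Q_l\| = 1$.

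The proof is then one line: by submultiplicativity of the spectral norm,
\[
\|\mathbf{A}_{kl}\| = \|P_k \mathbf{A} Q_l\| \le \|P_k\|\cdot\|\mathbf{A}\|\cdot\|Q_l\| = \|\mathbf{A}\|.
\]
Alternatively, and perhaps more transparently, one can argue directly from the variational characterization $\|\mathbf{A}\| = \sup_{\|u\|=\|v\|=1} u^T \mathbf{A} v$: for any unit vectors $u \in \mathbb{R}^k$, $v \in \mathbb{R}^l$, the zero-padded vectors $\tilde u = P_k^T u \in \mathbb{R}^n$ and $\tilde v = Q_l v \in \mathbb{R}^d$ are again unit vectors, and $u^T \mathbf{A}_{kl} v = \tilde u^T \mathbf{A} \tilde v \le \|\mathbf{A}\|$; taking the supremum over $u,v$ gives the claim.

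There is essentially no obstacle here: the only thing to be careful about is bookkeeping, namely writing the padding/projection matrices explicitly so that the identity $\mathbf{A}_{kl} = P_k\mathbf{A}Q_l$ is manifestly correct, and noting that the argument does not require $k\le l$ or any relation between $k,l$ — it works for an arbitrary rectangular top-left sub-block. I would present the variational-characterization version, since it avoids even citing submultiplicativity and makes the result self-contained.
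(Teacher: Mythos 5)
Your proof is correct, and its variational-characterization version is essentially the same idea as the paper's argument: both rest on zero-padding a unit vector and using the extremal characterization of the spectral norm. The only cosmetic difference is that the paper peels off one column at a time (via $\lambda_{\max}(\mathbf{A}^T\mathbf{A})$, handling rows through $\|\mathbf{A}\|=\|\mathbf{A}^T\|$), whereas you treat rows and columns simultaneously with the projections $P_k,\ Q_l$, which is a bit more direct and equally valid.
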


\if 0
\begin{proof}[\textbf{Proof of Lemma \ref{Lemma: Sub-matrix_spectral_norm}}]
	Since $||A|| = ||A'||$, it is sufficient to prove for $l = d - 1$. Denote $\mathbf{A} = (\mathbf{A}_{d-1}, \mathbf{a})$, $\mathbf{z} = (\mathbf{z}_{d-1}, z_d)^T \in \mathbb{R}^d$,  and let $\lambda_{\max}(\cdot)$ be the largest eigenvalue, we have
	\[
	\begin{split}
		||\mathbf{A}||^2 &= \lambda_{\max}(\mathbf{A}^T\mathbf{A}) = \max_{\mathbf{z}^T\mathbf{z} = 1} \mathbf{z}^T \mathbf{A}^T\mathbf{A}\mathbf{z} \\
		&\ge \max_{\mathbf{z}^T\mathbf{z} = 1 } (\mathbf{z}_{d-1}, 0) \mathbf{A}^T\mathbf{A}(\mathbf{z}_{d-1}, 0)^T \\
		& = \max_{\mathbf{z}_{d-1}^T\mathbf{z}_{d-1} = 1} \mathbf{z}_{d-1}\mathbf{A}_{d-1}^T \mathbf{A}_{d-1}\mathbf{z}_{d-1}^T = ||\mathbf{A}_{d-1}||^2. \\
	\end{split}
	\]
	The lemma follows.
\end{proof}
\fi

\begin{lem}\label{Lemma: Gaussian_concentration}
Let $\mathbf{U} \in \mathbb{R}^{d\times k} $ be a matrix with orthonormal columns. Let $\mathbf{G}^{(1)},...,\mathbf{G}^{(L)} \sim N(0, \sigma^2)^{d\times k}$ with $k \le d$ and
assume that $L\le d$. Let $\mathbf{G}_s^{(l)}$ and $\mathbf{U}_s$ be the $(d, s)$-sub matrix of $\mathbf{G}^{(l)}$ and $\mathbf{U}$ respectively for $s\in[k]$. Then, with probability $1 - o(1)$,
\begin{equation*}
\max_{l\in[L]} ||\mathbf{U}_s^T\mathbf{G}_s^{(l)}||\le O(\sigma \sqrt{k\log L}),\quad\forall\, s\in[k].
\end{equation*}
\end{lem}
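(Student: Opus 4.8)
The plan is to fix $s \in [k]$, control the spectral norm of the single random matrix $\mathbf{U}_s^T \mathbf{G}_s^{(l)}$ for each $l$, and then union bound over $l \in [L]$ and over $s \in [k]$. The key structural observation is that $\mathbf{U}_s \in \mathbb{R}^{d \times s}$ is a submatrix of a matrix with orthonormal columns, so by Lemma \ref{Lemma: Sub-matrix_spectral_norm} we have $\|\mathbf{U}_s\| \le \|\mathbf{U}\| = 1$; in particular its columns form a set of at most $s \le k$ vectors of norm at most $1$. Since $\mathbf{G}_s^{(l)}$ has i.i.d.\ $N(0,\sigma^2)$ entries, the product $\mathbf{U}_s^T \mathbf{G}_s^{(l)} \in \mathbb{R}^{s \times s}$ is itself a matrix whose entries are jointly Gaussian with variance at most $\sigma^2$ (each entry is an inner product of a unit-bounded vector with an i.i.d.\ Gaussian column of $\mathbf{G}_s^{(l)}$), so it behaves like a Gaussian random matrix of dimension at most $k \times k$.

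First I would invoke a standard bound on the operator norm of a Gaussian random matrix: if $\mathbf{M}$ has i.i.d.\ (or more generally jointly Gaussian, sub-Gaussian) entries with variance parameter $\sigma^2$ and dimensions at most $k \times k$, then $\mathbb{E}\|\mathbf{M}\| \lesssim \sigma\sqrt{k}$ and, by Gaussian concentration of the norm (which is $1$-Lipschitz in the entries), $\mathbb{P}(\|\mathbf{M}\| \ge \sigma\sqrt{k} + \sigma t) \le 2e^{-ct^2}$ for an absolute constant $c$. Cleaner still, one can reduce to the matrix $\mathbf{U}_s^T \mathbf{G}_s^{(l)}$ directly: viewing it as a linear image of a Gaussian vector, the map $\mathbf{G}_s^{(l)} \mapsto \|\mathbf{U}_s^T \mathbf{G}_s^{(l)}\|$ is $\|\mathbf{U}_s\| \le 1$-Lipschitz with respect to the Frobenius norm on $\mathbf{G}_s^{(l)}$, so Gaussian Lipschitz concentration gives $\mathbb{P}\big(\|\mathbf{U}_s^T \mathbf{G}_s^{(l)}\| \ge \mathbb{E}\|\mathbf{U}_s^T \mathbf{G}_s^{(l)}\| + \sigma t\big) \le e^{-t^2/2}$, and $\mathbb{E}\|\mathbf{U}_s^T \mathbf{G}_s^{(l)}\| \le \mathbb{E}\|\mathbf{G}_s^{(l)}\| = O(\sigma\sqrt{k})$ (using $\mathbf{U}_s$ has orthonormal-dominated columns, or simply $\|\mathbf{U}_s^T\mathbf{G}_s^{(l)}\| \le \|\mathbf{U}_s\|\|\mathbf{G}_s^{(l)}\|$). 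Then choosing $t = C\sqrt{\log L}$ for a large constant $C$ makes the failure probability at most $L^{-C^2/2}$.

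The final step is the union bound: there are $L$ choices of $l$ and $k$ choices of $s$, and since $k \le d$ and $L \le d$ the total number of events is at most $Lk \le d^2$, so a union bound over all of them leaves a failure probability of at most $d^2 \cdot L^{-C^2/2} = o(1)$ for $C$ large enough (recalling $d$ is a constant in our setting, or more generally as long as $L$ grows, $d^2 L^{-C^2/2} \to 0$). On this good event, $\|\mathbf{U}_s^T \mathbf{G}_s^{(l)}\| = O(\sigma\sqrt{k}) + O(\sigma\sqrt{\log L}) = O(\sigma\sqrt{k\log L})$ simultaneously for all $l \in [L]$ and all $s \in [k]$, which is the claim. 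I do not expect a genuine obstacle here; the only point requiring a little care is making sure the Gaussian concentration is applied to a $1$-Lipschitz functional (which is why the bound $\|\mathbf{U}_s\| \le 1$ from Lemma \ref{Lemma: Sub-matrix_spectral_norm} is the crucial input) and keeping track that both $\sqrt{k}$ from the expectation and $\sqrt{\log L}$ from the deviation get absorbed into the stated $O(\sigma\sqrt{k\log L})$ bound.
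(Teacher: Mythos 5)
The paper itself never proves this lemma -- it is stated in Appendix \ref{app:eigen} without proof (as a standard Gaussian concentration fact) -- so there is no authorial argument to compare against; judged on its own, your proof is essentially the right standard argument: bound the norm of each $\mathbf{U}_s^T\mathbf{G}_s^{(l)}$ by an expectation term plus a Gaussian-concentration deviation term (the map $\mathbf{G}_s^{(l)}\mapsto\|\mathbf{U}_s^T\mathbf{G}_s^{(l)}\|$ is $1$-Lipschitz in the Frobenius norm because $\|\mathbf{U}_s\|\le 1$), then union bound over the $Lk\le d^2$ pairs $(l,s)$ with $t=C\sqrt{\log L}$, absorbing $\sqrt{k}+\sqrt{\log(Lk)}$ into $O(\sqrt{k\log L})$.

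Two points need repair, one of them substantive. First, the parenthetical shortcut $\mathbb{E}\|\mathbf{U}_s^T\mathbf{G}_s^{(l)}\|\le\mathbb{E}\|\mathbf{G}_s^{(l)}\|=O(\sigma\sqrt{k})$ is false as written: $\mathbf{G}_s^{(l)}$ is $d\times s$, so $\mathbb{E}\|\mathbf{G}_s^{(l)}\|=\Theta\bigl(\sigma(\sqrt{d}+\sqrt{s})\bigr)$, and the crude bound $\|\mathbf{U}_s^T\mathbf{G}_s^{(l)}\|\le\|\mathbf{U}_s\|\,\|\mathbf{G}_s^{(l)}\|$ only yields $O(\sigma\sqrt{d\log L})$ -- exactly the weaker bound the paper reserves for $\Delta(\mathbf{V}_s)$ in the proof of Theorem \ref{Theorem:Eigenvector_convergence}, whereas the $\sqrt{k}$ rate for $\Delta(\mathbf{U}_s)$ is what the lemma is for. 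Second, "entries jointly Gaussian with variance at most $\sigma^2$, so it behaves like a Gaussian random matrix of dimension at most $k\times k$" is too loose: for correlated Gaussian entries the operator-norm bound $O(\sigma\sqrt{k})$ need not hold. The clean fix, which makes your first route rigorous, is rotational invariance: since the columns of $\mathbf{U}_s$ are orthonormal ($\mathbf{U}_s^T\mathbf{U}_s=I_s$) and the columns of $\mathbf{G}_s^{(l)}$ are independent $N(0,\sigma^2 I_d)$ vectors, the matrix $\mathbf{U}_s^T\mathbf{G}_s^{(l)}$ has exactly i.i.d.\ $N(0,\sigma^2)$ entries in dimension $s\times s$, so $\mathbb{E}\|\mathbf{U}_s^T\mathbf{G}_s^{(l)}\|\le 2\sigma\sqrt{s}\le 2\sigma\sqrt{k}$ by the standard expectation bound, and your Lipschitz-concentration-plus-union-bound step then gives the claim. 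Your own caveat about the $1-o(1)$ statement (it requires $L\to\infty$, or is to be read with $L$ growing at least like $\log d$ as in Theorem \ref{Theorem:Eigenvector_convergence}) is a fair criticism of the lemma's phrasing rather than a gap in your argument.
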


\begin{lem}\label{Lemma: Laplacian_concentration}
Let $\mathbf{U} \in \mathbb{R}^{d\times k} $ be a matrix with orthonormal columns. Let $\mathbf{G}^{(1)},...,\mathbf{G}^{(L)} \sim Lap(\sigma)^{d\times k}$ with $k \le d$ and
assume that $L\le d$. Let $\mathbf{G}_s^{(l)}$ and $\mathbf{U}_s$ be the $(d, s)$-sub matrix of $\mathbf{G}^{(l)}$ and $\mathbf{U}$ respectively for $s\in[k]$. Then, with probability $1 - o(1)$,
\begin{equation*}
\max_{l\in[L]} ||\mathbf{U}_s^T\mathbf{G}_s^{(l)}||\le O(\sigma k\sqrt{\log(Lk^2)}),\quad\forall\, s\in[k].
\end{equation*}
\end{lem}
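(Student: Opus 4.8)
\textbf{Proof plan for Lemma~\ref{Lemma: Laplacian_concentration}.} This is the Laplace counterpart of Lemma~\ref{Lemma: Gaussian_concentration}; since $\mathrm{Lap}(\sigma)$-noise is not rotation invariant, we cannot reduce $\mathbf U_s^{T}\mathbf G_s^{(l)}$ to a matrix with i.i.d.\ entries as in the Gaussian case, so the plan is instead to bound the spectral norm by the Frobenius norm and then invoke a quadratic-form concentration inequality. Fix $l\in[L]$ and $s\in[k]$, let $\mathbf g_j^{(l)}$ denote the $j$-th column of $\mathbf G^{(l)}$ (a vector of i.i.d.\ $\mathrm{Lap}(\sigma)$ coordinates, independent across $j$), and let $P_s=\mathbf U_s\mathbf U_s^{T}$ be the rank-$s$ orthogonal projection onto the span of the first $s$ columns of $\mathbf U$. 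Then
\[
\|\mathbf U_s^{T}\mathbf G_s^{(l)}\|^2\ \le\ \|\mathbf U_s^{T}\mathbf G_s^{(l)}\|_F^2\ =\ \sum_{j=1}^{s}\big\|\mathbf U_s^{T}\mathbf g_j^{(l)}\big\|^2\ =\ \sum_{j=1}^{s}(\mathbf g_j^{(l)})^{T}P_s\,\mathbf g_j^{(l)}\ =\ \mathbf v^{T}(I_s\otimes P_s)\,\mathbf v ,
\]
where $\mathbf v=\mathrm{vec}(\mathbf G_s^{(l)})\in\mathbb{R}^{ds}$ has i.i.d.\ $\mathrm{Lap}(\sigma)$ coordinates; note $\mathbb{E}\big[\mathbf v^{T}(I_s\otimes P_s)\mathbf v\big]=2\sigma^2\operatorname{tr}(I_s\otimes P_s)=2s^2\sigma^2$, and $I_s\otimes P_s$ is itself an orthogonal projection, so $\|I_s\otimes P_s\|_F=s$ and $\|I_s\otimes P_s\|_{\mathrm{op}}=1$.

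Next I would apply a Hanson--Wright-type deviation bound for quadratic forms in independent sub-exponential random variables (the sub-exponential analogue of the classical Hanson--Wright inequality); since $\mathrm{Lap}(\sigma)$ has $\psi_1$-norm $O(\sigma)$ and the coefficient matrix is a projection, it yields (up to the mild lower-order terms the heavier Laplace tail introduces)
\[
\mathbb{P}\Big(\big|\,\|\mathbf U_s^{T}\mathbf G_s^{(l)}\|_F^2-2s^2\sigma^2\,\big|\ge t\Big)\ \le\ 2\exp\!\Big(-c\min\big(\tfrac{t^2}{\sigma^4 s^2},\ \tfrac{t}{\sigma^2}\big)\Big).
\]
Choosing $t$ so that this is $o(1)/(Lk)$ (hence $t=O(\sigma^2(s\sqrt{\log(Lk)}+\log(Lk)))$) and union bounding over all $l\in[L]$ and $s\in[k]$ gives, with probability $1-o(1)$, $\max_{l\in[L]}\|\mathbf U_s^{T}\mathbf G_s^{(l)}\|_F^2\le 2s^2\sigma^2+O(s\sigma^2\sqrt{\log(Lk)}+\sigma^2\log(Lk))$ for every $s\le k$. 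Splitting into the cases $s\ge\sqrt{\log(Lk)}$ (where the $s^2\sigma^2$ term dominates) and $s<\sqrt{\log(Lk)}$ (where the additive term dominates), the right-hand side is $O(\sigma^2\max(k^2,\log(Lk)))$, and since $\max(k^2,\log(Lk))\le k^2\log(Lk^2)$ this gives $\max_{l\in[L]}\|\mathbf U_s^{T}\mathbf G_s^{(l)}\|\le O(\sigma k\sqrt{\log(Lk^2)})$, as claimed.

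The only genuinely delicate step is the Hanson--Wright bound for \emph{Laplace} (rather than sub-gaussian) coordinates: its sub-exponential version carries additional terms beyond the two displayed above, and one must check that for a rank-$s$ projection these are dominated --- the crucial point being $\|I_s\otimes P_s\|_{\mathrm{op}}=1$, which keeps the sub-exponential scale at $\sigma^2$ rather than $\sigma^2\log(\cdot)$. If one prefers an entirely elementary argument at the cost of an extra $\sqrt{\log}$ factor, a softer route already suffices for the downstream PCA analysis: bound $\|\mathbf U_s^{T}\mathbf G_s^{(l)}\|\le s\max_{i,j\le s}|\mathbf u_i^{T}\mathbf g_j^{(l)}|$, observe that each $\mathbf u_i^{T}\mathbf g_j^{(l)}$ is a linear combination of i.i.d.\ $\mathrm{Lap}(\sigma)$ with unit $\ell_2$ coefficients, so by the Laplace moment generating function $(1-\sigma^2 t^2)^{-1}$ together with a Chernoff bound one gets $\mathbb{P}(|\mathbf u_i^{T}\mathbf g_j^{(l)}|>u)\le C\exp(-c\min(u^2/\sigma^2,u/\sigma))$, and a union bound over the $\le Lk^2$ pairs with $u=\Theta(\sigma\log(Lk^2))$ then yields $\max_{l\in[L]}\|\mathbf U_s^{T}\mathbf G_s^{(l)}\|=O(\sigma k\log(Lk^2))$. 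The remaining ingredients --- the moment generating function computation, the Chernoff optimization, and the case split --- are all routine.
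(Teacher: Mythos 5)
The paper states Lemma \ref{Lemma: Laplacian_concentration} (like its Gaussian counterpart, Lemma \ref{Lemma: Gaussian_concentration}) without any proof in the appendix, so your argument can only be judged on its own terms — and judged that way, your main route has a genuine gap at the Hanson--Wright step. The reduction to the quadratic form $\mathbf v^{T}(I_s\otimes P_s)\mathbf v$, the mean $2s^2\sigma^2$, and the norms $\|I_s\otimes P_s\|_F=s$, $\|I_s\otimes P_s\|_{\mathrm{op}}=1$ are all correct, but the displayed tail $2\exp(-c\min(t^2/\sigma^4s^2,\,t/\sigma^2))$ is the sub-gaussian form, and for Laplace coordinates the discrepancy is not a ``mild lower-order term''. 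Squares and products of sub-exponential variables are only $\psi_{1/2}$, so the large-deviation branch of any valid bound has exponent $(t/\sigma^2)^{1/2}$ rather than $t/\sigma^2$; the observation $\|I_s\otimes P_s\|_{\mathrm{op}}=1$ fixes the scale but cannot restore the linear exponent, since the form already contains diagonal terms of the type $X_i^2$ with coefficient as large as $1$ (take the columns of $\mathbf U_s$ to be standard basis vectors), and $\mathbb{P}(X_i^2>t)=e^{-\sqrt{t}/\sigma}$ exactly. Redoing your union bound with the correct $\psi_{1/2}$ branch forces $t\gtrsim\sigma^2\log^2(Lk)$ in the small-$s$ regime and yields only $\max_{l}\|\mathbf U_s^{T}\mathbf G_s^{(l)}\|\le O(\sigma(k+\log(Lk)))$, which recovers the claimed $O(\sigma k\sqrt{\log(Lk^2)})$ only when $\log(Lk)\lesssim k\sqrt{\log(Lk^2)}$, i.e.\ roughly $\log L\lesssim k^2$.

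Moreover, this is not something a cleverer argument can repair: the lemma's $\sqrt{\log}$ rate is simply not attainable for Laplace noise over the stated range of parameters. For $k=s=1$ and $\mathbf U=e_1$, $\max_{l\le L}\|\mathbf U_1^{T}\mathbf G_1^{(l)}\|=\max_{l\le L}|G^{(l)}_{11}|$ is the maximum of $L$ i.i.d.\ $\mathrm{Lap}(\sigma)$ variables, which exceeds $\tfrac{\sigma}{2}\log L$ with probability tending to $1$, whereas the lemma promises $O(\sigma\sqrt{\log L})$. So your ``elementary fallback'' — bounding each entry $|\mathbf u_i^{T}\mathbf g_j^{(l)}|$ via the Laplace moment generating function, a Bernstein-type Chernoff bound, and a union bound over the $\le Lk^2$ pairs, giving $O(\sigma k\log(Lk^2))$ — is not a lossy shortcut but essentially the correct statement and the right proof for the Laplace case; the extra logarithmic factor is harmless downstream, since the pure-$\epsilon$ variant of Algorithm \ref{Alg:PSI} in Theorem \ref{Theorem: eps_delta_dp_PCA} already uses a substantially larger noise scale and the eigenvector accuracy bound would only change by a logarithmic factor. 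As written, though, the main route does not establish the lemma, and the target bound itself should carry $\log(Lk^2)$ in place of $\sqrt{\log(Lk^2)}$.
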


\begin{proof}[\textbf{Proof of Theorem \ref{Theorem:Eigenvector_convergence}}]
	Let $m = \max||\mathbf{X}^{(L)}||_\infty$, assume the spectral decomposition $\mathbf{A} = \mathbf{Z}\mathbf{\Lambda}\mathbf{Z}^{-1}$, and denote $\mathbf{\Lambda} = \begin{pmatrix}
		\mathbf{\Lambda}_1 & \\
		& \mathbf{\Lambda}_2
	\end{pmatrix}$, and $\mathbf{Z} = \begin{pmatrix}
		\mathbf{Z}_{1} & \mathbf{Z}_{2}
	\end{pmatrix}$,
	 where $\mathbf{\Lambda}_1 \in \mathbb{R}^{s\times s}$ and $\mathbf{Z}_1 \in \mathbb{R}^{d\times s}$. Next we denote $\mathbf{U}_s = \mathbf{Z}_{1}\mathbf{\Lambda}_1 \mathbf{Z}_{1}^T$ and $\mathbf{V}_s = \mathbf{Z}_{2}\mathbf{\Lambda}_2 \mathbf{Z}_{2}^T$. Obviously we have $\mathbf{A} = \mathbf{U}_s\mathbf{\Lambda}_1\mathbf{U}_s^T + \mathbf{V}_s\mathbf{\Lambda}_2\mathbf{V}_s^T$.
	
	 Let $\Delta(\mathbf{U}_s) \ge \max_{l=1}^L ||\mathbf{U}_s^T\mathbf{G}_s^{(l)}||$ and $\Delta(\mathbf{V}_s) \ge \max_{l=1}^L ||\mathbf{V}_s^T\mathbf{G}_s^{(l)}||$, where $\mathbf{G}_s^{(l)}$ is the $(d, s)$-sub matrix of $\mathbf{G}^{(l)}$. By Lemma \ref{Lemma: Gaussian_concentration}, we concludes that with probability $1 - o(1)$, the following events occurs simultaneously:
	
	 1. $\forall\,s\in[k], \Delta(\mathbf{U}_s) \le O(\sigma m\sqrt{k\log L})$,
	
	 2. $\forall\,s\in[k], \Delta(\mathbf{V}_s) \le O(\sigma m\sqrt{d\log L})$.

Notice that for all $s\le k$, we have $\Delta(\mathbf{U}_s) \le \Delta(\mathbf{V}_s)$ as we set $s\le k\le d/2$. Since $\arccos\theta(\mathbf{U}_s, \mathbf{X}_s^{(0)})$ is bounded, where $\mathbf{X}_s^{(0)}$ is the $(d, s)$-sub matrix of $\mathbf{X}^{(0)}$, we have for all $s\le k$
\[
	\Delta(\mathbf{U}_s)\arccos\theta(\mathbf{U}_s, \mathbf{X}_s^{(0)}) \le O(\sigma m\sqrt{k\log L}).
\]

Applying Theorem 2.9 in \cite{hardt2013robust}, we have with probability of $1-o(1)$, for all $s\le k$
\begin{equation}\label{Eq:Convergence_of_tan}
	\tan\theta(\mathbf{U}_s, \mathbf{X}_s^{(L)}) \le O\biggl( \frac{\sigma}{\gamma_s\lambda_s} \sqrt{ d\max||\mathbf{X}^{(l)}||_\infty^2\log L}\biggr).
\end{equation}

For the case $s = 1$, the theorem is proved. Now for any fixed $1 < s \le k$, notice that $\mathbf{u}_s$ is in the space spanned by $(\mathbf{u}_1, \dots, \mathbf{u}_s)$ as well as the orthogonal complement of the space spanned by $(\mathbf{u}_1, \dots, \mathbf{u}_{s-1})$, we have
\begin{equation}\label{Eq:Sin_convergence}
	\begin{split}
		&\quad\sin^2\theta(\mathbf{u}_s, \mathbf{x}_s^{(L)}) \\
		&= ||\mathbf{U}_{s-1} \mathbf{U}_{s-1}^T \mathbf{x}_s^{(L)} + (\mathbf{I} - \mathbf{U}_s \mathbf{U}_s^T) \mathbf{x}_s^{(L)}||^2 \\
		&= ||\mathbf{U}_{s-1} \mathbf{U}_{s-1}^T \mathbf{x}_s^{(L)} ||^2 + ||(\mathbf{I} - \mathbf{U}_s \mathbf{U}_s^T) \mathbf{x}_s^{(L)}||^2 \\
		&\le \sin^2\theta (\mathbf{U}_{s-1}, \mathbf{X}_{s-1}^{(L)}) + \sin^2\theta (\mathbf{U}_{s}, \mathbf{X}_{s}^{(L)}) \\
		&\le 2 (\max\{\sin^2\theta (\mathbf{U}_{s-1}, \mathbf{X}_{s-1}^{(L)}), \sin^2\theta (\mathbf{U}_{s}, \mathbf{X}_{s}^{(L)})\}) \\
		&\le 2 (\max\{\tan^2\theta (\mathbf{U}_{s-1}, \mathbf{X}_{s-1}^{(L)}), \tan^2\theta (\mathbf{U}_{s}, \mathbf{X}_{s}^{(L)})\}).
	\end{split}
\end{equation}
The theorem, for the case $s\ge 2$, is proved by substituting \eqref{Eq:Convergence_of_tan} into \eqref{Eq:Sin_convergence}.
\end{proof}

\begin{proof}[\textbf{Proof of Corollary \ref{Corollary: Eigenvalues_convergence}}]
	Denote $\mathbf{x}_s = \mathbf{x}_s^{(L)}$ and $\theta^{(L)} = \theta(\mathbf{U}_s, \mathbf{X}_s^{(L)})$ for short. Let $\mathbf{x}_s =\mathbf{u}+\mathbf{u}^{\bot}$, where $\mathbf{u}$ is the eigenvector corresponding to $\lambda_s$. Then, since $\mathbf{u}=\mathbf{x}_s\cos\phi$ and $\mathbf{u}^{\bot} = \mathbf{x}_s\sin\phi$ for a $\phi\le \theta^{(L)}$, we have
\begin{equation*}
	\begin{split}
		\hat{\lambda}_s^2 &= \mathbf{x}_s^T \mathbf{A}^2 \mathbf{x}_s
		= \mathbf{u}^T \mathbf{A}^2 \mathbf{u}+ \mathbf{u^{\bot}}^T \mathbf{A}^2 \mathbf{u^{\bot}}\\
		&= \lambda_s^2 \mathbf{u}^T \mathbf{u}+ \mathbf{u^{\bot}}^T \mathbf{A}^2 \mathbf{u^{\bot}}
		\le \lambda_s^2 ||\mathbf{u}||^2+ \lambda_1^2 ||\mathbf{u}^{\bot}||^2\\
		&\le \lambda_s^2 \cos^2\theta^{(L)}+ \lambda_1^2 \sin^2\theta^{(L)}
		= \lambda_s^2 (1-\sin^2\theta^{(L)})+ \lambda_1^2 \sin^2\theta^{(L)}\\
		&= \lambda_s^2 + (\lambda_1^2- \lambda_s^2) \sin^2\theta^{(L)}.
  \end{split}
\end{equation*}
Thus,
\begin{equation*}
	\begin{split}
		|\hat{\lambda}_s - \lambda_s| &\le\frac{(\lambda_1^2- \lambda_s^2)} {\hat{\lambda}_s + \lambda_s} \sin^2\theta^{(L)}
		 = O(\frac{ \sigma^2 d\max||\mathbf{X}_l||^2_\infty\log L}{\gamma_s^2\lambda_s^2}).
	\end{split}
\end{equation*}
The corollary follows.
\end{proof}

\if 0
\begin{thm}\label{Theorem: eps_delta_dp_PCA}
	If Algorithm \ref{Alg:PSI} is executed with each $\mathbf{G}^{(l)}$ independently sampled as $\mathbf{G}^{(l)} \sim N(0, \sigma^2)^{d\times k}$ with $\sigma = \frac{d\sqrt{4kL\log(1/\delta)}}{n\epsilon}$, then Algorithm \ref{Alg:PSI} satisfies $(\epsilon, \delta)$-differential privacy.
	
	If Algorithm \ref{Alg:PSI} is executed with each $\mathbf{G}^{(l)}$ independently sampled as $\mathbf{G}^{(l)} \sim Lap(\sigma)^{d\times k}$ with $\sigma = \frac{10d^{3/2}kL}{n\epsilon}$, then Algorithm \ref{Alg:PSI} satisfies $\epsilon$-differential privacy.
\end{thm}
\fi
\begin{proof}[\textbf{Proof of Theorem \ref{Theorem: eps_delta_dp_PCA}}]
	Follows the Lemma 3.6 in \cite{hardt2013robust}.
\end{proof}

\subsection{Experiments Results: Simple Approach to Get Subset}\label{app:exp}
\begin{table}[t!]\centering\small
	\caption{Worst-case error of $\epsilon$-differential privacy (hypercube)}
	\label{Table:worst-case error-eps-dp-naive}
	\setlength{\tabcolsep}{.4em}
	\begin{tabular}{c|c|ccccc|c}
		\hline
		 \multirow{2}*{Dataset} & \multirow{2}*{Error} & \multicolumn{5}{c|}{$\sigma$} & \multirow{2}*{Time(s)}\\\cline{3-7}
		   & & 2 & 4 & 6 & 8 & 10 & \\
		\hline
		\multirow{2}*{CRM} & Abs & 0.001&0.028&0.035&0.031&0.031& \multirow{2}{*}{7.2} \\
		 & Rel & 1.721&0.226&0.101&0.051&0.046 & 		\\
		\hline
		\multirow{2}*{CTG} & Abs & 0.089&0.075&0.050&0.028&0.017 &\multirow{2}{*}{1.8} \\
		 & Rel & 0.796&0.139&0.066&0.033&0.019& 				\\
		\hline
		\multirow{2}*{PAM} & Abs &0.111&0.160&0.097&0.062&0.043&\multirow{2}{*}{9.7} \\
		 & Rel &0.646&0.255&0.121&0.070&0.047& 		\\
		\hline
		\multirow{2}*{PKS} & Abs & 0.071&0.079&0.050&0.027&0.017 &\multirow{2}{*}{3.4} \\
		 & Rel & 0.655&0.154&0.068&0.032&0.019 & \\
		\hline
		\multirow{2}*{WDBC} & Abs & 0.040&0.062&0.029&0.019&0.015 &\multirow{2}{*}{2.7} \\
		 & Rel & 0.309&0.137&0.037&0.022&0.017 & 		\\	
		 \hline
	\end{tabular}
\end{table}

\begin{table}[t!]\centering\small
	\caption{Worst-case error of $(\epsilon, \delta)$-differential privacy (hypercube)}
	\label{Table:worst-case error-eps-delta-dp-naive}
	\setlength{\tabcolsep}{.4em}
	\begin{tabular}{c|c|ccccc|c}
		\hline
		 \multirow{2}*{Dataset} & \multirow{2}*{Error} & \multicolumn{5}{c|}{$\sigma$} & \multirow{2}*{Time(s)}\\\cline{3-7}
		   & & 2 & 4 & 6 & 8 & 10 & \\
		\hline
		\multirow{2}*{CRM} & Abs & 0.001&0.027&0.041&0.034&0.027& \multirow{2}{*}{14.5} \\
		 & Rel & 1.773&0.258&0.093&0.054&0.039 & 		\\
		\hline
		\multirow{2}*{CTG} & Abs & 0.103&0.075&0.042&0.024&0.019 &\multirow{2}{*}{2.6} \\
		 & Rel & 0.884&0.140&0.055&0.028&0.021& 				\\
		\hline
		\multirow{2}*{PAM} & Abs &0.101&0.158&0.104&0.067&0.042&\multirow{2}{*}{15.3} \\
		 & Rel &0.595&0.253&0.128&0.076&0.046& 		\\
		\hline
		\multirow{2}*{PKS} & Abs & 0.099&0.086&0.048&0.027&0.022 &\multirow{2}{*}{3.2} \\
		 & Rel & 0.924&0.165&0.065&0.032&0.025 & \\
		\hline
		\multirow{2}*{WDBC} & Abs & 0.040&0.046&0.040&0.021&0.019 &\multirow{2}{*}{3.3} \\
		 & Rel & 0.340&0.099&0.057&0.026&0.021 & 		\\	
		 \hline
	\end{tabular}
\end{table}

In this section we give the setting of the number of basis function $R$ in our experiment and provide the experiment results with subset $S$ sampled from $N^{d}$ grids uniformly.

Let
\[
R = \begin{cases}
	\tilde{C} n^{\frac{d}{2d+\sigma^2}} & \text{$\epsilon$-differential privacy}, \\
	\tilde{C} n^{\frac{2d}{3d+2\sigma^2}} & \text{$(\epsilon,\delta)$-differential privacy},
\end{cases}
\]
where $\tilde{C}$ is a constant and we chose $\tilde{C} = 0.5$.

All the results in Table \ref{Table:worst-case error-eps-dp-naive} and Table \ref{Table:worst-case error-eps-delta-dp-naive} are the average results on independent experiments over 20 rounds. Compare to Table \ref{Table:worst-case error-eps-dp-psi} and Table \ref{Table:worst-case error-eps-delta-dp-psi}, the worst-case error obtained through PSI reduced significantly.

\clearpage
\newpage
\bibliographystyle{abbrv}
\bibliography{arxiv_DP_Smooth_Synthetic_DB}

\begin{thebibliography}{10}

\bibitem{aggarwal2008general}
C.~Aggarwal and P.~Yu.
\newblock A general survey of privacy preserving data mining models and
  algorithms.
\newblock In {\em Privacy-Preserving Data Mining}, chapter~2, pages 11--52.
  Springer, 2008.

\bibitem{Anstreicher:1999}
K.~M. Anstreicher.
\newblock Linear programming in {$O(\frac{n^3}{\ln n}L)$} operations.
\newblock {\em SIAM J. on Optimization}, 9(4):803--812, Apr. 1999.

\bibitem{barak2007privacy}
B.~Barak, K.~Chaudhuri, C.~Dwork, S.~Kale, F.~McSherry, and K.~Talwar.
\newblock Privacy, accuracy, and consistency too: a holistic solution to
  contingency table release.
\newblock In {\em PODS}, pages 273--282. ACM, 2007.

\bibitem{BLR08}
A.~Blum, K.~Ligett, and A.~Roth.
\newblock A learning theory approach to non-interactive database privacy.
\newblock In {\em STOC}, pages 609--618. ACM, 2008.

\bibitem{chaudhuri2011sample}
K.~Chaudhuri and D.~Hsu.
\newblock Sample complexity bounds for differentially private learning.
\newblock In {\em COLT}, 2011.

\bibitem{chaudhuri2011differentially}
K.~Chaudhuri, C.~Monteleoni, and A.~Sarwate.
\newblock Differentially private empirical risk minimization.
\newblock {\em JMLR}, 12:1069, 2011.

\bibitem{chaudhuri2012near}
K.~Chaudhuri, A.~Sarwate, and K.~Sinha.
\newblock Near-optimal differentially private principal components.
\newblock In {\em NIPS}, pages 998--1006, 2012.

\bibitem{cheraghchi2012submodular}
M.~Cheraghchi, A.~Klivans, P.~Kothari, and H.~Lee.
\newblock Submodular functions are noise stable.
\newblock In {\em SODA}, pages 1586--1592. SIAM, 2012.

\bibitem{choromanski2013}
K.~Choromanski, G.~Jagannathan, A.~Choromanska, and C.~Monteleoni.
\newblock Differentially-private learning of low dimensional manifolds.
\newblock In {\em ALT}, 2013.

\bibitem{duchi2012privacy}
J.~Duchi, M.~Jordan, and M.~Wainwright.
\newblock Privacy aware learning.
\newblock In {\em NIPS}, 2012.

\bibitem{DMNS06}
C.~Dwork, F.~McSherry, K.~Nissim, and A.~Smith.
\newblock Calibrating noise to sensitivity in private data analysis.
\newblock {\em TCC}, pages 265--284, 2006.

\bibitem{dwork2009complexity}
C.~Dwork, M.~Naor, O.~Reingold, G.~Rothblum, and S.~Vadhan.
\newblock On the complexity of differentially private data release: efficient
  algorithms and hardness results.
\newblock In {\em STOC}, pages 381--390. ACM, 2009.

\bibitem{dwork2013efficient}
C.~Dwork, A.~Nikolov, and K.~Talwar.
\newblock Efficient algorithms for privately releasing marginals via convex
  relaxations.
\newblock {\em arXiv preprint arXiv:1308.1385}, 2013.

\bibitem{dwork2010boosting}
C.~Dwork, G.~Rothblum, and S.~Vadhan.
\newblock Boosting and differential privacy.
\newblock In {\em FOCS}, pages 51--60. IEEE, 2010.

\bibitem{gupta2011privately}
A.~Gupta, M.~Hardt, A.~Roth, and J.~Ullman.
\newblock Privately releasing conjunctions and the statistical query barrier.
\newblock In {\em STOC}, pages 803--812. ACM, 2011.

\bibitem{hardt2013robust}
M.~Hardt.
\newblock Robust subspace iteration and privacy-preserving spectral analysis.
\newblock {\em arXiv preprint arXiv:1311.2495}, 2013.

\bibitem{hardt2012simple}
M.~Hardt, K.~Ligett, and F.~McSherry.
\newblock A simple and practical algorithm for differentially private data
  release.
\newblock In {\em NIPS}, pages 2348--2356, 2012.

\bibitem{HR10}
M.~Hardt and G.~Rothblum.
\newblock A multiplicative weights mechanism for privacy-preserving data
  analysis.
\newblock In {\em FOCS}, pages 61--70. IEEE Computer Society, 2010.

\bibitem{hardt2012private}
M.~Hardt, G.~N. Rothblum, and R.~A. Servedio.
\newblock Private data release via learning thresholds.
\newblock In {\em SODA}, pages 168--187. SIAM, 2012.

\bibitem{Indritz1961}
J.~Indritz.
\newblock An inequality for {Hermite} polynomials.
\newblock {\em Proceedings of the American Mathematical Society}, 12(6):pp.
  981--983, 1961.

\bibitem{prateek2011differentially}
P.~Jain, P.~Kothari, and A.~Thakurta.
\newblock Differentially private online learning.
\newblock In {\em COLT}, 2012.

\bibitem{kifer2010towards}
D.~Kifer and B.~Lin.
\newblock Towards an axiomatization of statistical privacy and utility.
\newblock In {\em PODS}, pages 147--158. ACM, 2010.

\bibitem{kifer2011no}
D.~Kifer and A.~Machanavajjhala.
\newblock No free lunch in data privacy.
\newblock In {\em KDD}, pages 193--204. ACM, 2011.

\bibitem{lee2012differential}
J.~Lee and C.~Clifton.
\newblock Differential identifiability.
\newblock In {\em KDD}, pages 1041--1049. ACM, 2012.

\bibitem{lei2011differentially}
J.~Lei.
\newblock Differentially private {M}-estimators.
\newblock In {\em NIPS}, 2011.

\bibitem{Monteiro:1989}
R.~D. Monteiro and I.~Adler.
\newblock Interior path following primal-dual algorithms. {Part I}: Linear
  programming.
\newblock {\em Math. Program.}, 44(1):27--41, June 1989.

\bibitem{RR10}
A.~Roth and T.~Roughgarden.
\newblock Interactive privacy via the median mechanism.
\newblock In {\em STOC}, pages 765--774. ACM, 2010.

\bibitem{smola1998connection}
A.~Smola, B.~Sch{\"o}lkopf, and K.~M{\"u}ller.
\newblock The connection between regularization operators and support vector
  kernels.
\newblock {\em Neural Networks}, 11(4):637--649, 1998.

\bibitem{thaler2012faster}
J.~Thaler, J.~Ullman, and S.~Vadhan.
\newblock Faster algorithms for privately releasing marginals.
\newblock In {\em ICALP}, pages 810--821. Springer, 2012.

\bibitem{UV11}
J.~Ullman and S.~Vadhan.
\newblock {P}{C}{P}s and the hardness of generating private synthetic data.
\newblock In {\em TCC}, pages 400--416. Springer, 2011.

\bibitem{van1996weak}
A.~W. Van Der~Vaart and J.~Wellner.
\newblock {\em Weak Convergence and Empirical Processes}.
\newblock Springer, 1996.

\bibitem{wahba1999support}
G.~Wahba.
\newblock Support vector machines, reproducing kernel {H}ilbert spaces and the
  randomized gacv.
\newblock {\em Advances in Kernel Methods-Support Vector Learning}, 6:69--87,
  1999.

\bibitem{wang_efficient_2013}
Z.~Wang, K.~Fan, J.~Zhang, and L.~Wang.
\newblock Efficient algorithm for privately releasing smooth queries.
\newblock In {\em NIPS}, 2013.

\bibitem{wasserman2010statistical}
L.~Wasserman and S.~Zhou.
\newblock A statistical framework for differential privacy.
\newblock {\em Journal of the American Statistical Association},
  105(489):375--389, 2010.

\bibitem{williams2010probabilistic}
O.~Williams and F.~McSherry.
\newblock Probabilistic inference and differential privacy.
\newblock In {\em NIPS}, 2010.

\end{thebibliography}

\end{document}